% interactcadsample.tex
% v1.04 - May 2023

\documentclass[]{interact}

\usepackage{xcolor}

\usepackage{epstopdf}% To incorporate .eps illustrations using PDFLaTeX, etc.

\usepackage{natbib}% Citation support using natbib.sty
\bibpunct[, ]{(}{)}{;}{a}{}{,}% Citation support using natbib.sty
% Bibliography support using natbib.sty

\usepackage{amsfonts}
\usepackage{amsthm}
\usepackage{amsmath}

\DeclareMathOperator*{\argmax}{argmax} % thin space, limits underneath in displays

\DeclareMathOperator{\diag}{diag}
\DeclareMathOperator{\N}{N}

\usepackage{dsfont}
\usepackage{nicefrac}

\usepackage{tabularx}
\usepackage{dcolumn}
\newcolumntype{d}[1]{D{.}{.}{#1}}

\theoremstyle{plain}% Theorem-like structures provided by amsthm.sty
\newtheorem{theorem}{Theorem}[section]

\newtheorem{corollary}[theorem]{Corollary}
\newtheorem{proposition}[theorem]{Proposition}

\theoremstyle{definition}

\theoremstyle{remark}
\newtheorem{remark}{Remark}

\usepackage{pgfplots}
\usepackage{pgfplotstable}
\pgfplotsset{compat=1.7}
\usetikzlibrary{calc}
%\usetikzlibrary{pgfplots.groupplots}
%\usetikzlibrary{matrix}

%https://matplotlib.org/stable/users/prev_whats_new/dflt_style_changes.html#colors-in-default-property-cycle
\definecolor{color1}{HTML}{1f77b4}
\definecolor{color2}{HTML}{ff7f0e}
\definecolor{color3}{HTML}{2ca02c}
\definecolor{color4}{HTML}{d62728}
\definecolor{color5}{HTML}{9467bd}
\definecolor{color6}{HTML}{8c564b}
\definecolor{color7}{HTML}{e377c2}
\definecolor{color8}{HTML}{7f7f7f}
% next colours '#bcbd22', '#17becf'

\pgfplotscreateplotcyclelist{proxylist}{%
{color1,mark=triangle},
{color2,mark=x},
{color3,mark=square},
{color4,mark=o},
{color5,mark=+},
{color6,mark=pentagon},
{color7,mark=star},
{color8,mark=diamond}
}

\newlength{\plotWidth}
\setlength{\plotWidth}{0.5\textwidth}

\newlength{\plotHeight}
\setlength{\plotHeight}{0.3\textheight}

\newlength{\vSpace}
\setlength{\vSpace}{3\baselineskip}

\newlength{\tempvSpace}
\setlength{\tempvSpace}{\vSpace}

\newlength{\hSpace}
\setlength{\hSpace}{1cm}

\tikzset{every node/.prefix style={font=\footnotesize}}

   \pgfplotsset{
      every axis/.prefix style = {
      tick label style = {font=\footnotesize},
      grid = both,
      scaled y ticks = false,
      yticklabel style = {/pgf/number format/fixed}},
      every axis legend/.prefix style = {font=\footnotesize},
      cycle list name = proxylist
    }
    
    \pgfplotsset{twoonpage/.style={height=\plotHeight,width=2\plotWidth}}
    \pgfplotsset{threeonpage/.style={height=\plotHeight,width=2\plotWidth}}
    \pgfplotsset{sixonpage/.style={height=0.75\plotHeight,width=\plotWidth}}
    \pgfplotsset{eightonpage/.style={height=0.75\plotHeight,width=\plotWidth}}
    
\pgfplotsset{RMSEstyle/.style={
      twoonpage,
      xtick = {0, 1, 2, 3, 4, 5, 6, 7, 8, 9, 10, 11, 12},
      xticklabels={21-08, 21-09, 21-10, 21-11, 21-12, 22-01, 22-02, 22-03, 22-04, 22-05, 22-06, 22-07, 22-08},
      xmin = 0,
      xmax = 12,
      every axis plot/.append style={dotted, mark options={solid, thick}},
      }
}

   \pgfplotsset{proxystyle/.style={
      x tick label style={font=\tiny,rotate=90},
      x tick label as interval,
      xtick = {0, 31, 61, 92, 122, 153, 184, 212, 243, 273, 304, 334, 365, 396, 426, 457, 487, 518, 549, 577, 608, 638, 669, 699, 730},
      xticklabels={20-08, 20-09, 20-10, 20-11, 20-12, 21-01, 21-02, 21-03, 21-04, 21-05, 21-06, 21-07,21-08, 21-09, 21-10, 21-11, 21-12, 22-01, 22-02, 22-03, 22-04, 22-05, 22-06, 22-07},
      xmin = 0,
      xmax = 730
   }} 

\raggedbottom

\begin{document}

\articletype{ARTICLE TEMPLATE}

\title{Bitcoin option pricing: A market attention approach}

\author{
\name{\'Alvaro Guinea Juli\'a\textsuperscript{a}\thanks{\'Alvaro Guinea Juli\'a. Email: agjulia@icai.comillas.edu}  and Alet Roux\textsuperscript{b}\thanks{Alet Roux. Email: alet.roux@york.ac.uk}}
\affil{\textsuperscript{a}Comillas Pontifical University ICAI, Madrid, 28015, Spain; \textsuperscript{b}University of York, Heslington, YO10 5DD, United Kingdom.}
}

\maketitle

\begin{abstract}
A model is proposed for Bitcoin prices that takes into account market attention. Market attention, modeled by a mean-reverting Cox-Ingersoll-Ross processes, affects the volatility of Bitcoin returns, with some delay. The model is affine and tractable, with closed formulae for the conditional characteristic functions with respect to both the conventional and a delayed filtration. This leads to semi-closed formulae for European call and put prices.  A maximum likelihood estimation procedure is provided, as well as a method for changing to a risk-neutral measure. The model compares very well against classical and attention-based models when tested on real data.
\end{abstract}

\begin{keywords}
Bitcoin, market attention, Cox-Ingersoll-Ross process, Heston model, option pricing
\end{keywords}

\section{Introduction}

The term ``market attention'' refers to the attention that investors or media pay to a particular stock or asset. It is also referred to investor attention or market interest. Market attention can be measured in different ways. Traditionally, trading volume, news coverage, or extreme past returns have been used as proxies for market attention and have been shown to affect stock prices \citep[cf.][]{hou2009tale,chen2020investor}. With the increase in the use of the Internet in the last two decades, new proxies for market attention have appeared, such as the number of Google searches or the Twitter volume. \cite{da2011search} show that the number of Google searches affects the prices of stocks in the Russell 3000 index. It has also been shown that Twitter sentiment affects the prices of the stocks in the Dow Jones Industrial Average index \citep{ranco2015effects}. Additionally, Twitter volume also affects option prices \citep{wei2016twitter}.

We are interested in building a model for pricing Bitcoin options that incorporates market attention, with particular emphasis on tractable (ideally, closed-form) option prices for vanilla options. Bitcoin is a highly liquid cryptocurrency first introduced by \cite{nakamoto2008bitcoin}, and in recent years, with the increase in the value of Bitcoin, new exchanges have emerged that offer Bitcoin options, for example, Deribit (\texttt{www.deribit.com}), LedgerX (\texttt{https://derivs.ftx.us}) and Bit (\texttt{www.bit.com}).

There is a rapidly increasing body of literature studying the relationship between cryptocurrencies and market attention. For example, \cite{smales2022investor} built a panel regression model to show that the number of Google searches generates greater returns and more volatility for the most important cryptocurrencies. \cite{eom2019bitcoin} constructed an autoregressive model for volatility that includes the number of Google searches and showed that the number of Google searches improves the predictability of Bitcoin volatility. Also, \cite{philippas2019media} constructed a dual process diffusion model to express that media attention (Google searches and Twitter volume) partially affects the prices of Bitcoin. \cite{aalborg2019can} constructed several linear regression models to show that trading volume affects Bitcoin volatility. \cite{al2021cryptocurrency} used a VAR model to prove that an increase in investor attention, the Twitter volume of the last five days, produces an increase in Bitcoin volatility. \cite{kristoufek2015main} used wavelet analysis to indicate that an increase in interest (number of Google searches and Wikipedia views) produces an increase in prices during bubble formation and a decrease in prices during bubble burst. Evidence of this phenomenon can also be found in the work of \cite{zhang2021time}, where the authors show, using the Granger causality test, that an increase in the number of Google searches contributes to bubble formation in Bitcoin prices. \cite{ciaian2016economics} used time series models to show that the number of views on Wikipedia also has an impact on the price of Bitcoin in the short term. Finally, \cite{aslanidis2021link} built a Google trends cryptocurrency index and showed that there is a short term bidirectional relation between the price of Bitcoin and the Google trends index.

It is clear from the above that market attention seems to affect Bitcoin volatility. There is a further strand in the literature that uses GARCH type models to analyze the relation between market attention and the volatility of Bitcoin. For example, \cite{lopez2021bitcoin} constructed a sentiment index for Bitcoin using the software Stanford Core NLP and the web page StockTwits (\texttt{https://stocktwits.com}). The authors then built GARCH type models and showed that the sentiment index affects the volatility of Bitcoin returns. In addition, using a GARCH-MIDAS model, \cite{liang2022predictor} showed that the number of Google searches has an impact on predicting the volatility of Bitcoin. \cite{figa2019does} used ARMA-GARCH models to show that trading volume affects the mean and the volatility of Bitcoin returns, while the number of Google searches only affects the volatility of Bitcoin returns. Finally,  \cite{figa2019disentangling} used a VAR-EGARCH model to show that neither the trading volume nor the number of Google searches affect the mean of Bitcoin returns. However, both attention proxies affect the volatility of Bitcoin returns. 

We observe that the results in the literature seem to disagree on some aspects. Some studies \citep{smales2022investor,figa2019does,kristoufek2015main,ciaian2016economics} claim that market attention influences the mean and volatility of Bitcoin returns, while others \citep{figa2019disentangling,aalborg2019can,al2021cryptocurrency,suardi2022predictive} conclude that market attention affects only the volatility of Bitcoin returns. In this paper, we take the latter approach by assuming that market attention affects the volatility of Bitcoin returns but not the mean.

In most of the econometric models mentioned above, the market attention is included in those models with a delay. That is, market attention does not act instantaneously with respect to the response variable (usually the return of Bitcoin or the volatility of Bitcoin returns). We take this into account by assuming that market attention acts on Bitcoin returns with a constant delay---the value of which can be estimated from the data.

Bitcoin option pricing has received some attention in the research literature. Due to the high volatility of Bitcoin prices, several models include jumps in the price structure \citep{shirvani2021bitcoin,chen2021detecting,olivares2020pricing}. \citet{cao2021valuation} propose an equilibrium model to price Bitcoin options. \citet{Cretarola_FigaTalamanca_Patacca2020} constructed a stochastic volatility model that incorporates market attention into the option pricing model. These models are all continuous in time, but discrete time models have also been used for pricing Bitcoin options \citep{venter2020price,siu2021bitcoin}. Bitcoin options are priced in US dollars but also in Bitcoin; \citet{alexander2021inverse} proposed a method to price options in number of Bitcoins (as offered on Deribit, for example) instead of a reference currency (usually US dollars).

The model proposed in this paper is related to the attention-based model of \citet{Cretarola_FigaTalamanca_Patacca2020}, who proposed a two-dimensional model similar to ours, with the main difference being that market interest is modeled as a geometric Brownian motion. In this paper, we model the interest as a Cox-Ingersoll-Ross process, which appears to fit well with the mean-reverting nature of a wide range of proxies for market attention and allows it to affect the volatility but not the drift. This means that our model is affine, hence analytically tractable.

The structure of the paper is as follows. In Section \ref{sec:ME}, we specify the model and establish a number of basic properties, such as conditional independence of the log returns and a Markov property. We propose a maximum likelihood estimation procedure in Section \ref{sec:Inference}. In Section \ref{OptionPricing}, we introduce the change-of-measure, derive a formula for the conditional characteristic function, and discuss the practicalities of option pricing. The methods are applied to real data in Section \ref{sec:RealData}. The paper concludes with Section \ref{sec:conclusion}, which provides an outlook on future research

\section{Model specification}\label{sec:ME}

In this section, we specify the model and establish a number of basic properties that are useful for pricing and parameter estimation. Throughout, we assume a finite (possibly large) model horizon $H>0$; we assume $t\in[0,H]$ unless specified otherwise. The model is based on a probability space $(\Omega,\mathcal{F}, \mathds{P})$ containing two independent Brownian motions $W^P=(W^P_t)$ and $W^I=(W^I_t)$. Let $\mathbb{F}^I=(\mathcal{F}^I_t)$ and $\mathbb{F}^P=(\mathcal{F}^P_t)$ be the filtrations generated by $W^I$ and $W^P$, respectively, assumed to satisfy the usual conditions. Define the \emph{general filtration} $\mathbb{F}=(\mathcal{F}_t)$ by 
\begin{equation} \label{eq:filtration}
    \mathcal{F}_t=\mathcal{F}^P_t\vee\mathcal{F}^I_t \text{ for all } t,
\end{equation}
where $\vee$ denotes the join of $\sigma$-fields.

The level of market attention or interest is modeled by means of a positive-valued \emph{interest process} $I=(I_t)_{t\in[-L,H]}$, where $L>0$. On the interval $[-L,0]$, the interest process is specified by means of a deterministic positive-valued function representing its most recent past values, and when $t\in(0,H]$, it follows Cox-Ingersoll-Ross dynamics \citep{Cox_Ingersoll_Ross1985}, in other words,
\begin{align}
 I_t&=\phi_I(t)\text{ for all } t \in [-L,0], \label{eq:I:past} \\
 dI_t &= a(b - I_t)dt + \sigma_I \sqrt{I_t} dW^I_t \text{ for all } t > 0, \label{eq:I:CIR}
\end{align}
where $b \in \mathds{R}$, $a,\sigma_I>0$ and $\phi_I : [-L,0] \to (0,\infty)$ is a continuous deterministic function. The stochastic differential equation \eqref{eq:I:CIR} has a strong solution for any value of $I_0=\phi_I(0)>0$. We furthermore impose the condition
\begin{equation}\label{Conds}
    \frac{2ab}{\sigma_I^2} \geq1,
\end{equation}
which ensures that $I_t > 0$ for all $t\geq0$ almost surely \citep[Theorem~2.27]
{gulisashvili2012analytically}.

The Bitcoin \emph{price process} is modelled by specifying the dynamics of the log price $X=(X_t)$ as
\begin{align}\label{SDEG}
    X_t &= x + \mu t + \sigma_P\int_0^t\sqrt{I_{s-\tau}}dW^P_s \text{ for all } t,
\end{align}
where $x,\mu \in \mathds{R}$, $\sigma_P>0$ and $\tau\in[0,\min\{L,H\}]$ is the fixed lag parameter. Observe that the interest does not appear in the drift of the log price; however, its square root is proportional to its volatility. The price process $P=(P_t)$ is defined as
\[
 P_t = e^{X_t} = \exp\left\{x + \mu t + \sigma_P\int_0^t\sqrt{I_{s-\tau}}dW^P_s\right\} \text{ for all } t,
\]
and by the It\^o formula satisfies the stochastic differential equation and initial condition
\begin{equation}
    dP_t = \left(\mu + \tfrac{1}{2}\sigma_P^2I_{t-\tau} \right)P_tdt + \sigma_P \sqrt{I_{t-\tau}} P_tdW^P_t  \label{Price1}
\end{equation}
and initial condition $P_0=e^x \in (0,\infty)$.

Finally, we assume a fixed risk-free interest rate of $r\geq0$ in the reference currency.

\subsection{Properties of the log returns}\label{sec:log-returns}

The independence between the interest process $I$ and the Brownian motion~$W^P$ that drives the price process is key to the tractability of the log returns. The log return over any time period $[s,t]$ where $0\le s<t<H$ is defined as
\begin{equation}
 R_{s,t} = X_t-X_s = \mu(t-s) + \int_s^t \sigma_P \sqrt{I_{u-\tau}}dW^P_u, \label{eq:def:returns}
\end{equation}
with the final equality due to \eqref{SDEG}. The following remark follows from standard results \cite[cf.][Proposition 1.4.2]{GuineaJulia2022}.

\begin{remark}\label{RemarkNormal} 
Let $W=(W_t)$ be a Brownian motion with natural filtration $\mathbb{F}^W=(\mathcal{F}^W_u)$. Let $Z=(Z_t)$ be a left-continuous process independent of $W$ such that
$
 \int_0^T Z^2_udu<\infty
$ almost surely for some $T\in(0,H]$, and let $\mathbb{G}=(\mathcal{G}_t)$ be any filtration independent of $W$ with respect to which $Z$ is adapted. Then
\begin{align*}
\left.\int_s^T Z_udW_u\middle|\mathcal{G}_T \right. & \sim \N\left(0,\int_s^T Z_u^2du\right), \\
\left.\int_s^T Z_udW_u\middle|\mathcal{G}_T \vee \mathcal{F}^W_s\right. & \sim  \N\left(0,\int_s^T Z_u^2du\right)
\end{align*}
for all $s\in[0,T)$.
\end{remark}

The conditional normality of the integrals in the remark above leads to a number of important properties of the log returns. They have a conditional normal distribution, and moreover the following result shows that, conditional on the interest process, log returns over non-overlapping time periods are independent. The conditional independence will be very useful for estimating the parameters of the price process, particularly since the interest itself is observed. 

\begin{proposition}\label{prop:cond-independence}
For any $n\in\mathbb{N}$ and $0\le t_0 < \cdots < t_n \le H$, the log returns $R_{t_0,t_1},\ldots, R_{t_{n-1},t_n}$ are independent given $\mathcal{F}^I_H$, and moreover 
\[
 \left.\left(R_{t_0,t_1}, \ldots, R_{t_{n-1},t_n}\right)\middle|\mathcal{F}^I_H\right. 
 \sim N\left(\mu(t_1-t_0, \ldots, t_n-t_{n-1}\right), \sigma_P^2\diag\left(J_1,\ldots,J_n)\right),
\]
where \[
  J_j = \int_{t_{j-1}}^{t_j} I_{u-\tau} du \text{ for all } j=1,\ldots,n.
 \]
\end{proposition}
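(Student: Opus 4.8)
The plan is to reduce the joint statement to the one-dimensional conditional normality supplied by Remark \ref{RemarkNormal}, via the (conditional) Cram\'er--Wold device. Concretely, I would fix $\theta=(\theta_1,\dots,\theta_n)\in\mathds{R}^n$ and compute the joint $\mathcal{F}^I_H$-conditional characteristic function $\theta\mapsto \mathds{E}\bigl[\exp\bigl(i\sum_{j=1}^n \theta_j R_{t_{j-1},t_j}\bigr)\,\big|\,\mathcal{F}^I_H\bigr]$. If this equals the characteristic function of the claimed multivariate normal law evaluated at $\theta$, then both the joint normality with the stated mean and diagonal covariance and the conditional independence follow at once, the latter because a diagonal covariance makes the characteristic function factorise across coordinates.

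First I would write, using \eqref{eq:def:returns},
\[
 \sum_{j=1}^n \theta_j R_{t_{j-1},t_j} = \mu\sum_{j=1}^n \theta_j (t_j-t_{j-1}) + \sigma_P\int_{t_0}^{t_n} Z^\theta_u\,dW^P_u,
 \qquad Z^\theta_u = \Bigl(\sum_{j=1}^n \theta_j \mathds{1}_{(t_{j-1},t_j]}(u)\Bigr)\sqrt{I_{u-\tau}}.
\]
The drift term is deterministic given $\mathcal{F}^I_H$, so the task is to identify the conditional law of the single integral $\int_{t_0}^{t_n} Z^\theta_u\,dW^P_u$. To invoke Remark \ref{RemarkNormal} with $W=W^P$ and $\mathbb{G}=\mathbb{F}^I$, I would verify its hypotheses: the step function $\sum_j \theta_j \mathds{1}_{(t_{j-1},t_j]}$ is left-continuous and $u\mapsto\sqrt{I_{u-\tau}}$ is continuous (hence $Z^\theta$ is left-continuous); $Z^\theta$ is $\mathbb{F}^I$-adapted because $I$ is $\mathbb{F}^I$-adapted and the backward shift by $\tau$ together with the deterministic values on $[-L,0]$ preserves adaptedness; $\mathbb{F}^I$ is independent of $W^P$ since $W^I$ and $W^P$ are independent; and $\int_{t_0}^{t_n}(Z^\theta_u)^2\,du<\infty$ almost surely because $I$ is continuous, hence bounded, on the compact interval $[-L,H]$.

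Extending the integrand by zero on $(t_n,H]$ (which changes nothing, as $Z^\theta$ already vanishes there) lets me take $T=H$ in Remark \ref{RemarkNormal}, giving
\[
 \left.\int_{t_0}^{t_n} Z^\theta_u\,dW^P_u\,\middle|\,\mathcal{F}^I_H\right. \sim \N\Bigl(0,\int_{t_0}^{t_n}(Z^\theta_u)^2\,du\Bigr).
\]
Because the intervals $(t_{j-1},t_j]$ are disjoint, the cross terms in $(Z^\theta_u)^2$ vanish and $\int_{t_0}^{t_n}(Z^\theta_u)^2\,du = \sum_{j=1}^n \theta_j^2 J_j$. Adding back the drift, the $\mathcal{F}^I_H$-conditional characteristic function is
\[
 \exp\Bigl(i\mu\sum_{j=1}^n \theta_j(t_j-t_{j-1}) - \tfrac12\sigma_P^2\sum_{j=1}^n \theta_j^2 J_j\Bigr) = \prod_{j=1}^n \exp\Bigl(i\theta_j\mu(t_j-t_{j-1}) - \tfrac12\sigma_P^2\theta_j^2 J_j\Bigr),
\]
which is exactly the characteristic function, evaluated at $\theta$, of the product of the $\N\bigl(\mu(t_j-t_{j-1}),\sigma_P^2 J_j\bigr)$ laws. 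The factorisation delivers the claimed joint conditional law together with conditional independence.

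The main obstacle I anticipate is not the computation but the careful verification that the hypotheses of Remark \ref{RemarkNormal} genuinely hold for the time-shifted, piecewise-scaled integrand $Z^\theta$---in particular that it is adapted to a filtration independent of $W^P$ despite the lag $\tau$ and the deterministic history on $[-L,0]$, and that its square is almost surely integrable. Once these measurability and integrability points are settled, the remainder is the routine characteristic-function bookkeeping above.
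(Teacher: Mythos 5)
Your argument is correct, but it takes a genuinely different route from the paper's. You fold the whole linear combination $\sum_j\theta_jR_{t_{j-1},t_j}$ into a single stochastic integral against the piecewise-scaled integrand $Z^\theta$ and apply the first display of Remark \ref{RemarkNormal} once, with $\mathbb{G}=\mathbb{F}^I$; the disjointness of the intervals kills the cross terms and the joint conditional characteristic function factorises immediately. The paper instead works one increment at a time: it uses \emph{both} displays of Remark \ref{RemarkNormal} to show that $R_{s,t}$ has the same conditional law given $\mathcal{F}^I_H$ and given $\mathcal{F}^P_s\vee\mathcal{F}^I_H$, then peels off the increments by the tower property and an induction, obtaining the product of one-step conditional characteristic functions. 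Your version is shorter and avoids the induction, at the cost of having to verify the hypotheses of Remark \ref{RemarkNormal} for the composite integrand $Z^\theta$ (which you do: left-continuity of the step function times $\sqrt{I_{\cdot-\tau}}$, $\mathbb{F}^I$-adaptedness despite the lag, and a.s.\ square-integrability from path continuity of $I$). The paper's version isolates the one-step identity \eqref{eq:prop:cond-independence}, which it reuses in Corollary \ref{corr:multinormal}. One small point: your final step, deducing conditional independence from the factorisation of the conditional characteristic function, is exactly the criterion the paper invokes from \citet[Theorem~4.1]{yuan2016some}; you should cite that (or the conditional analogue of ``jointly Gaussian and uncorrelated implies independent'') rather than treat it as automatic, since factorisation of conditional characteristic functions being equivalent to conditional independence is not entirely trivial. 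With that reference added, your proof is complete.
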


\begin{proof} 
By Remark~\ref{RemarkNormal} and the independence between $\mathcal{F}^I_H$ and $W^P$, the return $R_{s,t}$ for any $0\le s < t\le H+\tau$ has the same conditional distribution given $\mathcal{F}^I_H$ and $\mathcal{F}^P_s \vee \mathcal{F}^I_H$, namely $\N\left(\mu(t-s),\sigma_P^2\int_s^t I_{u-\tau} du\right)$. This means that
\begin{align} \label{eq:prop:cond-independence}
E\left[e^{i\lambda R_{s,t}}\middle|\mathcal{F}^I_H \right] &= E\left[e^{i\lambda R_{s,t}}\middle|\mathcal{F}^P_s \vee \mathcal{F}^I_H \right] = e^{i\lambda\mu(t-s) - \frac{1}{2}\lambda^2\sigma_P^2\int_s^t I_{u-\tau} du}
\end{align}
for all $\lambda\in\mathbb{R}$. Observe from the tower property of conditional expectation and \eqref{eq:prop:cond-independence} that, for any $\lambda_1,\lambda_2\in\mathbb{R}$,
\begin{align*}
 E\left[e^{i\left(\lambda_1R_{t_0,t_1}+\lambda_2R_{t_1,t_2}\right)}\middle|\mathcal{F}^I_H \right]
 &= E\left[e^{i\lambda_1R_{t_0,t_1}} E\left[e^{i\lambda_2 R_{t_1,t_2}}\middle|\mathcal{F}^P_{t_1} \vee \mathcal{F}^I_H \right] \middle|\mathcal{F}^I_H \right] \\
 &= \prod_{j=1}^2E\left[e^{i\lambda_j R_{t_{j-1},t_j}}\middle|\mathcal{F}^I_H \right] \\
 &= \prod_{j=1}^2e^{i\lambda_j\mu(t_j-t_{j-1}) - \frac{1}{2}\lambda_j^2\sigma_P^2J_j}.
\end{align*}
It is then straightforward to prove by induction that
\begin{align*}
  E\left[e^{i\sum_{j=1}^n\lambda_j R_{t_{j-1},t_j}}\middle|\mathcal{F}^I_H \right] 
  &= \prod_{j=1}^n E\left[e^{i\lambda_j R_{t_{j-1},t_j}}\middle|\mathcal{F}^I_H \right] \\
  &=
  \prod_{j=1}^ne^{i\lambda_j\mu(t_j-t_{j-1}) - \frac{1}{2}\lambda_j^2\sigma_P^2J_j}
\end{align*}
for all $\lambda_1,\ldots,\lambda_n\in\mathbb{R}$. The first equality is necessary and sufficient for the independence of $R_{t_0,t_1},\ldots, R_{t_{n-1},t_n}$ \citep[Theorem~4.1]{yuan2016some}. The second equality gives the characteristic function of the claimed conditional distribution.
\end{proof}

The following result will be useful in estimating the parameters of the log price from discretely observed data.

\begin{corollary} \label{corr:multinormal}
In the setting of Proposition \ref{prop:cond-independence}, define $J=(J_1,\ldots,J_n)$. Then
\[
 \left.R_{t_0,t_1}, \ldots, R_{t_{n-1},t_n}\middle|J\right. \sim N\left(\mu(t_1-t_0, \ldots, t_n-t_{n-1}\right), \sigma_P^2\diag\left(J_1,\ldots,J_n)\right).
\]
\end{corollary}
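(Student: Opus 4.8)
The plan is to reuse the conditional characteristic function already obtained in the proof of Proposition~\ref{prop:cond-independence} and to exploit the observation that this function depends on the conditioning information only through the vector $J$. The first step is to verify that $J$ is $\mathcal{F}^I_H$-measurable, so that $\sigma(J)\subseteq\mathcal{F}^I_H$. Indeed, after the substitution $v=u-\tau$ each component $J_j=\int_{t_{j-1}}^{t_j}I_{u-\tau}\,du=\int_{t_{j-1}-\tau}^{t_j-\tau}I_v\,dv$ is a pathwise integral of the interest process over a subinterval of $[-\tau,H-\tau]\subseteq[-L,H]$, where the inclusion uses $\tau\le\min\{L,H\}$. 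On $[-\tau,0]$ the integrand is the deterministic function $\phi_I$, and on $(0,H-\tau]$ it is $\mathbb{F}^I$-adapted, so each $J_j$ is $\mathcal{F}^I_H$-measurable.

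Next I would apply the tower property of conditional expectation to the nested $\sigma$-fields $\sigma(J)\subseteq\mathcal{F}^I_H$. For any $\lambda_1,\ldots,\lambda_n\in\mathbb{R}$,
\[
 E\left[e^{i\sum_{j=1}^n\lambda_j R_{t_{j-1},t_j}}\middle|J\right]
 = E\left[E\left[e^{i\sum_{j=1}^n\lambda_j R_{t_{j-1},t_j}}\middle|\mathcal{F}^I_H\right]\middle|J\right].
\]
By Proposition~\ref{prop:cond-independence} the inner conditional expectation equals $\prod_{j=1}^n e^{i\lambda_j\mu(t_j-t_{j-1})-\frac{1}{2}\lambda_j^2\sigma_P^2 J_j}$, which is a function of $J$ alone and hence $\sigma(J)$-measurable. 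The outer conditioning on $J$ therefore acts trivially, so the whole expression collapses to
\[
 E\left[e^{i\sum_{j=1}^n\lambda_j R_{t_{j-1},t_j}}\middle|J\right]
 = \prod_{j=1}^n e^{i\lambda_j\mu(t_j-t_{j-1})-\frac{1}{2}\lambda_j^2\sigma_P^2 J_j}.
\]

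Finally, I would recognise this product, for each fixed realisation of $J$, as the characteristic function of the multivariate normal law $N\left(\mu(t_1-t_0,\ldots,t_n-t_{n-1}),\ \sigma_P^2\diag(J_1,\ldots,J_n)\right)$; since the characteristic function determines the distribution uniquely, the regular conditional distribution of the returns given $J$ is exactly this normal distribution, which is the claim. I do not expect a serious obstacle here: the entire argument hinges on the measurability check $\sigma(J)\subseteq\mathcal{F}^I_H$, since it is precisely this inclusion that makes the tower-property step collapse the inner expectation; once that is in place the conclusion is immediate from Proposition~\ref{prop:cond-independence}.
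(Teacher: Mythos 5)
Your proposal is correct and follows essentially the same route as the paper: condition on $\mathcal{F}^I_H$ via the tower property, invoke Proposition \ref{prop:cond-independence} for the inner expectation, observe that the result is a function of $J$ alone, and identify the resulting characteristic function. The explicit verification that $\sigma(J)\subseteq\mathcal{F}^I_H$ is a welcome addition that the paper leaves implicit, but it does not change the substance of the argument.
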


\begin{proof}
 The tower property of conditional expectation and Proposition \ref{prop:cond-independence} give for any $\lambda_1,\ldots,\lambda_n\in\mathbb{R}$ that
 \begin{align*}
  E\left[e^{i\sum_{j=1}^n\lambda_j R_{t_{j-1},t_j}}\middle|J \right] 
  &= E\left[E\left[e^{i\sum_{j=1}^n\lambda_j R_{t_{j-1},t_j}}\middle|\mathcal{F}^I_H \right]\middle|J \right] \\
  &= E\left[\prod_{j=1}^ne^{i\lambda_j\mu(t_j-t_{j-1}) - \frac{1}{2}\lambda_j^2\sigma_P^2J_j}\middle|J \right] \\
  &= \prod_{j=1}^ne^{i\lambda_j\mu(t_j-t_{j-1}) - \frac{1}{2}\lambda_j^2\sigma_P^2J_j}.
 \end{align*}
 This is the characteristic function of the claimed distribution.
 \end{proof}

\subsection{Delay and the Markov property}\label{sec:Markov}

In this section, we work with the log price process $X$ for convenience; however, the same considerations apply to the price process $P$, being a continuous transformation of $X$. Standard arguments from the definition of the It\^o integral  \cite[cf.][Theorem 1.2.1]{GuineaJulia2022} lead to the following.

\begin{remark}\label{ChangeBM} 
Let $W=(W_t)$ be a Brownian motion with natural filtration $\mathbb{F}^W=(\mathcal{F}^W_t)$, and define the Brownian motion $B=(B_t)_{t\in[0,H-\tau]}$ as 
\begin{equation} \label{eq:def:B}
B_t=W_{t+\tau}-W_\tau \text{ for all }t\in[0,H-\tau].
\end{equation}
Let $Z=(Z_T)$ be a left-continuous $\mathbb{F}^W$-adapted process such that $\int_0^TZ^2_{t-\tau}dt<\infty$ almost surely for some $T\in[0,H]$, and define $Z_t=\phi(t)$ for all $t\in[-\tau,0]$ for some left-continuous function $\phi:[-\tau,0]\rightarrow\mathds{R}$. Then
\[
\int_{\tau}^t Z_{s-\tau} dW_s = \int_0^{t-\tau} Z_s dB_s \text{ for all } t\in(\tau,T].
\]
\end{remark}

The following simple relationship holds between the filtrations generated by the Brownian motions in Remark \ref{ChangeBM}.

\begin{proposition} \label{prop:FP-FB-Fd}
In the setting of Remark \ref{ChangeBM}, let $\mathbb{F}^B=(\mathcal{F}^B_t)_{t\in[0,H-\tau]}$ be the natural filtration of the Brownian motion $B$ defined in \eqref{eq:def:B}. Then
\begin{equation} \label{eq:filtration-equality}
\mathcal{F}^W_t = \mathcal{F}^W_\tau \vee \mathcal{F}^B_{t-\tau} \text{ for all }t\ge\tau.
\end{equation}
\end{proposition}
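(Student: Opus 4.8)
The plan is to establish the two inclusions $\mathcal{F}^W_\tau \vee \mathcal{F}^B_{t-\tau} \subseteq \mathcal{F}^W_t$ and $\mathcal{F}^W_t \subseteq \mathcal{F}^W_\tau \vee \mathcal{F}^B_{t-\tau}$ separately, using the fact that a natural filtration is generated by the increments of the corresponding Brownian motion. The key observation is that $B_s = W_{s+\tau} - W_\tau$ simply reindexes and recentres $W$, so the information in $B$ up to time $t-\tau$ is exactly the information contained in the increments of $W$ over $[\tau, t]$.

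**First I would** prove the easier inclusion $\supseteq$. Since $\mathcal{F}^W_\tau \subseteq \mathcal{F}^W_t$ trivially (as $t \ge \tau$ and $\mathbb{F}^W$ is a filtration), it suffices to show $\mathcal{F}^B_{t-\tau} \subseteq \mathcal{F}^W_t$. For any $s \in [0, t-\tau]$, the random variable $B_s = W_{s+\tau} - W_\tau$ is $\mathcal{F}^W_{s+\tau}$-measurable, and $s + \tau \le t$, so $B_s$ is $\mathcal{F}^W_t$-measurable. Hence $\mathcal{F}^B_{t-\tau} = \sigma(B_s : s \le t-\tau) \subseteq \mathcal{F}^W_t$, and taking the join with $\mathcal{F}^W_\tau$ gives the inclusion.

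**Next I would** prove the reverse inclusion $\mathcal{F}^W_t \subseteq \mathcal{F}^W_\tau \vee \mathcal{F}^B_{t-\tau}$. It is enough to check that $W_u$ is measurable with respect to the right-hand side for every $u \in [0,t]$. If $u \le \tau$, then $W_u$ is $\mathcal{F}^W_\tau$-measurable and we are done. If $u \in (\tau, t]$, write $W_u = (W_u - W_\tau) + W_\tau = B_{u-\tau} + W_\tau$; here $B_{u-\tau}$ is $\mathcal{F}^B_{t-\tau}$-measurable since $u - \tau \le t - \tau$, and $W_\tau$ is $\mathcal{F}^W_\tau$-measurable. Thus $W_u$ is measurable with respect to $\mathcal{F}^W_\tau \vee \mathcal{F}^B_{t-\tau}$, which yields the inclusion.

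**The main subtlety** — rather than a genuine obstacle — is the bookkeeping around the usual conditions: the filtrations are assumed augmented and right-continuous, so one should confirm that the generating-increments argument is compatible with the completion and that the equality of $\sigma$-fields is preserved under augmentation. This is standard, since augmenting both sides by the same $\mathds{P}$-null sets does not disturb the equality, but it is worth a sentence to be rigorous. Combining the two inclusions then gives the claimed identity \eqref{eq:filtration-equality}.
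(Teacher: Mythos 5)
Your proof is correct and follows essentially the same route as the paper: both directions are obtained from the identities $B_s = W_{s+\tau}-W_\tau$ and $W_u = W_\tau + B_{u-\tau}$. In fact your version is slightly more careful than the paper's, since you check measurability of all generators $W_u$ for $u\in[0,t]$ (and note the compatibility with augmentation), whereas the paper's argument only records the decomposition at the terminal time $t$.
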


\begin{proof}
 On the one hand, it follows from \eqref{eq:def:B} and the properties of the filtration $\mathbb{F}^W$ that $B^W_{t-\tau}$ is $\mathcal{F}^W_t$-measurable and moreover that
$
\mathcal{F}^B_{t-\tau} \vee \mathcal{F}^W_\tau \subseteq \mathcal{F}^W_t.
$
On the other hand, \eqref{eq:def:B} gives that $W^W_t = B^W_{t-\tau}+W^W_\tau$, with the terms on the right hand side being independent. Thus
$
\mathcal{F}^W_t \subseteq \mathcal{F}^B_{t-\tau} \vee \mathcal{F}^W_\tau,
$
which concludes the proof.
\end{proof}

The log price process $X$ and the interest process $I$ are adapted to the general filtration $\mathbb{F}$ defined in \eqref{eq:filtration}. When $\tau>0$, then the joint process $(X,I)$ is generally not a Markov process with respect to this filtration; however, taking the delay into account when defining the filtration does lead to a Markov process. To this end, define the \emph{delayed filtration} $\mathbb{F}^{\mathrm{d}} = (\mathcal{F}^{\mathrm{d}}_t)$ as 
\begin{equation} \label{eq:delayed-filtration}
    \mathcal{F}^{\mathrm{d}}_t =
     \begin{cases}
                  \mathcal{F}^P_t  &\text{if } t < \tau,\\
                  \mathcal{F}^P_t  \vee \mathcal{F}^I_{t-\tau} &\text{if } t \ge \tau. \\
         \end{cases}
\end{equation}
Notice that $\mathcal{F}^{\mathrm{d}}_t \subseteq \mathcal{F}_t\subseteq \mathcal{F}$ for all $t$, and that the processes $X$ and $I_{\cdot-\tau}=(I_{t-\tau})$ are adapted with respect to $\mathbb{F}^{\mathrm{d}}$.

We have the following result.

\begin{proposition}\label{Markov Property}
The two-dimensional process $(X,I_{\cdot-\tau})$ is a Markov process with respect to the filtration $\mathbb{F}^{\mathrm{d}}$.
\end{proposition}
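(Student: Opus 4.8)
The plan is to show that for every $0\le t<s\le H$ the conditional law of $(X_s,I_{s-\tau})$ given $\mathcal{F}^{\mathrm d}_t$ depends only on $(X_t,I_{t-\tau})$; since $\sigma(X_t,I_{t-\tau})\subseteq\mathcal{F}^{\mathrm d}_t$, the tower property then yields the Markov identity $E[f(X_s,I_{s-\tau})\mid\mathcal{F}^{\mathrm d}_t]=E[f(X_s,I_{s-\tau})\mid X_t,I_{t-\tau}]$ for all bounded measurable $f$. Because $\mathbb{R}^2$ is Polish a regular conditional distribution exists, so by a monotone-class argument it suffices to verify, for all $\lambda,\eta\in\mathbb{R}$, that the conditional characteristic function $E[e^{i\lambda X_s+i\eta I_{s-\tau}}\mid\mathcal{F}^{\mathrm d}_t]$ is a measurable function of $(X_t,I_{t-\tau})$. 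I would treat the main case $\tau\le t<s$ in detail, so that $\mathcal{F}^{\mathrm d}_t=\mathcal{F}^P_t\vee\mathcal{F}^I_{t-\tau}$ and $I_{t-\tau}$ is genuinely random, and note that the cases $t<\tau$ are easier because there $I_{t-\tau}=\phi_I(t-\tau)$ is deterministic. Writing $X_s=X_t+\mu(s-t)+\sigma_P\int_t^s\sqrt{I_{u-\tau}}\,dW^P_u$ and using that $X_t$ is $\mathcal{F}^{\mathrm d}_t$-measurable, I factor out $e^{i\lambda X_s}=e^{i\lambda X_t}e^{i\lambda\mu(s-t)}e^{i\lambda\sigma_P\int_t^s\sqrt{I_{u-\tau}}\,dW^P_u}$, leaving only the conditional expectation of $e^{i\lambda\sigma_P\int_t^s\sqrt{I_{u-\tau}}\,dW^P_u+i\eta I_{s-\tau}}$ to identify.

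Next I would integrate out the Brownian motion $W^P$. Applying Remark~\ref{RemarkNormal} with $W=W^P$, $Z_u=\sqrt{I_{u-\tau}}$ (continuous, square-integrable on compacts, and independent of $W^P$) and $\mathcal{G}_u=\mathcal{F}^I_{u-\tau}$, conditioning on $\mathcal{F}^P_t\vee\mathcal{F}^I_{s-\tau}$ gives $\int_t^s\sqrt{I_{u-\tau}}\,dW^P_u\sim\N(0,\int_t^s I_{u-\tau}\,du)$, whence
\[
E\!\left[e^{i\lambda\sigma_P\int_t^s\sqrt{I_{u-\tau}}\,dW^P_u+i\eta I_{s-\tau}}\,\middle|\,\mathcal{F}^P_t\vee\mathcal{F}^I_{s-\tau}\right]=e^{i\eta I_{s-\tau}-\frac12\lambda^2\sigma_P^2\int_t^s I_{u-\tau}\,du}=:g,
\]
using that $I_{s-\tau}$ and $\int_t^s I_{u-\tau}\,du$ are $\mathcal{F}^I_{s-\tau}$-measurable. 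Since $\mathcal{F}^{\mathrm d}_t\subseteq\mathcal{F}^P_t\vee\mathcal{F}^I_{s-\tau}$ in this case, the tower property reduces the problem to showing that $E[g\mid\mathcal{F}^{\mathrm d}_t]$ is a function of $I_{t-\tau}$: the conditional characteristic function will then equal $e^{i\lambda X_t+i\lambda\mu(s-t)}E[g\mid\mathcal{F}^{\mathrm d}_t]$, a function of $(X_t,I_{t-\tau})$ as required.

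The crux is this last reduction, where conditioning on $\mathcal{F}^{\mathrm d}_t=\mathcal{F}^P_t\vee\mathcal{F}^I_{t-\tau}$ must be replaced by conditioning on $I_{t-\tau}$, and I would carry it out in two steps. First, $g$ is a functional of $(I_v)_{v\in[t-\tau,s-\tau]}$ and hence $\sigma(W^I)$-measurable; since $W^P$ and $W^I$ are independent, a $\pi$-system argument on generators $A\cap B$ with $A\in\mathcal{F}^P_t$ and $B\in\mathcal{F}^I_{t-\tau}$ shows $E[g\mid\mathcal{F}^P_t\vee\mathcal{F}^I_{t-\tau}]=E[g\mid\mathcal{F}^I_{t-\tau}]$, so the $W^P$-information drops out. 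Second, because $I$ is a time-homogeneous Markov process with respect to $\mathbb{F}^I$ and $g$ depends only on the path of $I$ on $[t-\tau,s-\tau]$, the Markov property gives $E[g\mid\mathcal{F}^I_{t-\tau}]=E[g\mid I_{t-\tau}]$, a measurable function of $I_{t-\tau}$. Assembling the three displays completes the main case. I expect this conditional-independence step to be the main obstacle: the delay makes the relevant state $I_{t-\tau}$ rather than $I_t$, and one must carefully separate the three independent sources of randomness---$\mathcal{F}^P_t$, the past $\mathcal{F}^I_{t-\tau}$, and the future increments of $W^I$ driving $I$ on $[t-\tau,s-\tau]$---before the CIR Markov property can be invoked.

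Finally, for the boundary cases $t<\tau$ one has $\mathcal{F}^{\mathrm d}_t=\mathcal{F}^P_t$ and $I_{t-\tau}=\phi_I(t-\tau)$ deterministic; the same computation shows the conditional characteristic function is a function of $X_t$ alone, which is consistent with $(X_t,I_{t-\tau})$ carrying no additional information, so the Markov identity holds there as well.
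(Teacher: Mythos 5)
Your proof is correct, but it takes a genuinely different route from the paper's. The paper removes the delay by a time shift: it sets $B^P_t=W^P_{t+\tau}-W^P_\tau$ and $Z_t=X_{t+\tau}$, observes that $(Z,I)$ solves a two-dimensional time-homogeneous stochastic differential equation driven by the independent Brownian motions $B^P$ and $W^I$ (an uncorrelated Heston-type system), invokes the standard theorem that strong solutions of such equations are Markov, and then identifies the generated filtration with $\mathcal{F}^{\mathrm{d}}_{t+\tau}$ via Proposition \ref{prop:FP-FB-Fd}, treating $t\le\tau$ separately since there the volatility is deterministic. You instead verify the Markov identity directly by computing the joint conditional characteristic function of $(X_s,I_{s-\tau})$: Remark \ref{RemarkNormal} integrates out $W^P$, a conditional-independence ($\pi$-system) argument strips off $\mathcal{F}^P_t$, and the Markov property of the Cox-Ingersoll-Ross process reduces $\mathcal{F}^I_{t-\tau}$ to $\sigma(I_{t-\tau})$. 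Both arguments are sound. The paper's is shorter and reuses the same time-shift machinery later in the proof of Theorem \ref{th:char-fun-delay}; yours is more self-contained (no appeal to the general Markov property of SDE solutions), makes the separation of the three independent sources of randomness fully explicit, and effectively computes the transition semigroup along the way --- your function $g$ is precisely the Heston-type ingredient that reappears in the characteristic function formulae. The one step worth writing out carefully in your version is the final monotone-class/regular-conditional-distribution argument that upgrades $\sigma(X_t,I_{t-\tau})$-measurability of the conditional characteristic function to the Markov identity for all bounded measurable $f$, but that is standard and you have flagged it.
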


\begin{proof}
The proof relies on the fact that the strong solution of a (possibly multi-dimensional) time-homogeneous stochastic differential equation involving Brownian motion is a Markov process \cite[cf.][Theorem 5.4.20]{Karatzas}. For $t\le\tau$ this follows from \eqref{eq:I:past} and
\begin{align*}
X_t = X_0 +  \mu t +  \sigma_P \int_0^t \sqrt{\phi_I(u-\tau)}dW^P_u \text{ for all }t\in[0,\tau].
\end{align*}
For $t>\tau$, we have
\begin{align*}
X_t 
%&= X_{\tau} + \mu(t-\tau) + \int_\tau^t \sigma_P \sqrt{I_{u-\tau}} dW^P_u \\
&= X_{\tau} + \mu(t-\tau) + \int_0^{t-\tau} \sigma_P \sqrt{I_u} dB^P_u
\end{align*}
by Remark \ref{ChangeBM}, where the Brownian motion $B^P=(B^P_t)_{t\in[0,H-\tau]}$ is defined as
\[B^P_t=W^P_{t+\tau}-W^P_\tau \text{ for all }t\in[0,H-\tau].\]
The independence of $W^P$ and $W^I$ means that $B^P$ and~$W^I$ are independent. Let $\mathbb{F}^B = ( \mathcal{F}_t^B)$ be the filtration generated by $B^P$. Define a new process $Z=(Z_t)_{t\in[0,H-\tau]}$ as
\begin{equation}\label{ProcessZMarkov}
Z_t = X_{t+\tau} = X_{\tau}  + \mu t +  \sigma_P \int_0^t \sqrt{I_u} dB^P_u \text{ for all } t\in[0,H-\tau].
\end{equation}
Observing that $\int_0^t \sqrt{I_u} dB^P_u$ is independent of $Z_0=X_{\tau}$ for all $t\in[0,H-\tau]$, it follows that $(Z,I)$ is a Markov process with respect to the filtration $\mathbb{F}^Z=(\mathcal{F}^{ZI}_t)_{t\in[0,H-\tau]}$ generated by it. For all $t\in[0,H-\tau]$ we have
\begin{align*}
\mathcal{F}^{ZI}_t = \mathcal{F}^P_{\tau} \vee \mathcal{F}_t^B  \vee \mathcal{F}^I_t = \mathcal{F}^P_{t+\tau} \vee \mathcal{F}^I_t = \mathcal{F}^{\mathrm{d}}_{t+\tau}
\end{align*}
by Proposition \ref{prop:FP-FB-Fd}.  Noting that $ (X_t,I_{t-\tau}) = (Z_{t-\tau},I_{t-\tau})$ for all $t>\tau$ completes the proof.
\end{proof}

\section{Parameter estimation}\label{sec:Inference}

In this section, we propose a maximum likelihood method for estimating the parameters of the processes $X$ and $I$ in \eqref{eq:I:past}--\eqref{eq:I:CIR} and \eqref{Price1} based on discrete observations. It is possible to estimate all the parameters of \eqref{eq:I:CIR} and \eqref{Price1} at once, using the maximum likelihood estimation method. As the model is affine, it is possible to compute the joint characteristic function of the random variables $X_t$ and $I_{t-\tau}$, and then recover the density function. The difficulty with such an approach is the need to compute a double integral; for this reason a different approach is taken here.

The function $\phi_I$ in \eqref{eq:I:past} can be approximated by interpolation. We adopt a multi-step approach in estimating the parameters of the system \eqref{eq:I:CIR} and \eqref{Price1}. Section \ref{EST_INT} focuses on the estimation of the parameters of the interest process; in the case of the deterministic function $\phi_I$ in \eqref{eq:I:past} this is done by approximation, and for the parameters in \eqref{eq:I:CIR} we apply a standard maximum likelihood procedure. The conditional likelihood method is used to estimate the parameters in \eqref{Price1}: in Section \ref{CondMLE}, we assume the lag parameter to be fixed and derive conditional maximum likelihood parameters for the drift and volatility of the log price. The estimation of the lag parameter will be transformed into a model selection problem; this is finally presented in Section \ref{tauCL}.

Data is assumed to be observed up to a finite time horizon $H>0$, and there are $N+1\in \mathds{N}$ observations. For simplicity of exposition the observations are assumed to be equispaced, with a time step $\Delta=\frac{H}{N}$ between observations. We also assume that $L=M\Delta$ for some $M \in \mathds{N}$ with $N \gg M$, and that $\tau = k\Delta$ for some $k=0,\ldots,M$---these assumptions are natural in that it will not be possible to estimate the lag parameter $\tau$ to greater precision than the available data.

\subsection{Interest process} \label{EST_INT}

The parameters of the interest process can be estimated independently of all other parameters. Let $y_j$ be the observed value of $I_{j\Delta}$ for all $j=-M,\ldots,N$. 

The observations $y_{-M},\ldots,y_0$ can be used to construct an approximation $\hat{\psi}_I:[-L,0]\rightarrow(0,\infty)$ for the deterministic function $\phi_I$ of \eqref{eq:I:past}. Clearly
\[
 \hat{\psi}_I(j\Delta) = y_j \text{ for all } j=-M,\ldots, 0,
\]
and the values of $\hat{\psi}_I$ at intermediate values can be determined using any suitable interpolation method. The numerical procedure in Section \ref{sec:RealData} uses linear interpolation, in other words,
\begin{multline}\label{PhiIFunct}
\hat{\psi}_I(t)  =  y_j + (t- j\Delta)\frac{y_{j+1}-y_j}{(j+1)\Delta-j\Delta}\text{ for all } t \in [j\Delta,(j+1)\Delta],\\ j=-M,\ldots, -1.  
\end{multline}

The observations $y_0,\ldots,y_M$ can be used to estimate the parameters appearing in \eqref{eq:I:CIR}. The maximum likelihood procedure described below is well established, with computer implementations by \citet{Kladivko2007} and \citet[Section~3.1.3]{Iacus2008}, among others.

Since the stochastic process $I$ has the Markov property \cite[Theorem 5.4.20]{Karatzas} and $I_0=\phi_I(0)$, the likelihood function is
\begin{align}\label{CH2:CIR_LIKELIHOOD}
L^I(a,b,\sigma_I) 
&=  f^I_{1:N}(y_1,\ldots,y_N|y_0,a,b,\sigma_I) \nonumber \\
&= \prod_{j=0}^{N-1} f^I_{j+1|j}(y_{j+1}|y_j,a,b,\sigma_I),
\end{align}
where $f^I_{1:N}$ is the joint density function of $I_{\Delta},\ldots,I_{N\Delta}$ given $I_0$ and $f^I_{j+1|j}$ is the conditional density function of $I_{(j+1)\Delta}$ given $I_{j\Delta}$ for $j=0,\ldots,N-1$. This conditional density is also referred to as the transition density of $I$.

The key idea, credited to \citet{Feller1951}, is that the transition density of the Cox-Ingersoll-Ross model can be expressed in terms of the density function of the well-known non-central chi-squared distribution. To be precise, we have for all $j=0,\ldots,N-1$ that
\begin{equation} \label{CH2:DENCIR}
  f^I_{j+1|j}(y_{j+1}|y_j,a,b,\sigma_I) =2c f_{\chi^2}(p_{j+1}|2(q+1),2u_j),
\end{equation}
where $f_{\chi^2}(\cdot|d,\lambda)$ is the probability density function of the non-central chi-squared distribution with $d>0$ degrees of freedom and non-centrality parameter~\mbox{$\lambda>0$}, and where
\begin{align*}
    c & = \frac{2a}{\sigma_I^2\left(1-e^{-a\Delta}\right)}, & q & = \frac{2 a b}{\sigma_I^2}-1, &
    u_j & = c y_j e^{-a \Delta }, &
    p_{j+1} & = 2c y_{j+1}. 
\end{align*}
Combining \eqref{CH2:CIR_LIKELIHOOD} and \eqref{CH2:DENCIR}, the log-likelihood function becomes
\[
l^I(a,b,\sigma_I) = \ln L^I(a,b,\sigma_I) = \sum_{j=0}^{N-1} \ln 2c f_{\chi^2}(p_{j+1}|2(q+1),2u_j).%\label{CH2_LogLik_CIR}
\]
Numerical procedures can then be used to maximize the log-likelihood $l^I$ in order to obtain maximum likelihood estimators for $a$, $b$ and $\sigma_I$. 

\subsection{Drift and volatility of the log price}\label{CondMLE}

In this section, the lag parameter $\tau$ is assumed given, and the focus is on estimating the parameters $\mu$ and $\sigma_P$ in \eqref{Price1} by applying the conditional maximum likelihood method \cite[cf.][]{Andersen1970}.

The key idea behind the estimation procedure comes from Proposition~\ref{prop:cond-independence} via Corollary \ref{corr:multinormal}, which ensures that, conditional on the interest process, the log returns $R_{(j-1)\Delta,j\Delta}$ are independent normal random variables. For this reason it is natural to use the observed log returns for estimation. To this end, let $x_j$ be the observed value of the log price $X_{j\Delta}$ for all $j=0,\ldots,N$. The observed value of the log return $R_{(j-1)\Delta,j\Delta}$ for $j=1,\ldots,N$ is denoted as
\[
 r_j = x_j-x_{j-1} \text{ for } j=1,2,\ldots N.
\]

Define also
\begin{align} \label{eq:J}
J_j = \int_{(j-1)\Delta}^{j\Delta} I_{u-\tau}du \text{ for } j=1,2,\ldots N.
\end{align}
Let $z_j$ be the observed value of $J_j$ for all $j=1,\ldots,N$; observe that $J_j$ is deterministic for $j\le k$. As the interest process $I$ is not observed continuously, it is generally necessary to approximate the Lebesgue integrals $J_1,\ldots,J_N$, for example by the trapezoid method, in other words,
\begin{align}
z_j & = \tfrac{\Delta}{2}(y_{j-k}+y_{j-1-k}) \text{ for } j=1,2,\ldots N.
\label{eq:J-estimate}
\end{align}

By Corollary \ref{corr:multinormal}, the conditional joint density function of the log returns $R_{0,\Delta},\ldots,R_{(N-1)\Delta,N\Delta}$ given $J=(J_1,\ldots,J_N)$ is
\[
 f^{R|J}_{1:N}\left(r_1,\ldots,r_N\middle|z_1,\ldots,z_N,\mu,\sigma_P,\tau\right) = \prod_{j=1}^Nf_{\N}(r_j|\mu\Delta, \sigma_P^2z_j).    
\]
It follows that the conditional likelihood function is
\[
 L^{R|J}(\mu,\sigma_P,|\tau) = \prod_{j=1}^Nf_{\N}(r_j|\mu\Delta, \sigma_P^2z_j),
\]
and the conditional log-likelihood is
\begin{align}
 l^{R|J}(\mu,\sigma_P,|\tau) = \ln L^{R|J}(\mu,\sigma_P,|\tau) = \sum_{j=1}^N\ln f_{\N}(r_j|\mu\Delta, \sigma_P^2z_j). \label{LikeliHoodFunc1}
\end{align}
Maximising the conditional log-likelihood leads to analytic expressions for the maximum likelihood estimators of $\mu$ and $\sigma_P$.

\begin{proposition} \label{prop:MLE-musigma}
The conditional maximum likelihood estimators of $\mu$ and $\sigma_P$ are 
\begin{align} \label{eq:MLE-musigma}
\hat{\mu} &= \frac{\sum_{j=1}^N \frac{r_j}{z_j}}{\Delta \sum_{j=1}^N \frac{1}{z_j}}, &
\hat{\sigma}_P &= \sqrt{\frac{1}{N}\sum_{j=1}^N\frac{\left( r_j - \hat{\mu} \Delta\right)^2}{z_j}}.
\end{align}
\end{proposition}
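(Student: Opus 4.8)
The plan is to maximise the conditional log likelihood in \eqref{LikeliHoodFunc1} directly, by substituting the explicit univariate normal density and then carrying out a two-stage optimisation over $\mu$ and $\sigma_P$. First I would write $f_{\N}(r_j\mid\mu\Delta,\sigma_P^2 z_j)$ out in full, so that
\[
l^{R|J}(\mu,\sigma_P\mid\tau)
= -\frac{N}{2}\ln(2\pi) - N\ln\sigma_P - \frac12\sum_{j=1}^N \ln z_j - \frac{1}{2\sigma_P^2}\sum_{j=1}^N \frac{(r_j-\mu\Delta)^2}{z_j}.
\]
Here each $z_j>0$ (the trapezoidal estimates \eqref{eq:J-estimate} are strictly positive because $I_t>0$ almost surely under \eqref{Conds}), so every term is well defined and the objective is smooth in $(\mu,\sigma_P)$ on the relevant domain.

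The key structural observation is that $\mu$ enters only through the final sum, and there as a downward parabola: for fixed $\sigma_P$ the map $\mu\mapsto l^{R|J}$ is quadratic with strictly negative leading coefficient $-\tfrac{\Delta^2}{2\sigma_P^2}\sum_j z_j^{-1}$. I would therefore solve $\partial l^{R|J}/\partial\mu=0$, which after cancelling the common factor $\Delta/\sigma_P^2$ reduces to the linear equation $\sum_j (r_j-\mu\Delta)/z_j=0$; solving for $\mu$ gives the stated $\hat\mu$, and the sign of the leading coefficient certifies it as the unique global maximiser in $\mu$. Crucially $\hat\mu$ is independent of $\sigma_P$, so it is simultaneously the maximiser of the profile likelihood.

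Next I would substitute $\hat\mu$ and maximise over $\sigma_P$. Writing $v=\sigma_P^2$ and $S=\sum_j (r_j-\hat\mu\Delta)^2/z_j\ge 0$, the profile log likelihood takes the form $\mathrm{const}-\tfrac{N}{2}\ln v-\tfrac{S}{2v}$; setting its derivative to zero yields $v=S/N$, which is precisely the stated $\hat\sigma_P$. A short second-derivative check (the second derivative at $v=S/N$ equals $-N^3/(2S^2)<0$, assuming $S>0$) confirms that this critical point is a maximum rather than a saddle or a minimum, and the divergence of $-\tfrac{N}{2}\ln v-\tfrac{S}{2v}$ to $-\infty$ as $v\to0^+$ and $v\to\infty$ shows it is global.

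None of the steps presents a genuine obstacle; the computation is routine first-order conditions for a Gaussian likelihood. The only points requiring care are (i) recording that $\hat\mu$ does not depend on $\sigma_P$, which is what legitimises the sequential ``profile-then-optimise'' argument and avoids solving a coupled nonlinear system, and (ii) verifying the second-order and boundary behaviour so that the stationary point is certified as the global maximiser rather than merely a critical point.
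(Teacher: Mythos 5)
Your derivation of the estimators is correct and coincides with the paper's up to the point of solving the first-order conditions; where you differ is in how the maximum is certified. The paper solves the two stationarity equations jointly and then computes the full $2\times 2$ Hessian of $l^{R|J}$, observing that at $(\hat\mu,\hat\sigma_P)$ the off-diagonal terms vanish and the diagonal entries are $-\frac{\Delta^2}{\hat\sigma_P^2}\sum_j z_j^{-1}$ and $-\frac{2N}{\hat\sigma_P^2}$, so the critical point is a local maximum. You instead profile out $\mu$ (using the fact that the $\mu$-optimiser is a downward parabola whose vertex does not depend on $\sigma_P$) and then treat the one-dimensional problem in $v=\sigma_P^2$, checking both the second derivative at the critical point and the boundary behaviour as $v\to0^+$ and $v\to\infty$. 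Your route buys a genuinely \emph{global} optimality statement from two elementary one-dimensional arguments, whereas the paper's Hessian computation, taken literally, only certifies a local maximum (globality would follow there from uniqueness of the critical point, but this is not spelled out). The paper's version is more symmetric and requires no profiling step; yours is slightly longer but more self-contained in its conclusion. Both correctly rely on $z_j>0$, which you justify explicitly via \eqref{Conds} and the positivity of $\phi_I$ and which the paper leaves implicit.
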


\begin{proof}
Expanding \eqref{LikeliHoodFunc1} gives
\begin{align*}
 l^{R|J}(\mu,\sigma_P|\tau) = -N \ln\sigma_P  -\sum_{j=1}^N \ln\sqrt{2\pi z_j} - \frac{1}{2\sigma_P^2}\sum_{j=1}^N \frac{(r_j - \mu \Delta)^2}{z_j},
\end{align*}
and consequently
  \begin{align*}
\frac{\partial{l^{R|J}}}{\partial \mu}(\mu,\sigma_P|\tau)
 &= \frac{\Delta}{\sigma_P^2}\sum_{j=1}^N \frac{r_j - \mu \Delta}{z_j},\\
 \frac{\partial{l^{R|J}}}{\partial \sigma_P}(\mu,\sigma_P|\tau)
 &= \frac{1}{\sigma_P^3}\sum_{j=1}^N \frac{(r_j - \mu \Delta)^2}{z_j} -\frac{N}{\sigma_P}.
 \end{align*}
 Setting both derivatives equal to $0$ and solving for $\mu$ and $\sigma_P$ yields \eqref{eq:MLE-musigma}.
 
 In order to confirm that $\hat{\mu}$ and $\hat{\sigma}_P$ maximize the conditional log-likelihood, we derive its Hessian, namely,
  \[
\begin{bmatrix} \frac{\partial^2}{\partial \mu^2} & \frac{\partial^2}{\partial \mu \partial \sigma_P}\\
\frac{\partial^2}{\partial \sigma_P \partial\mu} & \frac{\partial^2}{\partial \sigma_P^2}
\end{bmatrix}l^{R|J}(\mu,\sigma_P|\tau)
    = \begin{bmatrix}
-\frac{\Delta^2}{\sigma_P^2} \sum_{j=1}^N\frac{1}{z_j} & -\frac{2\Delta}{\sigma_P^3}\sum_{j=1}^N \frac{r_j - \mu \Delta}{z_j} \\
-\frac{2\Delta}{\sigma_P^3}\sum_{j=1}^N \frac{r_j - \mu \Delta}{z_j} & \frac{N}{\sigma_P^2} - \frac{3}{\sigma_P^4}\sum_{j=1}^N \frac{(r_j - \mu \Delta)^2}{z_j} 
\end{bmatrix}.
\]
Evaluating the Hessian at $(\hat{\mu},\hat{\sigma}_P)$ yields
  \[
\begin{bmatrix} \frac{\partial^2}{\partial \mu^2} & \frac{\partial^2}{\partial \mu \partial \sigma_P}\\
\frac{\partial^2}{\partial \sigma_P \partial\mu} & \frac{\partial^2}{\partial \sigma_P^2}
\end{bmatrix}l^{R|J}(\hat{\mu},\hat{\sigma}_P|\tau)
    = \begin{bmatrix}
-\frac{\Delta^2}{\hat{\sigma}_P^2} \sum_{j=1}^N\frac{1}{z_j} & 0 \\
0 & -\frac{2N}{\hat{\sigma}_P^2} 
\end{bmatrix}.
\]
As this is negative definite, it follows that $\hat{\mu}$ and $\hat{\sigma}_P$ maximize the conditional log-likelihood as claimed.
\end{proof}

% Previous comment: ``We will assume the vector $J$ to be a sufficient and complete statistic for the parameter $\theta$.'' I'm not sure that $J$ is a statistic, strictly speaking, as it doesn't depend on the observed data. I could be wrong about this, but in any case, perhaps an estimator for $J$ based on the observed data might have these properties? A statistic $T$ is \emph{complete} if, for every measurable function $g$, we have that $E(g(T))=0$ for all $\theta$ implies that $P(g(T)=0)=1$ for all $\theta$. A statistic $T$ is sufficient if no other statistic that can be calculated from the same sample provides any additional information as to the value of the parameter, i.e. the conditional distribution of a sample given $T$ and $\theta$ does not depend on $\theta$. The factorisation theorem might be able to help here, as well as \citet{Andersen1970}.

% Previous comment: Can we, for example, use the work of \citet{Cruz-Uribe_Neugebauer2002} or rlated work to derive some error bounds for the approximation of the $Y_j$'s?

\subsection{Lag parameter}\label{tauCL}

The parameter $\tau$ takes one of the values $0,\Delta,2\Delta,\ldots,M\Delta =L$, and therefore the estimation of $\tau$ can be reduced to a model selection problem. That is, we select among the different models for the log price, where each model is formulated as
\begin{equation*}
    X_t = X_0 +  \int_0^t \mu du  + \int_0^t \sigma_P \sqrt{I_{u-\ell\Delta}} dW^P_u \text{ for all }t,
\end{equation*}
for some $\ell = 0,1,\ldots,M$.

As the process $I$ is given, the conditional likelihood function defined in Section \ref{CondMLE} can be used. Model selection takes place by minimizing an appropriate information criterion, such as the Akaike information criterion
\[
    \text{AIC}_r = 2q -2 l^{R|J}\left(\hat{\mu},\hat{\sigma}_P \left| \ell \Delta \right.\right) \text{ for } r = 0,\ldots, M 
\]
\cite[p.~605]{deleeuw1992introduction}, where $q$ is the number of parameters in the model, or the Bayesian information criterion
\[
    \text{BIC}_r = q\ln N -2 l^{R|J}\left(\hat{\mu},\hat{\sigma}_P \left| \ell \Delta \right.\right) \text{ for } r = 0,\ldots, M 
\]
\citep{neath2012bayesian}. In this case, as the number of parameters is known and fixed, these two criteria are equivalent, and estimates of $\tau$ are obtained by setting $\hat{\tau}=\hat{\ell}\Delta$, where
\[
\hat{\ell} \in \argmax_{\ell=0,\ldots,M} l^{R|J}\left(\hat{\mu},\hat{\sigma}_P \left| \ell \Delta \right.\right).
\]
Whilst in theory these estimates need not be unique, in practice this leads to a unique estimate (even if only through numerical rounding).

\subsection{Numerical experiment}

Numerical experiments can be used to evaluate the effectiveness of the estimation procedure described above. To this end, we simulate the interest and price processes with parameter values that are similar to those observed in applications to real data (see Figures and \ref{fig:Bitcoin-prices}, \ref{fig:proxies} and \ref{fig:InferenceResults}. We take 
\begin{align*}
 a &= 30, & b &= 15, & \sigma_I &= 0.6, & \mu &= 0, & \sigma_P &= 0.2, & \tau &= 0.025
\end{align*}
together with initial values $P_0=20,000$ and $\phi_I(t)=14$ for all $t\in[-L,0]$, where $L=2\tau=0.05$. We fix $\Delta=\nicefrac{1}{360}$ (observe that $\tau=9\Delta$) and perform the estimation procedure above to estimate the parameters of the model, for all time horizons of the form $T = 9n\Delta$ where $n=1,2,\ldots,80$. This setting reflects daily data for a range of time horizons, up to 2 years.

The results for 10,000 realisations of this experiment appear in Figure \ref{fig:num-exp}. The estimates perform well even for short time horizons. The behaviour of the estimates $\hat{a}$, $\hat{b}$ and $\hat{\sigma}_I$ of the interest process are consistent with known observations from the literature on the maximum likelihood estimation of Cox-Ingersoll-Ross processes. The estimates $\hat{\mu}_P$ and $\hat{\sigma}_P$ of the mean and drift of the price process are very accurate even on short time horizons, as is the estimate $\hat{\tau}$ of the delay parameter.

% [inline block 0: 7 envs, 94241 chars -> data_tex | \begin{filecontents}[overwrite]{num-exp-aI.csv} N,count,mean,std,min,q1,median,q3,max,T,meanplus,meanminus...]


\caption{Estimation results for 10,000 realisations and different values of time horizon $T$}
\label{fig:num-exp}
\end{figure}

\section{Risk-neutral measure and option pricing}\label{OptionPricing}

Pricing options in this model is based on techniques developed for two well known models, namely the Black-Scholes-Merton model \citep{black1973pricing} and the Heston model \citep{heston1993closed}. 
We assume throughout this section that $H>\tau$.

\subsection{Change of measure}  \label{RNPSect}

We first determine a probability measure under which the discounted price process becomes a martingale, and with the property that the dynamics of the interest and price process remain tractable. To his end, take two progressively measurable processes $\theta^P=(\theta^P_t)$ and $\theta^I=(\theta^I_t)$ (to be determined below) with respect to the filtration $\mathbb{F}$, and then define the processes $W^{P\ast}=(W^{P\ast}_t)$ and $W^{I\ast}=(W^{I\ast}_t)$ as
\begin{align*}
      W^{P\ast}_t  &=   W^P_t + \int_0^t \theta^P_sds, &
      W^{I\ast}_t  &=   W^I_t + \int_0^t \theta^I_sds
\end{align*}
for all $t$. Let $\mathbb{F}^{P\ast}=(\mathcal{F}^{P\ast}_t)$ and $\mathbb{F}^{I\ast}=(\mathcal{F}^{I\ast}_t)$ be the filtrations generated by $W^{P\ast}$ and $W^{I\ast}$, respectively. The goal is to use the Girsanov theorem to define a new measure $\mathds{Q}\sim\mathds{P}$ such that $W^{P\ast}$ and $W^{I\ast}$ are two independent Brownian motions under~$\mathds{Q}$.

The choice of $\theta^P$ and $\theta^I$ is motivated by requirements on the interest and price processes. In order for the discounted price to be a martingale under $\mathds{Q}$, we need
\begin{equation}\label{RNPPrice}
     dP_t =   r P_t dt + \sigma_P P_t \sqrt{I_{t-\tau}} dW^{P\ast}_t
\end{equation}
together with $P_0=e^x\in(0,\infty)$. Furthermore, we require that the interest process has Cox-Ingersoll-Ross dynamics under $\mathds{Q}$, in other words, it is a strong positive-valued solution to the stochastic differential equation
\begin{equation} \label{eqInterest4}
dI_t = \tilde{a}(\tilde{b} -  I_t) dt + \sigma_I \sqrt{I_t}dW^{I\ast}_t 
\end{equation}
and satisfies the initial condition \eqref{eq:I:past}. Comparing \eqref{eq:I:CIR} with \eqref{eqInterest4}, the change of parameters of $I$ introduces two new parameters---these new parameters will be used in due course to calibrate to option prices. After writing
\begin{align*}
    \tilde{a}&= a + \lambda_a, & \tilde{b}=\frac{ab-\lambda_{ab}}{a+\lambda_a},
\end{align*}
where $\lambda_a,\lambda_{ab} \in \mathds{R} $ and rearranging, we obtain
\begin{align}
    \theta^P_t &= \frac{\mu +\frac{1}{2} \sigma_P^2 I_{t-\tau}  -r}{\sigma_P \sqrt{I_{t-\tau}}} \text{ for all }t\ge0,\label{ThetaP}\\
    \theta^I_t &= \frac{\lambda_{ab}}{\sigma_I\sqrt{I_t}}  + \frac{\lambda_a}{\sigma_I} \sqrt{I_t} \text{ for all }t\ge0.\label{ThetaI}
\end{align}

For any adapted process $\theta$ and Brownian motion $W$, define the stochastic exponential $\mathcal{E}^{\theta,W}$ 
of $\theta$ with respect to $W$ as the process
\[
 \mathcal{E}^{\theta,W}_t = \exp \left\{\int_0^t \theta_sdW_s  -\tfrac{1}{2} \int_{0}^t \theta_s^2 ds \right\} \text{ for all }t.
\]
We now have the following result. 

\begin{theorem} \label{th:risk-neut-prob}
Let the processes $\theta^P$ and $\theta^I$ be defined by \eqref{ThetaP}--\eqref{ThetaI}, and assume that
\begin{align}
    \lambda_a &>-a, & \lambda_{ab} &\le ab - \tfrac{1}{2}\sigma_I^2. \label{Cond12}
\end{align}
Define the process $Z=(Z_t)$ as
\begin{align*}
 Z_t &= \mathcal{E}^{-\theta^P,W^P}_t \mathcal{E}^{-\theta^I,W^I}_t \text{ for all }t,
\end{align*}
and the probability measure $\mathds{Q}\sim\mathds{P}$ on $\mathcal{F}_H$ as
\[
\frac{d\mathds{Q}}{d\mathds{P}} = Z_H.
\]
Then $W^{P\ast}$ and $W^{I\ast}$ are two independent Brownian motions under~$\mathds{Q}$ and the price process $P$ and interest process $I$ are strong solutions to the stochastic differential equations \eqref{RNPPrice}--\eqref{eqInterest4} under $\mathbb{Q}$.
\end{theorem}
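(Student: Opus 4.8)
The plan is to show that $Z$ is a genuine $\mathds{P}$-martingale on $[0,H]$, so that $E[Z_H]=1$ and $\mathds{Q}$ is a probability measure, and then to read off the remaining conclusions from the Girsanov theorem. Since $W^P$ and $W^I$ are independent we have $\langle W^P,W^I\rangle\equiv 0$, so $Z=\mathcal{E}^{-\theta^P\cdot W^P-\theta^I\cdot W^I}$ is a strictly positive continuous local martingale; being nonnegative it is a supermartingale, whence $E[Z_H]\le Z_0=1$. The whole difficulty is therefore to prove the reverse inequality $E[Z_H]\ge1$.

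I would first dispose of the price factor by conditioning on $\mathcal{F}^I_H$. Write $Z_H=\mathcal{E}^{-\theta^P,W^P}_H\,\mathcal{E}^{-\theta^I,W^I}_H$ and note that $\mathcal{E}^{-\theta^I,W^I}_H$ is $\mathcal{F}^I_H$-measurable, while $\theta^P$ in \eqref{ThetaP} is, given $\mathcal{F}^I_H$, a deterministic and continuous function of time, since $\theta^P_s$ depends only on $I_{s-\tau}$. The Feller condition \eqref{Conds} keeps $I$ strictly positive and continuous, hence bounded away from $0$ on the compact range $[-\tau,H-\tau]$, so $\int_0^H(\theta^P_s)^2\,ds<\infty$ almost surely. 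Remark~\ref{RemarkNormal}, applied with the $\mathbb{F}^I$-adapted integrand $\theta^P$ and the independent Brownian motion $W^P$, then shows that $\int_0^H\theta^P_s\,dW^P_s$ is conditionally $\N(0,\int_0^H(\theta^P_s)^2\,ds)$ given $\mathcal{F}^I_H$; the Gaussian moment generating function yields $E[\mathcal{E}^{-\theta^P,W^P}_H\mid\mathcal{F}^I_H]=1$, and the tower property reduces the claim to $E[\mathcal{E}^{-\theta^I,W^I}_H]=1$.

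The \emph{main obstacle} is this last identity, the true-martingale property of the Cox--Ingersoll--Ross density; Novikov's criterion is not directly available because the term $\lambda_{ab}/(\sigma_I\sqrt{I_t})$ in \eqref{ThetaI} is unbounded near $0$. I would argue by localisation. Let $\rho_n=\inf\{t:I_t\le1/n\text{ or }I_t\ge n\}$; on $[0,\rho_n]$ the integrand $\theta^I$ is bounded, so $\mathcal{E}^{-\theta^I,W^I}_{\cdot\wedge\rho_n}$ is a true martingale with $E[\mathcal{E}^{-\theta^I,W^I}_{H\wedge\rho_n}]=1$, and $d\mathds{Q}_n/d\mathds{P}=\mathcal{E}^{-\theta^I,W^I}_{H\wedge\rho_n}$ defines a probability measure under which, by Girsanov, $I$ solves the new equation \eqref{eqInterest4} up to $\rho_n$. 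Since $\mathcal{E}^{-\theta^I,W^I}_H=\mathcal{E}^{-\theta^I,W^I}_{H\wedge\rho_n}$ on $\{\rho_n>H\}$, nonnegativity gives $E[\mathcal{E}^{-\theta^I,W^I}_H]\ge\mathds{Q}_n(\rho_n>H)$. The hypotheses \eqref{Cond12} enter precisely here: $\lambda_a>-a$ gives $\tilde a>0$, and $\lambda_{ab}\le ab-\tfrac12\sigma_I^2$ gives $2\tilde a\tilde b/\sigma_I^2=2(ab-\lambda_{ab})/\sigma_I^2\ge1$, so the limiting process \eqref{eqInterest4} satisfies the Feller condition and is a strictly positive, non-exploding diffusion. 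Consequently its exit time from $(1/n,n)$ exceeds $H$ with probability tending to $1$, i.e.\ $\mathds{Q}_n(\rho_n>H)\to1$, which forces $E[\mathcal{E}^{-\theta^I,W^I}_H]\ge1$ and completes the martingale argument.

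With $E[Z_H]=1$ established, the two-dimensional Girsanov theorem applied to $(W^P,W^I)$---whose quadratic covariation matrix is the identity by independence---shows that $W^{P\ast}$ and $W^{I\ast}$ are continuous $\mathds{Q}$-local martingales with $\langle W^{i\ast},W^{j\ast}\rangle_t=\langle W^i,W^j\rangle_t=\delta_{ij}t$, since adding the finite-variation drifts does not change covariations. Lévy's characterisation then identifies $(W^{P\ast},W^{I\ast})$ as a standard two-dimensional $\mathds{Q}$-Brownian motion, so its components are independent $\mathds{Q}$-Brownian motions. Finally, substituting \eqref{ThetaP}--\eqref{ThetaI} through $dW^P=dW^{P\ast}-\theta^P\,dt$ and $dW^I=dW^{I\ast}-\theta^I\,dt$ into \eqref{Price1} and \eqref{eq:I:CIR} reproduces exactly the $\mathds{Q}$-dynamics \eqref{RNPPrice} and \eqref{eqInterest4}; $P$ remains the explicit exponential functional of $W^{P\ast}$ and is thus a strong solution, while the new Cox--Ingersoll--Ross equation for $I$ has a strong, strictly positive solution by the same existence result and Feller condition used for \eqref{eq:I:CIR}.
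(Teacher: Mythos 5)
Your proof is correct and follows the same skeleton as the paper's: both reduce the problem, via the Girsanov theorem, to showing $E(Z_H)=1$, and both dispose of the price factor by conditioning on $\mathcal{F}^I_H$ and using Remark~\ref{RemarkNormal} (given $\mathcal{F}^I_H$ the integrand $\theta^P$ is a deterministic function of time, so the conditional expectation of $\mathcal{E}^{-\theta^P,W^P}_H$ equals one), leaving only the martingale property of $\mathcal{E}^{-\theta^I,W^I}$. Where you diverge is in how that last property is established. The paper treats it as a black box: it checks that $-\theta^I_t=c(I_t)$ for an explicit function $c$, verifies that the drift of the auxiliary diffusion is $a(b-x)+\sigma_I\sqrt{x}\,c(x)=\tilde a(\tilde b-x)$, notes that \eqref{Cond12} gives $\tilde a>0$ and $2\tilde a\tilde b\ge\sigma_I^2$ so the auxiliary Cox--Ingersoll--Ross process never leaves $(0,\infty)$, and then cites \citet[Corollary 2.2]{mijatovic2012martingale}. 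You instead prove this step from first principles by localisation at the exit times $\rho_n$ of $I$ from $[1/n,n]$, using the stopped exponentials to define measures $\mathds{Q}_n$, the identity $E[\mathcal{E}^{-\theta^I,W^I}_H]\ge\mathds{Q}_n(\rho_n>H)$, and the non-explosion and strict positivity of the limiting diffusion \eqref{eqInterest4} under \eqref{Cond12} to send the right-hand side to one. This is essentially the argument underlying the cited result, so the two proofs rest on exactly the same structural fact (the Feller condition for the tilted parameters); yours is self-contained at the cost of one implicit step you should make explicit, namely that the law of the stopped process $(I_{t\wedge\rho_n})$ under $\mathds{Q}_n$ coincides with that of the stopped auxiliary diffusion under $\mathds{P}$ --- this needs Girsanov on the stochastic interval $[0,\rho_n]$ together with uniqueness in law for the Cox--Ingersoll--Ross equation (Yamada--Watanabe), after which $\mathds{Q}_n(\rho_n>H)\to1$ follows as you say. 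Your concluding appeal to the two-dimensional Girsanov theorem and L\'evy's characterisation to obtain independence of $W^{P\ast}$ and $W^{I\ast}$, and the substitution back into the dynamics, matches what the paper leaves implicit.
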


It will be shown below that the discounted price process is a martingale under $\mathds{Q}$; for this reason we will refer to $\mathds{Q}$ as the \emph{pricing measure}.

\begin{proof}
By the Girsanov theorem \cite[cf.][Theorem~A.15.1]{Musiela_Rutkowski2005}, it is sufficient to show that $E(Z_H)=1$.

The condition \eqref{Cond12} implies that $\tilde{a}>0$ and $2\tilde{a} \tilde{b}  \geq \sigma_I^2$, which in turn implies that the auxiliary stochastic differential equation with initial condition
\begin{align*}
 d\tilde{I}_t &= \tilde{a}(\tilde{b} -  \tilde{I}_t) dt + \sigma_I \sqrt{\tilde{I}_t}dW_t, & \tilde{I}_0 &= \phi_I(0),
\end{align*}
where $W$ is any Brownian motion, has a strong solution that, almost surely, never exits $(0,\infty)$ \cite[Theorem~6.2.3, Proposition~6.2.4]{Lamberton_Lapeyre1996}. Note furthermore that $\theta^I_t = c(I_t)$ for all $t$, where $b:(0,\infty)\rightarrow\mathbb{R}$ is defined as \[
 c(x) = - \frac{\lambda_{ab}}{\sigma_I\sqrt{x}} - \frac{\lambda_a}{\sigma_I}\sqrt{x} \text{ for all }x\in(0,\infty),
\]
and that
\[
 \tilde{a}(\tilde{b} -  \tilde{I}_t) = a(b-\tilde{I}_t) + \sigma_I\sqrt{\tilde{I}_t}c(\tilde{I}_t).
\]
It then follows from a result by \citet[Corollary 2.2]{mijatovic2012martingale} that the stochastic exponential $\mathcal{E}^{\theta^I,W^I}=\mathcal{E}^{c(I),W^I}$ is a martingale. Conditioning on $\mathcal{F}^I_H$ and using Remark~\ref{RemarkNormal}, we then obtain
\begin{align*}
E(Z_H) &= E\left(\mathcal{E}^{-\theta^I,W^I}_t E\left(\mathcal{E}^{-\theta^P,W^P}_t \middle|\mathcal{F}^I_H\right)\right) = E\left(\mathcal{E}^{-\theta^I,W^I}_t\right) = 1,
\end{align*}
and the proof is complete.
\end{proof}

Comparing \eqref{RNPPrice}--\eqref{eqInterest4} with \eqref{eq:I:CIR} and \eqref{Price1}, the interest and price processes have the same dynamics under the real-world measure $\mathds{P}$ and the pricing measure $\mathds{Q}$, albeit with respect to different Brownian motions, and, for the interest process, changed parameters. The independence of $W^{P\ast}$ and $W^{I\ast}$ under $\mathds{Q}$ means that their natural filtrations $\mathbb{F}^{P\ast}$ and $\mathbb{F}^{I\ast}$ are also independent. It is clear from \eqref{ThetaI} that $\mathcal{F}^{I\ast}_t=\mathcal{F}^I_t$ for all $t$, and combining \eqref{eq:I:past} with \eqref{ThetaP} gives that $\mathcal{F}^{P^\ast}_t = \mathcal{F}^P_t$ when $t<\tau$ and $\mathcal{F}^{P^\ast}_t \subseteq \mathcal{F}^P_t\vee\mathcal{F}^I_{t-\tau}$ when $t\ge\tau$. This means that the general filtration $\mathbb{F}$ defined in \eqref{eq:filtration} and the delayed filtration $\mathbb{F}^{\mathrm{d}}$ defined in \eqref{eq:delayed-filtration} retain the same structure when expressed in terms of $\mathbb{F}^{P\ast}$ and $\mathbb{F}^{I\ast}$, in other words, for all $t$ we have
\begin{align} \label{eq:filtrations-under-Q}
 \mathcal{F}_t &= \mathcal{F}^{P\ast}_t\vee\mathcal{F}^{I\ast}_t, &
 \mathcal{F}^{\mathrm{d}}_t &=     
      \begin{cases}
                  \mathcal{F}^{P\ast}_t  &\text{if } t <\tau,\\
                  \mathcal{F}^{P\ast}_t  \vee \mathcal{F}^{I\ast}_{t-\tau} &\text{if } t\ge\tau.
                  \end{cases}
\end{align}

We conclude this section by establishing the martingale properties of the \emph{discounted price process} $\tilde{P}=(\tilde{P}_t)$. Solving \eqref{RNPPrice} gives
\begin{equation}\label{PriceEquation}
  P_t = e^{x+rt}\mathcal{E}^{\sigma_P\sqrt{I_{\cdot-\tau}},W^{P\ast}}_t \text{ for all }t,
\end{equation}
and thus $\tilde{P}$ can be defined as 
\begin{equation}\label{eq:discounted price}
\tilde{P}_t = e^{-rt}P_t = e^x\mathcal{E}^{\sigma_P\sqrt{I_{\cdot-\tau}},W^{P\ast}}_t \text{ for all } t.
\end{equation}

\begin{proposition}\label{PropositionMart}
The discounted price process $\tilde{P}$ is a martingale under the pricing measure $\mathds{Q}$ of Theorem \ref{th:risk-neut-prob} with respect to both the general filtration~$\mathbb{F}$ and the delayed filtration $\mathbb{F}^{\mathrm{d}}$.
\end{proposition}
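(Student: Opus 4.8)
The plan is to exploit the explicit representation \eqref{eq:discounted price}, namely $\tilde{P}_t = e^x\mathcal{E}^{\sigma_P\sqrt{I_{\cdot-\tau}},W^{P\ast}}_t$, together with the independence of $W^{P\ast}$ and the interest process under $\mathds{Q}$ guaranteed by Theorem~\ref{th:risk-neut-prob}. As a stochastic exponential, $\tilde{P}$ is automatically a nonnegative continuous local martingale under $\mathds{Q}$ with respect to $\mathbb{F}$, so the real content is upgrading this to a genuine martingale. The delay plays no essential role here, because $\sqrt{I_{\cdot-\tau}}$ remains adapted to $\mathbb{F}^{I\ast}=\mathbb{F}^I$ and is therefore independent of $W^{P\ast}$ under $\mathds{Q}$.

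First I would establish the $\mathbb{F}$-martingale property directly. Fix $0\le s<t\le H$ and factor $\tilde{P}_t = \tilde{P}_s\exp\bigl\{\sigma_P\int_s^t\sqrt{I_{u-\tau}}\,dW^{P\ast}_u-\tfrac{1}{2}\sigma_P^2\int_s^t I_{u-\tau}\,du\bigr\}$, where the leading factor is $\mathcal{F}_s$-measurable. It then suffices to show that the conditional expectation of the exponential given $\mathcal{F}_s$ equals one. Applying Remark~\ref{RemarkNormal} under $\mathds{Q}$ with $W=W^{P\ast}$, integrand $Z_u=\sigma_P\sqrt{I_{u-\tau}}$ and $\mathbb{G}=\mathbb{F}^{I\ast}$ (legitimate since $I$ is independent of $W^{P\ast}$ under $\mathds{Q}$ and $\int_0^t I_{u-\tau}\,du<\infty$ almost surely), the increment $\int_s^t\sqrt{I_{u-\tau}}\,dW^{P\ast}_u$ is, conditional on $\mathcal{F}^{I\ast}_H\vee\mathcal{F}^{P\ast}_s$, normally distributed with mean $0$ and variance $\int_s^t I_{u-\tau}\,du$. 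The Gaussian Laplace transform then yields that the conditional expectation of the exponential given $\mathcal{F}^{P\ast}_s\vee\mathcal{F}^{I\ast}_H$ is identically one; in particular, taking $s=0$ gives $E^{\mathds{Q}}[\tilde{P}_t]=e^x<\infty$, so $\tilde{P}$ is integrable. Since $\mathcal{F}_s=\mathcal{F}^{P\ast}_s\vee\mathcal{F}^{I\ast}_s$ by \eqref{eq:filtrations-under-Q} with $\mathcal{F}^{I\ast}_s\subseteq\mathcal{F}^{I\ast}_H$, the tower property collapses the conditional expectation given $\mathcal{F}_s$ to this deterministic value one, giving $E^{\mathds{Q}}[\tilde{P}_t\mid\mathcal{F}_s]=\tilde{P}_s$.

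It then remains to transfer the martingale property to the delayed filtration $\mathbb{F}^{\mathrm{d}}$. For this I would use that $\tilde{P}$ is adapted to $\mathbb{F}^{\mathrm{d}}$—which follows from \eqref{eq:discounted price} and \eqref{eq:filtrations-under-Q}, since $\tilde{P}_t$ depends on $W^{P\ast}$ only up to time $t$ and on $I$ only up to time $t-\tau$, so the relevant stochastic integral is $\mathcal{F}^{P\ast}_t\vee\mathcal{F}^{I\ast}_{t-\tau}$-measurable—together with the inclusion $\mathcal{F}^{\mathrm{d}}_s\subseteq\mathcal{F}_s$ noted after \eqref{eq:delayed-filtration}. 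For $s<t$ the tower property then gives $E^{\mathds{Q}}[\tilde{P}_t\mid\mathcal{F}^{\mathrm{d}}_s]=E^{\mathds{Q}}[E^{\mathds{Q}}[\tilde{P}_t\mid\mathcal{F}_s]\mid\mathcal{F}^{\mathrm{d}}_s]=E^{\mathds{Q}}[\tilde{P}_s\mid\mathcal{F}^{\mathrm{d}}_s]=\tilde{P}_s$, the final equality holding because $\tilde{P}_s$ is $\mathcal{F}^{\mathrm{d}}_s$-measurable.

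I expect the main obstacle to be the passage from local to true martingale, that is, justifying that the independence of $W^{P\ast}$ and $\mathbb{F}^{I\ast}$ under $\mathds{Q}$ (itself a consequence of Theorem~\ref{th:risk-neut-prob}) licenses the application of Remark~\ref{RemarkNormal} to the random but $W^{P\ast}$-independent integrand $\sqrt{I_{\cdot-\tau}}$. Once the conditional normality of the increments is secured, both the integrability and the two martingale identities reduce to routine Gaussian moment-generating-function and tower-property manipulations.
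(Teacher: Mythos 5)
Your proof is correct, and it rests on the same key ingredient as the paper's --- conditioning on a $\sigma$-field that contains the whole interest path and is independent of the relevant increments of $W^{P\ast}$, then invoking Remark~\ref{RemarkNormal} and the Gaussian moment generating function to evaluate the stochastic exponential --- but it deploys that ingredient along a genuinely different route. The paper computes only the unconditional expectation $E_{\mathds{Q}}(\tilde{P}_t)=e^x$ and combines this with the fact that a positive local martingale is a supermartingale, so that constancy of the expectation in $t$ upgrades the supermartingale to a martingale; adaptedness and the tower property then cover both filtrations. You instead verify the martingale identity $E_{\mathds{Q}}(\tilde{P}_t\mid\mathcal{F}_s)=\tilde{P}_s$ head-on, by factoring out the $\mathcal{F}_s$-measurable part and showing that the conditional expectation of the remaining exponential increment given $\mathcal{F}^{P\ast}_s\vee\mathcal{F}^{I\ast}_H$ equals one, before collapsing to $\mathcal{F}_s$ and then to $\mathcal{F}^{\mathrm{d}}_s\subseteq\mathcal{F}_s$. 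Your route is longer but more self-contained (no appeal to the supermartingale lemma), and it makes the integrability of $\tilde{P}_t$ explicit rather than implicit; the paper's route is shorter because the supermartingale inequality reduces the whole problem to checking one number per time point. One refinement worth stating in your write-up: the conditional Gaussian computation gives $E_{\mathds{Q}}\bigl(e^{Y-V/2}\mid\mathcal{G}\bigr)=1$ for the increment $Y$ with $\mathcal{G}$-measurable conditional variance $V$ even before integrability is known, since conditional expectations of nonnegative variables are always defined; it is this identity that \emph{delivers} $E_{\mathds{Q}}(\tilde{P}_t)=e^x<\infty$, so integrability is a conclusion of the argument, not a hypothesis you need to check separately (the exponential moments of the integrated Cox--Ingersoll--Ross process need not be finite in general, so this ordering of the logic matters).
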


\begin{proof}
It is clear from \eqref{eq:discounted price} that $\tilde{P}$ is a positive local martingale under $\mathds{Q}$, hence a supermartingale \cite[ Theorem~7.23]{Klebaner}. For each $t$, conditioning with respect to any $\sigma$-field $\mathcal{G}\supseteq\mathcal{F}^{I\ast}_{t-\tau}$ that is independent of $\mathcal{F}^{P\ast}_t$ and using Remark~\ref{RemarkNormal}, we obtain
\[
E_{\mathds{Q}}(\tilde{P}_t) = e^xE_{\mathds{Q}}\left(E_{\mathds{Q}}\left(\mathcal{E}^{\sigma_P\sqrt{I_{\cdot-\tau}},W^{P\ast}}_t \middle| \mathcal{G} \right)\right) = e^x, 
\]
and it follows that $\tilde{P}$ is a martingale with respect to both $\mathbb{F}$ and $\mathbb{F}^{\mathrm{d}}$.
\end{proof}

\subsection{Characteristic function}

Taking the logarithm in \eqref{PriceEquation}, it follows that the log price satisfies
\begin{equation}\label{eq:PriceEquation-log}
    X_t = x + rt -\tfrac{1}{2}\sigma_P^2 \int_0^t I_{u-\tau}du + \sigma_P \int_0^t \sqrt{I_{u-\tau}} dW^{P\ast}_u \text{ for all }t.
\end{equation}
In this section we derive the conditional characteristic function of the log price given the delayed filtration $\mathbb{F}^{\mathrm{d}}$ and the general filtration $\mathbb{F}$ (see \eqref{eq:filtrations-under-Q}). The conditional characteristic functions are highly tractable, and thus naturally useful for option pricing. The key reason for this is that $X_t$ is lognormally distributed when $t\le\tau$, and when $t>\tau$ it turns out to be the sum of a lognormal random variable plus the value of an independent process resembling the log price in the well known Heston model \citep{heston1993closed}.

We use the pricing measure $\mathds{Q}$ of Theorem \ref{th:risk-neut-prob} throughout this section. Define the characteristic function of a stochastic process $Y=(Y_t)$ given a filtration $\mathbb{G}=(\mathcal{G}_t)$ as
\[
\Phi^{Y|\mathbb{G}}_{t|s}(\lambda) = E_{\mathds{Q}}\left(e^{i\lambda Y_t}\middle|\mathcal{G}_s\right) \text{ for all } \lambda\in\mathds{R}, 0\le s\le t \le H.
\]

The conditional characteristic function of the log price with respect to the delayed filtration $\mathbb{F}^{\mathrm{d}}$ is given as follows. It resembles the characteristic function of the Heston model, with a natural adjustment to take account of the delay $\tau$.

\begin{theorem} \label{th:char-fun-delay}
Fix any $\lambda\in\mathds{R}$. For all $0\le s\le t \le H$, the conditional characteristic function of the log price $X$ with respect to $\mathbb{F}^{\mathrm{d}}$ is given by
\[
 \Phi^{X|\mathbb{F}^{\mathrm{d}}}_{t|s}(\lambda) = 
 \begin{cases}
 1 & \text{if }s=t,\\
  e^{i\lambda X_s + i\lambda r(t-s) - \frac{1}{2}\sigma_P^2(i\lambda + \lambda^2)\int_s^t \phi_I(u-\tau)du} & \text{if } s < t\le \tau, \\
  e^{i\lambda X_s + i\lambda r(t-s) + A(t-s) + I_{s-\tau}B(t-s)} & \text{if } \tau\le s < t, \\
  e^{i\lambda r(t-\tau) + A(t-\tau) + \phi_I(0)B(t-\tau)}\Phi^{X|\mathbb{F}^{\mathrm{d}}}_{\tau|s}(\lambda) &\text{if } s \le \tau < t.
 \end{cases} 
\]
Here the functions $A,B:[0,H]\rightarrow\mathds{R}$ are defined as
\begin{align*}
 A(u) &= \frac{\tilde{a}\tilde{b}}{\sigma_I^2}\left[  (\tilde{a}-d) u - 2\ln \frac{1-ge^{-du}}{1-g}\right], &
 B(u) &= \frac{\tilde{a}-d}{\sigma_I^2} \frac{1-e^{-du}}{1-ge^{-du}}
\end{align*}
for all $u\in[0,H]$, where
\begin{align*}
d &= \sqrt{\tilde{a}^2 + \sigma_P^2 \sigma_I^2 (i\lambda+\lambda^2)} , &
g &= \frac{\tilde{a}  - d}{\tilde{a}  + d}.
\end{align*}
\end{theorem}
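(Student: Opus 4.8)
The plan is to work throughout under the risk-neutral measure $\mathds{Q}$ of Theorem~\ref{th:risk-neut-prob}, using the representation \eqref{eq:PriceEquation-log} of the log price together with the filtration identities \eqref{eq:filtrations-under-Q}, and to treat the four cases in the order listed. The case $s=t$ is immediate, since $X_t$ is $\mathcal{F}^{\mathrm{d}}_t$-measurable. For $s<t\le\tau$ the interest enters only through its deterministic past, $I_{u-\tau}=\phi_I(u-\tau)$ on $[s,t]\subseteq[0,\tau]$, so the increment $X_t-X_s=r(t-s)-\tfrac12\sigma_P^2\int_s^t\phi_I(u-\tau)\,du+\sigma_P\int_s^t\sqrt{\phi_I(u-\tau)}\,dW^{P\ast}_u$ has an Itô integral with deterministic integrand. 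Since $\mathcal{F}^{\mathrm{d}}_s=\mathcal{F}^{P\ast}_s$ for $s<\tau$ by \eqref{eq:filtrations-under-Q}, Remark~\ref{RemarkNormal} (or directly the Gaussian characteristic function) shows that this integral is $\N(0,\int_s^t\phi_I(u-\tau)\,du)$ and independent of $\mathcal{F}^{\mathrm{d}}_s$; multiplying out $e^{i\lambda X_s}$, the deterministic drift, and the factor $e^{-\frac12\lambda^2\sigma_P^2\int}$ produces the stated $-\tfrac12\sigma_P^2(i\lambda+\lambda^2)\int_s^t\phi_I$ exponent.

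The substantive case is $\tau\le s<t$. First I would write $X_t=X_s+r(t-s)-\tfrac12\sigma_P^2\int_s^t I_{u-\tau}\,du+\sigma_P\int_s^t\sqrt{I_{u-\tau}}\,dW^{P\ast}_u$ and condition, via the tower property, on the enlarged field $\mathcal{F}^{\mathrm{d}}_s\vee\mathcal{F}^{I\ast}_{t-\tau}=\mathcal{F}^{P\ast}_s\vee\mathcal{F}^{I\ast}_{t-\tau}$ (the equality follows from \eqref{eq:filtrations-under-Q} because $s-\tau\le t-\tau$). As $W^{P\ast}$ is independent of $\mathbb{F}^{I\ast}$ under $\mathds{Q}$, Remark~\ref{RemarkNormal} applies with $Z=\sqrt{I_{\cdot-\tau}}$ and gives that, conditional on this field, the Itô integral is $\N(0,\int_s^t I_{u-\tau}\,du)$; integrating out the Gaussian and combining with the drift collapses the diffusion and leaves $e^{i\lambda X_s+i\lambda r(t-s)}\,E_{\mathds{Q}}[\exp(-w\int_s^t I_{u-\tau}\,du)\mid\mathcal{F}^{\mathrm{d}}_s]$, where $w=\tfrac12\sigma_P^2(i\lambda+\lambda^2)$. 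Substituting $v=u-\tau$ turns the integral into $\int_{s-\tau}^{t-\tau}I_v\,dv$, which is $\mathbb{F}^{I\ast}$-measurable; by independence of $\mathcal{F}^{P\ast}_s$ from $\mathbb{F}^{I\ast}$ and then the Markov property of the CIR process $I$, the conditional expectation depends on $\mathcal{F}^{\mathrm{d}}_s$ only through $I_{s-\tau}$.

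It then remains to compute the Laplace transform $G(t-s,I_{s-\tau})=E_{\mathds{Q}}[\exp(-w\int_0^{t-s}I_v\,dv)\mid I_0=I_{s-\tau}]$ of the time-integrated CIR process, which runs under $\mathds{Q}$ with parameters $\tilde a,\tilde b,\sigma_I$ as in \eqref{eqInterest4}; this is the main obstacle. I would make the affine ansatz $G(u,x)=e^{A(u)+xB(u)}$ and insert it into the Feynman--Kac equation $\partial_u G=\tilde a(\tilde b-x)\partial_x G+\tfrac12\sigma_I^2 x\,\partial_{xx}G-wx\,G$ with $G(0,\cdot)=1$; matching the terms constant in $x$ and linear in $x$ gives the decoupled system $A'=\tilde a\tilde b\,B$ and the Riccati equation $B'=\tfrac12\sigma_I^2B^2-\tilde a B-w$, subject to $A(0)=B(0)=0$. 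Solving the constant-coefficient Riccati (whose stationary points are $(\tilde a\pm d)/\sigma_I^2$ with $d=\sqrt{\tilde a^2+\sigma_I^2\sigma_P^2(i\lambda+\lambda^2)}$) yields exactly the stated $B$, and integrating $A'=\tilde a\tilde b\,B$ yields the stated $A$; equivalently, one checks directly that the closed forms for $A$ and $B$ in the statement satisfy these ODEs and initial conditions. The delicate point is that $w$ is complex for $\lambda\neq0$, so $d$ and $g$ require a fixed branch of the square root and the Feynman--Kac identity must be justified for complex exponents. Since $\operatorname{Re}w=\tfrac12\sigma_P^2\lambda^2\ge0$ and $I_v\ge0$, the integrand is bounded by $1$ in modulus, so I would establish the identity first for real $w\ge0$ and extend to $\operatorname{Re}w\ge0$ by analytic continuation, both sides being analytic in $w$ on the right half-plane. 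This delivers $E_{\mathds{Q}}[e^{i\lambda X_t}\mid\mathcal{F}^{\mathrm{d}}_s]=e^{i\lambda X_s+i\lambda r(t-s)+A(t-s)+I_{s-\tau}B(t-s)}$, as claimed.

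Finally, for $s\le\tau<t$ I would split at $\tau$ by the tower property, writing $E_{\mathds{Q}}[e^{i\lambda X_t}\mid\mathcal{F}^{\mathrm{d}}_s]=E_{\mathds{Q}}[E_{\mathds{Q}}[e^{i\lambda X_t}\mid\mathcal{F}^{\mathrm{d}}_\tau]\mid\mathcal{F}^{\mathrm{d}}_s]$. The inner expectation is the case just proved, applied with lower time $\tau$, for which $I_{\tau-\tau}=\phi_I(0)$ is deterministic, giving $e^{i\lambda X_\tau+i\lambda r(t-\tau)+A(t-\tau)+\phi_I(0)B(t-\tau)}$; pulling the deterministic factor out of the outer expectation leaves $e^{i\lambda r(t-\tau)+A(t-\tau)+\phi_I(0)B(t-\tau)}\,E_{\mathds{Q}}[e^{i\lambda X_\tau}\mid\mathcal{F}^{\mathrm{d}}_s]$, and the last expectation is $\Phi^{X|\mathbb{F}^{\mathrm{d}}}_{\tau|s}(\lambda)$ by definition, reproducing the fourth case. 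The essential content lies entirely in the third paragraph: the conditioning and independence reductions are routine given Remark~\ref{RemarkNormal} and the Markov property, whereas the Riccati solution, and in particular its rigorous justification for complex $w$, is where the real work is.
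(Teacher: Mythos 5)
Your argument is correct, but it reaches the key third case by a genuinely different route from the paper. The paper defines the shifted process $Z_t = x + rt - \tfrac{1}{2}\sigma_P^2\int_0^t I_u\,du + \sigma_P\int_0^t\sqrt{I_u}\,dB^{P\ast}_u$ and the variance $V_t=\sigma_P^2 I_t$, observes that $(Z,V)$ is exactly a zero-correlation Heston model, and then simply \emph{cites} the known Heston conditional characteristic function for the formula $e^{i\lambda Z_s + i\lambda r(t-s)+A(t-s)+I_sB(t-s)}$, with the filtration bookkeeping handled by Proposition~\ref{prop:FP-FB-Fd} and the independence of $Z_t$ from $\mathcal{F}^{P\ast}_\tau$. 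You instead exploit the independence of $W^{P\ast}$ and $W^{I\ast}$ more directly: conditioning on $\mathcal{F}^{P\ast}_s\vee\mathcal{F}^{I\ast}_{t-\tau}$ and applying Remark~\ref{RemarkNormal} integrates out the price Brownian motion entirely, reducing the problem to the Laplace transform $E_{\mathds{Q}}\bigl[\exp\bigl(-w\int_{s-\tau}^{t-\tau}I_v\,dv\bigr)\,\big|\,I_{s-\tau}\bigr]$ of the time-integrated CIR process, which you then derive from scratch via the affine ansatz and the Riccati system $A'=\tilde{a}\tilde{b}B$, $B'=\tfrac{1}{2}\sigma_I^2B^2-\tilde{a}B-w$ with $A(0)=B(0)=0$; your identification of the stationary points $(\tilde{a}\pm d)/\sigma_I^2$ and the resulting closed forms for $A$ and $B$ is correct. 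Your route is more self-contained (it proves rather than imports the Heston formula) and makes the mechanism --- conditional lognormality given the interest path --- more transparent; it also honestly flags the one technical point the citation hides, namely that $w$ is complex and the Feynman--Kac/Riccati identity needs justification there, which your proposal of proving it for real $w\ge 0$ and extending by analytic continuation on $\operatorname{Re}w\ge 0$ handles correctly (and the branch of $d$ is immaterial since $A$ and $B$ are invariant under $d\mapsto -d$, $g\mapsto 1/g$). The paper's route is shorter and defers these issues to the literature. The remaining three cases in your write-up coincide with the paper's treatment.
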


\begin{proof}
For $0\le s < t\le \tau$ we have
\begin{equation}  \label{eq:normal-delay}
 \left.X_t - X_s\right|\mathcal{F}^{P\ast}_s \sim \N\left(r(t-s) -\tfrac{1}{2}\sigma_P^2 \int_s^t \phi_I(u-\tau)du, \sigma_P^2 \int_s^t \phi_I(u-\tau) du\right),
\end{equation}
and therefore
\begin{equation} \label{eq:Phi-d-before-tau}
 \Phi^{X|\mathbb{F}^{\mathrm{d}}}_{t|s}(\lambda) = \Phi^{X|\mathbb{F}^{P\ast}}_{t|s}(\lambda) = e^{i\lambda X_s + i\lambda r(t-s) - \frac{1}{2}\sigma_P^2(i\lambda + \lambda^2)\int_s^t \phi_I(u-\tau)du}.
\end{equation}

Now define the stochastic process $Z=(Z_t)_{t\in[0,H-\tau]}$ as 
\begin{equation} \label{ZProcess}
Z_t = x + rt -\tfrac{1}{2}\sigma_P^2 \int_0^t I_udu + \sigma_P \int_0^t \sqrt{I_u} dB^{P\ast}_u \text{ for all }t\in[0,H-\tau],
\end{equation}
where $B^{P\ast}=(B^{P\ast}_u)_{u\in[0,H-\tau]}$ is the Brownian motion given by \[B^{P\ast}_u = W^{P\ast}_{u+\tau} - W^{P\ast}_\tau \text{ for all } u\in[0,H-\tau],\]
and define the stochastic process $V=(V_t)_{t\in[0,H-\tau]}$ as $V_t=\sigma_P^2I_t$ for all $t\in[0,H-\tau]$. It then follows from \eqref{eqInterest4} and \eqref{ZProcess} that $V$ and $Z$ satisfy the stochastic differential equations
\begin{align*}
    dZ_t &= \left(r-\tfrac{1}{2}V_t\right)dt + \sqrt{V_t} dB^{P\ast}_t, \\
    dV_t &= \tilde{a}\left(\tilde{b}\sigma_P^2 -  V_t\right) dt + \sigma_I\sigma_P \sqrt{V_t}dW^{I\ast}_t, 
\end{align*}
with initial conditions $Z_0=x$ and $V_0=\sigma_P^2\phi_I(0)$. This is the volatility and log price, respectively, of a Heston model with zero correlation. The natural filtration in this model is $\mathbb{F}^{\mathrm{H}}=(\mathcal{F}^{\mathrm{H}}_t)_{t\in[0,H-\tau]}$, given by
\begin{equation} \label{eq:FZV-FB-FI}
 \mathcal{F}^{\mathrm{H}}_t = \mathcal{F}^{B\ast}_t \vee \mathcal{F}^{I\ast}_t \text{ for all }t\in[0,H-\tau],
\end{equation}
where $\mathbb{F}^{B\ast}=(\mathcal{F}^{B\ast}_t)_{t\in[0,H-\tau]}$ is the filtration generated by $B^{P\ast}$, and the conditional characteristic function of $Z$ is
\begin{equation} \label{eq:char-fun-Z}
 \Phi^{Z|\mathbb{F}^{\mathrm{H}}}_{t|s}(\lambda) = e^{i\lambda Z_s + i\lambda r(t-s) + A(t-s) + I_sB(t-s)}
\end{equation}
for all $\lambda\in\mathbb{R}$ and $0\le s \le t \le H-\tau$ \cite[cf.][]{Albrecher_Mayer_Schoutens_Tistaert2007}. Observe further from Proposition \ref{prop:FP-FB-Fd} and \eqref{eq:FZV-FB-FI} that
$
 \mathcal{F}^{\mathrm{d}}_{s+\tau} = \mathcal{F}^{P\ast}_\tau \vee \mathcal{F}^{\mathrm{H}}_s.
$
The random variable $Z_t$ is independent of $\mathcal{F}^{P\ast}_\tau$ by the independence of the increments of the Brownian motion $W^{P\ast}$ and the independence between $W^{I\ast}$ and $W^{P\ast}$. It follows that
\begin{align} \label{eq:Phi-Z}
 \Phi^{Z|\mathbb{F}^{\mathrm{d}}}_{t|s+\tau}(\lambda)
 &= \Phi^{Z|\mathbb{F}^{\mathrm{H}}}_{t|s}(\lambda) 
\end{align}
\cite[Section 9.7]{Williams1991}.

Assume now that $t\in(\tau,H]$. We have
\[
X_t = X_\tau - x + Z_{t-\tau} \text{ for all } t\in(\tau,H], 
\]
where $Z_{t-\tau}$ is given by \eqref{ZProcess} (see Remark \ref{ChangeBM}). For every $s\in[\tau,t)$ we have from \eqref{eq:char-fun-Z} and \eqref{eq:Phi-Z} that
\begin{align}
 \Phi^{X|\mathbb{F}^{\mathrm{d}}}_{t|s}(\lambda)
 &= e^{i\lambda (X_\tau - x)}\Phi^{Z|\mathbb{F}^{\mathrm{d}}}_{t-\tau|s}(\lambda) \nonumber\\
 &= e^{i\lambda (X_\tau - x)}\Phi^{Z|\mathbb{F}^{\mathrm{H}}}_{t-\tau|s-\tau}(\lambda) \nonumber\\
 &= e^{i\lambda X_s + i\lambda r(t-s) + A(t-s) + I_{s-\tau}B(t-s)}. \label{eq:Phi-d-after-tau}
\end{align}
 For $s\in[0,\tau]$ we use the tower property of conditional expectation and \eqref{eq:Phi-d-after-tau} to obtain
\begin{align*}
 \Phi^{X|\mathbb{F}^{\mathrm{d}}}_{t|s}(\lambda)
 &= E_{\mathds{Q}}\left(\Phi^{X|\mathbb{F}^{\mathrm{d}}}_{t|\tau}(\lambda)\middle|\mathcal{F}^{\mathrm{d}}_s\right) = e^{i\lambda r(t-\tau) + A(t-\tau) + \phi_I(0)B(t-\tau)}\Phi^{X|\mathbb{F}^{\mathrm{d}}}_{\tau|s}(\lambda).
\end{align*}
This completes the proof.
\end{proof}

The conditional characteristic function of the log price with respect to the general filtration $\mathbb{F}$ defined in \eqref{eq:filtration} is given as follows. It reflects the conditional normality of the log returns on short maturities, whilst for longer maturities the Heston-style dynamics come into play.

\begin{theorem} \label{th:char-fun-general}
Fix any $\lambda\in\mathds{R}$. For all $0\le s\le t \le H$, the conditional characteristic function of the log price $X$ with respect to $\mathbb{F}$ is given by
\[
 \Phi^{X|\mathbb{F}}_{t|s}(\lambda) = 
 \begin{cases}
   1 & \text{if }s=t,\\
   e^{i\lambda X_s + i\lambda r(t-s) - \frac{1}{2}\sigma_P^2(i\lambda + \lambda^2)\int_s^t I_{u-\tau}du} & \text{if } t-\tau\le s<t, \\
  e^{i\lambda r(t-s-\tau) + A(t-s-\tau) + I_sB(t-s-\tau)}\Phi^{X|\mathbb{F}}_{s+\tau|s}(\lambda) & \text{if } s< t-\tau.
 \end{cases} 
\]
where the functions $A,B:[0,H]\rightarrow\mathds{R}$ are defined in Theorem \ref{th:char-fun-delay}.
\end{theorem}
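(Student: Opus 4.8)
The plan is to follow the three cases exactly as stated, with the boundary case $s=t$ being immediate and the substance lying in the other two. Throughout I would work under $\mathds{Q}$ with the representation \eqref{eq:PriceEquation-log} of the log price, and exploit that the volatility integrand $\sqrt{I_{u-\tau}}$ is a functional of $W^{I\ast}$ alone, which is independent of the price Brownian motion $W^{P\ast}$ under $\mathds{Q}$.

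For the case $t-\tau\le s<t$, every value $I_{u-\tau}$ appearing in \eqref{eq:PriceEquation-log} for $u\in[s,t]$ has index $u-\tau\le t-\tau\le s$, so the whole integrand is $\mathcal{F}^{I\ast}_s$-measurable, hence $\mathcal{F}_s$-measurable. Conditional on $\mathcal{F}_s$ the increment $X_t-X_s$ is therefore an integral of a (conditionally) deterministic integrand against the increments of $W^{P\ast}$ on $(s,t]$, which are independent of $\mathcal{F}_s$. By the second assertion of Remark~\ref{RemarkNormal} this increment is conditionally normal with mean $r(t-s)-\tfrac12\sigma_P^2\int_s^t I_{u-\tau}\,du$ and variance $\sigma_P^2\int_s^t I_{u-\tau}\,du$. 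Multiplying by the $\mathcal{F}_s$-measurable factor $e^{i\lambda X_s}$ and inserting the Gaussian characteristic function gives the stated expression, exactly mirroring \eqref{eq:normal-delay}.

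For the case $s<t-\tau$, I would first enlarge the conditioning to $\mathcal{G}=\mathcal{F}^{I\ast}_{t-\tau}\vee\mathcal{F}^{P\ast}_s$, under which the entire relevant interest path $\{I_v:v\in[s-\tau,t-\tau]\}$ is known. Since $\mathbb{F}^{I\ast}$ is independent of $W^{P\ast}$, Remark~\ref{RemarkNormal} again yields conditional normality, so $E_{\mathds{Q}}(e^{i\lambda X_t}\mid\mathcal{G})=e^{i\lambda X_s+i\lambda r(t-s)-\frac12\sigma_P^2(i\lambda+\lambda^2)\int_s^t I_{u-\tau}\,du}$. Descending from $\mathcal{G}$ to $\mathcal{F}_s$ by the tower property, I split $\int_s^t I_{u-\tau}\,du=\int_{s-\tau}^{s}I_v\,dv+\int_{s}^{t-\tau}I_v\,dv$. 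The first integral, together with $X_s$ and the drift $i\lambda r(t-s)$, forms an $\mathcal{F}_s$-measurable prefactor that factors as $e^{i\lambda r(t-s-\tau)}\,\Phi^{X|\mathbb{F}}_{s+\tau|s}(\lambda)$, recognising $\Phi^{X|\mathbb{F}}_{s+\tau|s}(\lambda)$ as the second case evaluated at target time $s+\tau$ (where $\int_s^{s+\tau}I_{u-\tau}\,du=\int_{s-\tau}^s I_v\,dv$). What remains is the factor $E_{\mathds{Q}}\big(e^{-\frac12\sigma_P^2(i\lambda+\lambda^2)\int_s^{t-\tau}I_v\,dv}\mid\mathcal{F}_s\big)$.

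To evaluate this last factor I would extract the integrated-CIR Laplace transform already implicit in the Heston characteristic function \eqref{eq:char-fun-Z}: conditioning \eqref{eq:char-fun-Z} first on the variance path and integrating out the CIR process gives the identity $E_{\mathds{Q}}\big(e^{-\frac12\sigma_P^2(i\lambda+\lambda^2)\int_{s'}^{t'}I_u\,du}\mid\mathcal{F}^{I\ast}_{s'}\big)=e^{A(t'-s')+I_{s'}B(t'-s')}$. Because this integral is a functional of $W^{I\ast}$ only, conditioning on $\mathcal{F}_s$ coincides with conditioning on $\mathcal{F}^{I\ast}_s$, so with $s'=s$ and $t'=t-\tau$ the remaining factor equals $e^{A(t-s-\tau)+I_sB(t-s-\tau)}$; combining the three factors yields the claimed recursion. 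I expect the main obstacle to be the bookkeeping in the case $s<t-\tau$: justifying conditional normality under the enlarged filtration $\mathcal{G}$ via Remark~\ref{RemarkNormal}, cleanly isolating $\Phi^{X|\mathbb{F}}_{s+\tau|s}(\lambda)$ from the split integral, and legitimately reducing the conditioning from $\mathcal{F}_s$ to $\mathcal{F}^{I\ast}_s$ using the independence of $W^{P\ast}$ and $W^{I\ast}$.
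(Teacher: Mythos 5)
Your argument is correct, and the second case ($t-\tau\le s<t$) coincides with the paper's: both rest on Remark~\ref{RemarkNormal} and the observation that $\int_s^t I_{u-\tau}\,du$ is already $\mathcal{F}_s$-measurable there. For the main case $s<t-\tau$, however, you take a genuinely different route. The paper's proof is a two-line tower argument: since $\mathcal{F}_s\subseteq\mathcal{F}^{\mathrm{d}}_{s+\tau}$, it conditions through the delayed filtration at time $s+\tau$, invokes the third case of Theorem~\ref{th:char-fun-delay} for $\Phi^{X|\mathbb{F}^{\mathrm{d}}}_{t|s+\tau}$, and pulls the $\mathcal{F}_s$-measurable factor $e^{A(t-s-\tau)+I_sB(t-s-\tau)}$ outside the remaining expectation, which is exactly $\Phi^{X|\mathbb{F}}_{s+\tau|s}(\lambda)$. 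You instead condition on the entire relevant interest path via $\mathcal{G}=\mathcal{F}^{I\ast}_{t-\tau}\vee\mathcal{F}^{P\ast}_s$, split the integrated variance at the interest-clock time $s$, identify the known piece with $e^{i\lambda r(t-s-\tau)}\Phi^{X|\mathbb{F}}_{s+\tau|s}(\lambda)$, and evaluate the residual factor as the conditional Laplace transform of the integrated Cox--Ingersoll--Ross process, extracted from the zero-correlation Heston characteristic function \eqref{eq:char-fun-Z} and reduced from $\mathcal{F}_s$ to $\mathcal{F}^{I\ast}_s$ by independence. In effect you re-derive inline the content that the paper packages into Theorem~\ref{th:char-fun-delay}; your version is longer and carries more bookkeeping, but it makes the key identity
\[
E_{\mathds{Q}}\left(e^{-\frac{1}{2}\sigma_P^2(i\lambda+\lambda^2)\int_s^{t-\tau}I_v\,dv}\,\middle|\,\mathcal{F}^{I\ast}_s\right)=e^{A(t-s-\tau)+I_sB(t-s-\tau)}
\]
explicit and is self-contained given the Heston formula, whereas the paper's argument is shorter because it leverages the already-established delayed-filtration result. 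Both are valid; all your measurability claims (in particular that $X_s$ and $\int_{s-\tau}^sI_v\,dv$ are $\mathcal{G}$- respectively $\mathcal{F}_s$-measurable, and that conditioning the purely $W^{I\ast}$-measurable integral on $\mathcal{F}_s$ reduces to conditioning on $\mathcal{F}^{I\ast}_s$) check out.
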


\begin{proof}
If $\max\{t-\tau,0\}\le s<t\le H$, we have by Remark \ref{RemarkNormal} that
\begin{equation} \label{eq:normal-general}
 \left.X_t - X_s\right|\mathcal{F}_s \sim \N\left(r(t-s) -\tfrac{1}{2}\sigma_P^2 \int_s^t I_{u-\tau}du, \sigma_P^2 \int_s^t I_{u-\tau} du\right),
\end{equation}
whence
\[
 \Phi^{X|\mathbb{F}}_{t|s}(\lambda) = e^{i\lambda X_s + i\lambda r(t-s) - \frac{1}{2}\sigma_P^2(i\lambda + \lambda^2)\int_s^t I_{u-\tau}du}.
\]

If $0\le s< t-\tau\le t\le H$, it follows from the tower property and Theorem~\ref{th:char-fun-delay} that
\begin{align*}
 \Phi^{X|\mathbb{F}}_{t|s}(\lambda)
 &= E_{\mathds{Q}}\left(\Phi^{X|\mathbb{F}^{\mathrm{d}}}_{t|s+\tau}(\lambda)\middle|\mathcal{F}_s\right) = e^{i\lambda r(t-s-\tau) + A(t-s-\tau) + I_sB(t-s-\tau)}\Phi^{X|\mathbb{F}}_{s+\tau|s}(\lambda).
\end{align*}
This completes the proof. 
\end{proof}

\subsection{Option pricing}

We now briefly discuss the basic notions of option pricing, as they apply to the options that are currently available on Bitcoin, namely European call options (with put options priced by means of put-call parity), using the pricing measure $\mathds{Q}$. 

The price at time $t\ge0$ under $\mathds{Q}$ of an option with expiration date $T\in[t,H]$ and payoff function $f:(0,\infty)\rightarrow\mathds{R}$ is the discounted expectation
\[
\Pi_{T|t}(f|\mathbb{G}) = e^{-r(T-t)} E_{\mathds{Q}}\left(f(P_T)\middle|\mathcal{G}_t\right)
\]
for all $t\in[0,H]$, where $P$ is the price process and $(\mathbb{G},\mathcal{G}_t)=(\mathbb{F},\mathcal{F}_t)$ or $(\mathbb{G},\mathcal{G}_t)=(\mathbb{F}^{\mathrm{d}},\mathcal{F}^{\mathrm{d}}_t)$.

Careful comparison of the conditional characteristic functions in Theorems~\ref{th:char-fun-delay} and \ref{th:char-fun-general} reveal that $\Phi^{X|\mathbb{F}^{\mathrm{d}}}_{T|0}=\Phi^{X|\mathbb{F}}_{T|0}$ for all $T\in[0,H]$. It follows that 
\[
 \Pi_{T|0}(f|\mathbb{F}) = \Pi_{T|0}(f|\mathbb{F}^{\mathrm{d}}),
\]
and so the price at time $0$ is the same, irrespective of the filtration (general or delay) used. This means that the choice of filtration does not arise when pricing options at time $0$. More generally, both the interest and price processes are adapted to the general filtration $\mathbb{F}$, and $\mathcal{F}_t \supseteq \mathcal{F}^{\mathrm{d}}_t$ for all $t\in[0,T]$, so it seems natural to use it for pricing in practice, rather than the delayed filtration~$\mathbb{F}^{\mathrm{d}}$. It is however worth noticing that $\mathcal{F}_t \subseteq \mathcal{F}^{\mathrm{d}}_{t+\tau}$ whenever $t\le H-\tau$, which means that
\[
 \Pi_{T|t}(f|\mathbb{F}) = e^{-r(\tau-t)} E_{\mathds{Q}}\left(\Pi_{T|t+\tau}(f|\mathbb{F}^{\mathrm{d}})\middle|\mathcal{F}_t\right) \text{ when } t\le T-\tau.
\]
Thus, pricing under the delayed filtration may be useful even if the intention is to price under the general filtration. In any case, the technical considerations for option pricing are similar for either choice of filtration.

Inspection of the characteristic functions in Theorems \ref{th:char-fun-delay} and \ref{th:char-fun-general} reveal that the conditional distribution of the price is lognormal in certain circumstances when $T-t\le \tau$ (see~\eqref{eq:normal-delay} and \eqref{eq:normal-general}). When this is the case, one can calculate option prices directly by using the methodology of the Black-Scholes model with time-varying coefficients \cite[see, for example][Section~8.8]{Wilmott2006}. This leads to explicit option pricing formulae for a large class of options.

Methods based on conditional characteristic function can be used to price options where explicit formulae are not available. The numerical results in Section \ref{sec:RealData} are based on a damped Fourier transform-based method proposed by \cite{carr1999option}. \cite{Rouah2013} and \cite{zhu2009applications} provide more details on this and other methods.

\section{Application to data} \label{sec:RealData}

In this section, we apply the methods of Sections \ref{sec:Inference} and \ref{OptionPricing} to daily data for the two year period from August 2020 to August 2022. We use the historical prices of Bitcoin in US dollars, published on Coinmetrics (\texttt{charts.coinmetrics.io/network-data}), and calculated by their own methodology using the most important exchanges (more details at \texttt{https://docs.coinmetrics.io/methodologies/reference-rates/}). The prices are closing prices in the sense that they are taken at the time when the New York market exchange closes (16:00 EST). The prices appear in Figure~\ref{fig:Bitcoin-prices}.

The period from August 2020 to August 2022 was turbulent for Bitcoin. It initially recovered from low values observed during the COVID-19 pandemic, with all lost value regained, and the price continuing to rise, amid signs of growing acceptance among financial institutions and public listing of the Coinbase exchange, reaching a new all-time high in April 2021 \citep{WSJ2021-02-16}. The price dropped to a lower level soon afterwards, amid a new set of regulations from the Chinese government cracking down on cryptocurrencies, and the announcement by the car manufacturer Tesla that it had suspended vehicle purchases using Bitcoin (introduced in February 2021) due to environmental concerns \citep{CNBC2021-02-08, CNBC2021-05-19, CNBC2021-05-19-Pound, Reuters2021-09-24}.

The price of Bitcoin soon recovered, buoyed by the listing of the first Bitcoin based exchange traded fund and rising demand \citep{Forbes2021-10-20,Forbes2021-09-02,CNET2021-10-18}. After reaching new all-time highs in October and November 2022, Bitcoin prices have been in decline ever since. There are a number of contributing factors for this, such as sensitivity to wider economic conditions, which have been difficult due to concerns about post-COVID economic growth, economic effects from the war in Ukraine and, latterly, difficulties experienced by crypto exchange Binance and the crypto lender Celsius, and the Coinbase exchange announcing redundancies of a large part of its workforce \citep{BBC2022-01-07,WSJ-2022-01-23,TC2022-04-06,Sky2022-06-13,BBC2022-06-14}.

% [inline block 1: 2 envs, 50190 chars -> data_tex | \begin{filecontents}[overwrite]{Bitcoin-proxies.csv} Time,Log Miner Revenue (USD),Log Volume,Wikipedia Views,Bitcoin...]

\end{center}
\caption{Bitcoin prices (US dollars), 1 August 2020---1 August 2022}
\label{fig:Bitcoin-prices}
\end{figure}

\subsection{Proxies for market attention}
\label{sec:proxies}

\newcommand{\WV}{Wikipedia views} \newcommand{\WVlower}{Wikipedia views}
\newcommand{\LV}{Volume} \newcommand{\LVlower}{volume}
\newcommand{\MR}{Miner revenue} \newcommand{\MRlower}{miner revenue}
\newcommand{\AAdd}{Active addresses} \newcommand{\AAddlower}{active addresses}
\newcommand{\NB}{Number of blocks} \newcommand{\NBlower}{number of blocks}
\newcommand{\BW}{Block weight} \newcommand{\BWlower}{block weight}
\newcommand{\DW}{Deposits and withdrawals} \newcommand{\DWlower}{deposits and withdrawals}
\newcommand{\TF}{Total fees} \newcommand{\TFlower}{total fees}

Whilst market interest itself cannot be observed, it can be approximated by means of proxies that can either be observed or extracted using data mining techniques. A number of such proxies have been studied in the literature, for example, trading volume and number of Google searches \citep{figa2019does,figa2019disentangling}, number of Wikipedia views \citep{kristoufek2015main} or number of tweets \citep{al2021cryptocurrency}. Further examples can be found in the Introduction. In this paper, we consider the proxies listed in Table \ref{table:proxies-info}; these proxies were selected among many possibilities because they appear to be well suited to modeling by a Cox-Ingersoll-Ross process (more details below).

Data for a representative sample consisting of three of the proxies, namely, \WVlower, \LVlower\ and \MRlower, are illustrated in Figure \ref{fig:proxies}. The daily number of \WVlower\ is reasonably consistent over time, and spikes (and periods of higher activity, such as in the early half of 2021) correspond closely to news events involving Bitcoin (summarised above). \LV\ reaches its peak in early 2021, and then gently declines over the rest of the period. Of the three proxies, the graph of \MRlower\ that of Bitcoin the closest, in the sense that it has low initial and final values and two peaks, separated by a deep trough.

\begin{table}
%\begin{sidewaystable}
\begin{tabularx}{\linewidth}{>{\raggedright}p{2.25cm}Xd{3}d{3}}
 &  & \multicolumn{2}{c}{\textbf{\boldmath $p$-value} } \\
\textbf{Proxy} & \textbf{Description (Source)} & \multicolumn{1}{c}{\textbf{CIR}} & \multicolumn{1}{c}{\textbf{GBM}} \\
\hline
\WV & Daily number of views in Wikipedia of the keyword ``Altcoin''. (Pageviews Analysis, \texttt{https://pageviews.toolforge.org}) & 0.327	& 0.248 \\
\LV & Natural logarithm of the daily total value in United State dollars of trading volume on major Bitcoin exchanges. (Blockchain, \texttt{www.blockchain.com/charts/trade-volume}) & 0.566 & 0.850 \\
\MR & Natural logarithm of the daily total value in US dollars of all miner revenue (fees plus newly issued Bitcoins). (Coinmetrics) & 0.752	& 0.697 \\
\AAdd & Daily number of active addresses that participate in successful transactions on the blockchain network, either as a sender or as a receiver. (Coinmetrics) & 0.553 & 0.021 \\
\NB & Number of blocks created per day that were included in the main (base) chain. (Coinmetrics) & 0.312 & 0.579 \\
\TF & Natural logarithm of the daily total value in United State dollars of all fees paid to miners, transaction validators, stakers and/or block producers.  (Coinmetrics) & 0.121 & 0.048 \\
\BW & Daily mean weight (a dimensionless measure of a block's ``size'') of all blocks created. (Coinmetrics) & 0.919 & 0.041 \\
\DW & Natural logarithm of the daily average between the total number of Bitcoins withdrawn from exchanges and the total number of Bitcoin sent to exchanges. (Coinmetrics) & 0.255 & 0.025 \\
    \end{tabularx}

        \caption{Market attention proxies with $p$-values of Kolmogorov-Smirnov test applied to Gaussian residuals}
        \label{table:proxies-info}
%\end{sidewaystable}
\end{table}

\begin{figure}
\setlength{\tempvSpace}{1.2\vSpace}
\centering
\begin{tikzpicture}

   \pgfplotsset{
      every axis/.prefix style = {
      proxystyle,
      threeonpage}
    }
   \pgfplotstableread[col sep=comma,]{Bitcoin-proxies.csv} {\data}
   
	\begin{axis} [name=WV, ymin=0]
         \addplot[color1] table [x expr={\coordindex}, y={Wikipedia Views}] {\data};
    \end{axis}
	\node[align=center,anchor=north] at ($(WV.north)+(0,0.5\vSpace)$) {(a) \WV};
	
	\begin{axis} [name=LV, anchor=north, at = {($(WV.south)+(0,-\tempvSpace)$),}]
         \addplot[color2] table [x expr={\coordindex}, y={Log Volume}] {\data};
    \end{axis}
	\node[align=center,anchor=north] at ($(LV.north)+(0,0.5\vSpace)$) {(b) \LV\ ($\ln$ USD)};

	\begin{axis} [name=MR, anchor=north, at = {($(LV.south)+(0,-\tempvSpace)$),}]
         \addplot[color3] table [x expr={\coordindex}, y={Log Miner Revenue (USD)}] {\data};
    \end{axis}
	\node[align=center,anchor=north] at ($(MR.north)+(0,0.5\vSpace)$) {(c) \MR\ ($\ln$ USD)};
\end{tikzpicture}

\caption{\WV, \LVlower\ ($\ln$ USD), \MRlower\ ($\ln$ USD), 1 August 2020---1 August 2022}
\label{fig:proxies}
\end{figure}

In order to verify that the market attention proxies listed in Table \ref{table:proxies-info}, we use a method proposed by \cite{lindstrom2004statistical} involving generalized Gaussian residuals. This method uses the transition distribution function of the diffusion process to generate the Gaussian residuals. This method involves estimating the parameters of the Cox-Ingersoll-Ross process for each proxy, using the techniques of Section~\ref{EST_INT}. The generalised residuals are then computed by applying the conditional cumulative distribution function of the estimated Cox-Ingersoll-Ross process to the data points (see \eqref{CH2:DENCIR} for the accompanying probability density function), followed by the inverse distribution function of the standard normal distribution. It is then possible to test whether the proxy follows a Cox-Ingersoll-Ross process by testing whether the resulting generalised residuals have the properties of a sample from the standard normal distribution. We use the Kolmogorov-Smirnov test \citep[cf.][]{massey1951kolmogorov} for this. The resulting $p$-values appear in Table \ref{table:proxies-info}; the values suggest that the hypothesis or normality (and hence, the hypothesis that the estimated processes fit the data) cannot be rejected at a significance level of 0.1 or lower. In addition, we also verify if the selected market proxies can follow a geometric Brownian motion. The p-values given by the Kolmogorov-Smirnov test are shown in Table \ref{table:proxies-info}. We observe that the hypothesis that market attention follows a geometric Brownian motion cannot be rejected in half of the cases. However, for active addresses, total fees, block weight, as well as deposits and withdrawals, we can reject the null hypothesis with a significance level of 0.05.

\subsection{Market option prices}

Market prices for vanilla European call and put options on Deribit for the first day of each month are published by Tardis ( \texttt{https://docs.tardis.dev/historical-data-details/deribit}); data is taken from 1 August 2021 to 1 August 2022. Tardis publishes intraday data, with bid and ask prices varying throughout the day. For each date and each option, we take the first ask and bid prices of the option in the order book starting at 00:00 UTC.

The underlying asset of the options offered by Deribit is a Bitcoin index based on the prices of Bitcoin in US dollars on a number of exchanges (see  \texttt{https://legacy.deribit.com/pages/docs/options}). The historical values of this index is also published by Tardis. This value is used as the initial value of the underlying for option pricing purposes and to convert option prices from their published values (in Bitcoin) to US dollars. For each option, we take the last value of the Deribit Bitcoin index that appears in the order book before the time at which the option price is taken.

We use mid prices to evaluate our model, but the bid-ask spread in some Bitcoin options can be quite high due to low liquidity. We only take options where the bid and ask prices are within 10\%, in other words,
$$\frac{\text{AskPrice} - \text{BidPrice}}{ \text{AskPrice}} < 0.1.$$
The number of options and their prices vary considerably over time. For the period under consideration, the number of options for each month varies from 106 to 159. The market price for the options chosen also varies considerably due to the variation in Bitcoin prices. Bitcoin option prices on 1 August 2021 and 1 August 2022 are shown in Figure \ref{im:OptionPricesMarket}. Observe that option prices on 1 August 2021 tended to be higher than option prices on  1 August 2022; this is because on the first date, the price of Bitcoin was $\$41501.02$  and on the second date, the Bitcoin price was $\$23307.50$. Apart from that, on 1 August 2021, Bitcoin was in the middle of a bull rally, while one year later, the Bitcoin price was at the end of a decreasing movement; this has implications on the implied volatility as well. The implied volatility is shown in Figure \ref{img:implyVolMarket}. On 1 August 2021, the implied volatility was higher than on  1 August 2022. This is because, in August 2021, Bitcoin was in the middle of an increasing movement, while in August 2022, the price was more stable.

\begin{figure}
\centering
\centering\begin{tikzpicture}
\begin{axis}[
title={Option prices at 01-08-2021},xlabel={Strike},legend columns=5,ymajorgrids=true,width=12cm,height=8cm,scale=1,
xlabel={Strike},  ylabel={Market Price}, scaled x ticks = false ,  x tick label style={font=\tiny,rotate=90,/pgf/number format/fixed},
grid style=dashed,legend style= {at={(0.5,-0.3)},anchor=north} ,]\addplot[color=blue,mark=o,]
coordinates {
(38000, 477.278865)
(41000, 1359.2072025000002)
};
\addplot[color=orange,mark=x,]
coordinates {
(36000, 632.9242575000001)
(37000, 798.9233175)
(38000, 1027.6405425)
(39000, 1348.7168500000002)
(40000, 1701.60291)
(41000, 2147.7548925)
};
\addplot[color=green,mark=square,]
coordinates {
(32000, 435.783915)
(34000, 643.312)
(36000, 996.06024)
(38000, 1504.5200000000002)
(40000, 2251.5111675000003)
};
\addplot[color=red,mark=+,]
coordinates {
(28000, 321.4730225)
(32000, 632.84999)
(34000, 923.4308475)
(35000, 1099.856)
(36000, 1327.37248)
(38000, 1899.5779725)
(40000, 2687.2875225000003)
};
\addplot[color=purple,mark=diamond,]
coordinates {
(26000, 653.7795075)
(28000, 861.3285574999999)
(30000, 1130.3406275)
(32000, 1494.14112)
(34000, 1970.318525)
(36000, 2552.404365)
(38000, 3267.46665)
(40000, 4171.002255)
};
\addplot[color=brown,mark=square*,]
coordinates {
(25000, 944.1984825)
(30000, 1792.964215)
(35000, 3257.947035)
(40000, 5467.955692500001)
};
\addplot[color=pink,mark=triangle,]
coordinates {
(18000, 674.4274875000001)
(20000, 881.8307999999998)
(22000, 1130.9433975)
(24000, 1483.8634525)
(26000, 1877.9885775)
(28000, 2354.9416975)
(30000, 2925.6513)
(32000, 3579.5914875)
(34000, 4337.012295)
(36000, 5167.062495)
(38000, 6111.2445975)
(40000, 7127.384550000001)
};
\addplot[color=yellow,mark=triangle*,]
coordinates {
(25000, 2448.64809)
(30000, 4014.8124975)
(35000, 6100.294199999999)
(40000, 8622.1464525)
};
\addplot[color=black,mark=diamond*,]
coordinates {
(30000, 5083.5491)
(35000, 7356.319897500001)
(40000, 9991.7292825)
};
\legend{T = 0.01,T = 0.03,T = 0.05,T = 0.07,T = 0.15,T = 0.24,T = 0.42,T = 0.65,T = 0.9}
\addplot[color=blue,mark=o,]
coordinates {
(42000, 1421.30376)
(43000, 1058.20818)
(44000, 757.4208075000001)
(45000, 539.552)
};
\addplot[color=orange,mark=x,]
coordinates {
(42000, 2293.0136775)
(43000, 1898.7398325)
(44000, 1566.7197525)
(45000, 1286.57781)
(46000, 1037.74525)
(48000, 653.688)
};
\addplot[color=green,mark=square,]
coordinates {
(42000, 2956.5736125000003)
(44000, 2168.1539824999995)
(46000, 1618.33503)
(48000, 1172.32867)
(50000, 830.0502)
};
\addplot[color=red,mark=+,]
coordinates {
(42000, 3496.25705)
(44000, 2728.53295)
(45000, 2417.7635275)
(46000, 2116.704)
(50000, 1234.6996725)
(55000, 612.2696974999999)
};
\addplot[color=purple,mark=diamond,]
coordinates {
(42000, 5260.6320000000005)
(44000, 4472.056)
(46000, 3797.479665)
(48000, 3247.688)
(50000, 2748.82115)
(56000, 1722.354165)
(60000, 1276.2243225)
(64000, 944.1984825)
};
\addplot[color=brown,mark=square*,]
coordinates {
(45000, 5779.432)
(50000, 4357.76355)
(60000, 2510.992)
(70000, 1514.867895)
(80000, 995.92248)
};
\addplot[color=pink,mark=triangle,]
coordinates {
(42000, 9224.264)
(44000, 8476.089049999999)
(46000, 7822.44086)
(48000, 7200.685485)
(50000, 6649.7099625)
(52000, 6183.425499999999)
(54000, 5749.108685)
(56000, 5345.8068625000005)
(60000, 4627.0939)
(64000, 4035.73885)
(80000, 2459.0237175)
(100000, 1535.61951)
(120000, 1089.4597875000002)
};
\addplot[color=yellow,mark=triangle*,]
coordinates {
(50000, 9150.4413)
(60000, 6982.797307500001)
(70000, 5436.828810000001)
(80000, 4357.9116)
(100000, 2946.6782100000005)
};
\addplot[color=black,mark=diamond*,]
coordinates {
(50000, 11443.23895)
(60000, 9212.63592)
(70000, 7604.61845)
};
\end{axis}
\end{tikzpicture}

\vfill
\centering
\centering\begin{tikzpicture}
\begin{axis}[
title={Option prices at 01-08-2022},xlabel={Strike},legend columns=5,ymajorgrids=true,width=12cm,height=8cm,scale=1,
xlabel={Strike},  ylabel={Market Price}, scaled x ticks = false ,  x tick label style={font=\tiny,rotate=90,/pgf/number format/fixed},
grid style=dashed,legend style= {at={(0.5,-0.3)},anchor=north} ,]\addplot[color=blue,mark=o,]
coordinates {
(21500, 192.2980125)
(22000, 291.372625)
(22500, 431.231485)
(23000, 623.4681350000001)
};
\addplot[color=orange,mark=x,]
coordinates {
(19500, 180.46518250000003)
(20000, 244.753005)
(20500, 314.682435)
(21000, 413.7491275)
(21500, 536.12563)
(22000, 681.8119425)
(22500, 862.4526099999999)
(23000, 1078.0343125)
};
\addplot[color=green,mark=square,]
coordinates {
(19000, 279.62748)
(20500, 530.2763375)
(21000, 658.4942225)
(21500, 798.3609925000001)
(22000, 973.1845675)
(22500, 1165.471)
(23000, 1392.7037875)
};
\addplot[color=red,mark=+,]
coordinates {
(19000, 407.80390000000006)
(20000, 594.375675)
(21000, 850.7938299999998)
(22000, 1206.2329874999998)
(23000, 1637.5141525000004)
};
\addplot[color=purple,mark=diamond,]
coordinates {
(16000, 448.7138425)
(17000, 588.5727025000001)
(18000, 780.869255)
(19000, 1013.964555)
(20000, 1305.3493600000002)
(21000, 1643.3416050000003)
(22000, 2051.23864)
(23000, 2523.2566225)
};
\addplot[color=brown,mark=square*,]
coordinates {
(14000, 419.5765800000001)
(16000, 705.1175175)
(17000, 897.427685)
(18000, 1130.518995)
(19000, 1416.0709575)
(20000, 1742.3978324999998)
(21000, 2127.0073875)
(22000, 2564.0791000000004)
(23000, 3047.7393525)
};
\addplot[color=pink,mark=triangle,]
coordinates {
(14000, 815.8335500000001)
(15000, 1031.4590925)
(16000, 1264.5571925000002)
(17000, 1538.44746)
(18000, 1847.3024425)
(19000, 2196.9495925)
(20000, 2575.7340050000003)
(21000, 3001.1380375000003)
(22000, 3473.14083)
(23000, 3985.953569999999)
};
\addplot[color=yellow,mark=triangle*,]
coordinates {
(10000, 559.4320799999999)
(13000, 1112.72207)
(15000, 1636.0993175)
(16000, 1934.7142300000005)
(17000, 2278.5339275)
(18000, 2663.1457925000004)
(19000, 3071.0674675000005)
(20000, 3519.7813100000003)
(21000, 3991.8049625)
(22000, 4498.79333)
(23000, 5034.91896)
};
\addplot[color=black,mark=diamond*,]
coordinates {
(10000, 798.3514025000001)
(13000, 1497.6373025)
(15000, 2113.00485)
(16000, 2488.1749025)
(17000, 2875.165215)
(18000, 3298.1428450000003)
(19000, 3741.003015)
(20000, 4213.2481575)
(21000, 4737.7188825)
(22000, 5219.712960000001)
(23000, 5779.4312925)
};
\legend{T = 0.01,T = 0.03,T = 0.05,T = 0.07,T = 0.16,T = 0.24,T = 0.41,T = 0.66,T = 0.91}
\addplot[color=blue,mark=o,]
coordinates {
(23500, 675.9844899999999)
(24000, 477.8511050000001)
(24500, 332.10807750000004)
(25000, 227.2612875)
};
\addplot[color=orange,mark=x,]
coordinates {
(23500, 1136.3064375)
(24000, 920.737495)
(24500, 745.9139200000001)
(25000, 594.400155)
(25500, 466.1592)
(26000, 361.287175)
};
\addplot[color=green,mark=square,]
coordinates {
(23500, 1462.6730075)
(24000, 1241.2473825)
(24500, 1048.94145)
(25000, 879.9453275000001)
(25500, 740.0864675)
(26000, 611.8825125)
(27000, 419.5765800000001)
(28000, 279.71436)
};
\addplot[color=red,mark=+,]
coordinates {
(24000, 1503.226035)
(25000, 1130.525785)
(26000, 833.3257075000001)
(27000, 611.8825125)
(28000, 442.88639)
(30000, 233.05830000000003)
};
\addplot[color=purple,mark=diamond,]
coordinates {
(24000, 2453.3575025)
(25000, 2051.25096)
(26000, 1701.6059099999998)
(28000, 1171.3109175)
(30000, 798.3561975)
(32000, 553.598725)
(34000, 390.432785)
(35000, 332.159235)
};
\addplot[color=brown,mark=square*,]
coordinates {
(24000, 3036.1027525000004)
(25000, 2628.1810775000004)
(26000, 2266.8790225)
(27000, 1952.1848625)
(28000, 1678.28616)
(29000, 1445.19086)
(30000, 1247.059855)
(32000, 926.5538175)
(34000, 705.1132825)
(36000, 530.2918075)
(38000, 413.7441575)
(40000, 326.33538)
};
\addplot[color=pink,mark=triangle,]
coordinates {
(24000, 4079.16775)
(25000, 3694.560505)
(26000, 3327.4354075)
(27000, 3006.96549)
(28000, 2709.7328625)
(30000, 2202.763815)
(32000, 1800.6611925)
(34000, 1474.3366274999998)
(35000, 1334.4786075)
(36000, 1212.11012)
(38000, 1013.964555)
(40000, 856.6252274999999)
(45000, 564.717995)
(50000, 402.0918075)
(60000, 233.0981)
};
\addplot[color=yellow,mark=triangle*,]
coordinates {
(24000, 5266.496080000001)
(25000, 4887.6553275)
(26000, 4527.930592500001)
(27000, 4192.1712)
(28000, 3889.40328)
(29000, 3610.35455)
(30000, 3356.57232)
(32000, 2906.0524975)
(34000, 2523.2869325)
(35000, 2342.621835)
(36000, 2191.09582)
(38000, 1905.5540775)
(40000, 1678.29624)
(45000, 1229.5777075)
(50000, 914.8990525)
(60000, 565.2561025)
};
\addplot[color=black,mark=diamond*,]
coordinates {
(24000, 6194.215272500001)
(25000, 5827.1075)
(26000, 5477.481049999999)
(27000, 5150.588430000001)
(28000, 4848.15344)
(29000, 4573.0744125)
(30000, 4312.31485)
(40000, 2450.2926225)
(45000, 1946.18293)
(50000, 1538.3002800000002)
};
\end{axis}
\end{tikzpicture}
\caption{Bitcoin Option Prices on 1 August 2021 and 1 August 2022}
\label{im:OptionPricesMarket}
\end{figure}

\begin{figure}[ht]

\centering\begin{tikzpicture}
\begin{axis}[
title={Implied volatility 1 August 2021},xlabel={Strike},legend columns=5,ymajorgrids=true,width=12cm,height=8cm,
 scaled x ticks = false ,  x tick label style={font=\tiny,rotate=90,/pgf/number format/fixed},
grid style=dashed,legend style= {at={(0.5,-0.3)},anchor=north} ,]\addplot[color=blue,mark=o,]
coordinates {
(38000, 0.9051967295805536)
(41000, 0.8299292266566862)
};
\addplot[color=orange,mark=x,]
coordinates {
(36000, 0.8826883733840732)
(37000, 0.8506443188906259)
(38000, 0.8309866367090696)
(39000, 0.8231285269913635)
(40000, 0.8074188911057723)
(41000, 0.8018722254135379)
};
\addplot[color=green,mark=square,]
coordinates {
(32000, 0.9463313528987498)
(34000, 0.8842686678627588)
(36000, 0.8451622642501008)
(38000, 0.8112278446152903)
(40000, 0.7943126453873789)
};
\addplot[color=red,mark=+,]
coordinates {
(28000, 1.0314902484892705)
(32000, 0.9062732034221201)
(34000, 0.8637926954759865)
(35000, 0.8409234894793729)
(36000, 0.8236705568540623)
(38000, 0.7984287623845057)
(40000, 0.7817776890515882)
};
\addplot[color=purple,mark=diamond,]
coordinates {
(26000, 0.9900896449560083)
(28000, 0.9421267419068642)
(30000, 0.8980797186240721)
(32000, 0.8652296003795373)
(34000, 0.8376253454902068)
(36000, 0.8138157178743617)
(38000, 0.7927448469075641)
(40000, 0.7843394048648387)
};
\addplot[color=brown,mark=square*,]
coordinates {
(25000, 0.9161850349567833)
(30000, 0.8387453991334834)
(35000, 0.7969964733962687)
(40000, 0.7758985332798554)
};
\addplot[color=pink,mark=triangle,]
coordinates {
(18000, 0.9609446536959699)
(20000, 0.9216077660710937)
(22000, 0.8859198459661664)
(24000, 0.8650067188657738)
(26000, 0.841092058289875)
(28000, 0.8215324872121381)
(30000, 0.8063496886468768)
(32000, 0.792713736372744)
(34000, 0.782345242843595)
(36000, 0.7713383786968218)
(38000, 0.7641537001043176)
(40000, 0.7561758946749245)
};
\addplot[color=yellow,mark=triangle*,]
coordinates {
(25000, 0.7972821571610139)
(30000, 0.7651755705071421)
(35000, 0.744424218834836)
(40000, 0.7258119233437592)
};
\addplot[color=black,mark=diamond*,]
coordinates {
(30000, 0.7457075515609504)
(35000, 0.7277151690372914)
(40000, 0.7102900423986632)
};
\legend{T = 0.01,T = 0.03,T = 0.05,T = 0.07,T = 0.15,T = 0.24,T = 0.42,T = 0.65,T = 0.9}
\addplot[color=blue,mark=o,]
coordinates {
(42000, 0.8520626418955431)
(43000, 0.8646675959524024)
(44000, 0.8654585805860611)
(45000, 0.873787414843718)
};
\addplot[color=orange,mark=x,]
coordinates {
(42000, 0.8398582482364406)
(43000, 0.8446134659689115)
(44000, 0.8511823453304744)
(45000, 0.8579728384477893)
(46000, 0.8584154239602556)
(48000, 0.8583729630127236)
};
\addplot[color=green,mark=square,]
coordinates {
(42000, 0.8440595116275863)
(44000, 0.8416303024240785)
(46000, 0.8578089576934692)
(48000, 0.8634141049320533)
(50000, 0.86497658352186)
};
\addplot[color=red,mark=+,]
coordinates {
(42000, 0.8434917691940338)
(44000, 0.8480165750220934)
(45000, 0.8533051334964107)
(46000, 0.8539104993375076)
(50000, 0.862292299799338)
(55000, 0.8731429690904655)
};
\addplot[color=purple,mark=diamond,]
coordinates {
(42000, 0.863220833700318)
(44000, 0.8627366681532066)
(46000, 0.8640958824986924)
(48000, 0.8704321724157515)
(50000, 0.8726290866739093)
(56000, 0.890705969804094)
(60000, 0.904202022586728)
(64000, 0.9146921252427286)
};
\addplot[color=brown,mark=square*,]
coordinates {
(45000, 0.8753011976052203)
(50000, 0.8848406005294578)
(60000, 0.9024985884101506)
(70000, 0.924091021266506)
(80000, 0.955364847246577)
};
\addplot[color=pink,mark=triangle,]
coordinates {
(42000, 0.893266682493168)
(44000, 0.8915251912731091)
(46000, 0.8929359925330843)
(48000, 0.8920397138936201)
(50000, 0.8936159179631594)
(52000, 0.8983665795130085)
(54000, 0.9019628581767833)
(56000, 0.9049284267593738)
(60000, 0.9113819482456126)
(64000, 0.9182727225588089)
(80000, 0.9458915980134119)
(100000, 0.9904231943979989)
(120000, 1.0374517896287476)
};
\addplot[color=yellow,mark=triangle*,]
coordinates {
(50000, 0.9052142408576075)
(60000, 0.9132317234438172)
(70000, 0.9213404510019596)
(80000, 0.9330932798337298)
(100000, 0.9542232734718823)
};
\addplot[color=black,mark=diamond*,]
coordinates {
(50000, 0.9174084863634939)
(60000, 0.9186231295346816)
(70000, 0.925501325728224)
};
\end{axis}
\end{tikzpicture}

\vfill
\centering\begin{tikzpicture}
\begin{axis}[
title={Implied volatility 1 August 2022},xlabel={Strike},legend columns=5,ymajorgrids=true,width=12cm,height=8cm, scaled x ticks = false ,  x tick label style={font=\tiny,rotate=90,/pgf/number format/fixed},
grid style=dashed,legend style= {at={(0.5,-0.3)},anchor=north} ,]

\addplot[color=blue,mark=o]
coordinates {
(21500, 0.8422363906879659)
(22000, 0.8215892627387488)
(22500, 0.8045172223280058)
(23000, 0.7937622490833398)
};
\addplot[color=orange,mark=x]
coordinates {
(19500, 0.8573162827441444)
(20000, 0.8480605474634514)
(20500, 0.8239536964446731)
(21000, 0.8105760926377944)
(21500, 0.7975848080521188)
(22000, 0.7829202662749833)
(22500, 0.7728252333634525)
(23000, 0.7652787882897447)
};
\addplot[color=green,mark=square]
coordinates {
(19000, 0.8477702783918482)
(20500, 0.7984618093674135)
(21000, 0.7897282259792106)
(21500, 0.7755680264990309)
(22000, 0.7685815808058909)
(22500, 0.7581050318015351)
(23000, 0.7526991619336205)
};
\addplot[color=red,mark=+]
coordinates {
(19000, 0.8177163077709757)
(20000, 0.7905881299714607)
(21000, 0.7676709160226527)
(22000, 0.7560253582147972)
(23000, 0.7408214626534423)
};
\addplot[color=purple,mark=diamond]
coordinates {
(16000, 0.8435464120505299)
(17000, 0.8136713567299573)
(18000, 0.7933043794872255)
(19000, 0.7729762496241359)
(20000, 0.7570999649387721)
(21000, 0.7398548456652531)
(22000, 0.7273099335685517)
(23000, 0.7168114040318264)
};
\addplot[color=brown,mark=square*]
coordinates {
(14000, 0.8625610882326581)
(16000, 0.8122777018368968)
(17000, 0.79057013369691)
(18000, 0.7714335165136886)
(19000, 0.7567089941652367)
(20000, 0.7415342445001851)
(21000, 0.7299051058328535)
(22000, 0.7196179893173661)
(23000, 0.7091287544607642)
};
\addplot[color=pink,mark=triangle]
coordinates {
(14000, 0.8158848897704288)
(15000, 0.8020785052122629)
(16000, 0.784586291309405)
(17000, 0.7700806342670057)
(18000, 0.7562534179720081)
(19000, 0.744030223830669)
(20000, 0.7307318312529443)
(21000, 0.7197366191677916)
(22000, 0.7107171703593491)
(23000, 0.7025624065384531)
};
\addplot[color=yellow,mark=triangle*]
coordinates {
(10000, 0.8572770375023334)
(13000, 0.800605515685068)
(15000, 0.7727056347555836)
(16000, 0.7591689364125911)
(17000, 0.7485357856127438)
(18000, 0.7399943209967657)
(19000, 0.7303901669358074)
(20000, 0.722668453390613)
(21000, 0.7140865424087346)
(22000, 0.7065032719452026)
(23000, 0.6991238588106491)
};
\addplot[color=black,mark=diamond*]
coordinates {
(10000, 0.8156882830903345)
(13000, 0.7705738210896664)
(15000, 0.7464354489857975)
(16000, 0.7405575843924174)
(17000, 0.7304382592544454)
(18000, 0.7236752411839827)
(19000, 0.715157895911961)
(20000, 0.7073231463089232)
(21000, 0.7029830663810134)
(22000, 0.6899832684238708)
(23000, 0.6831686911766021)
};
\legend{T = 0.01,T = 0.03,T = 0.05,T = 0.07,T = 0.16,T = 0.24,T = 0.41,T = 0.66,T = 0.91}

\addplot[color=blue,mark=o]
coordinates {
(23500, 0.7852491347921841)
(24000, 0.7842707978268948)
(24500, 0.791462501891705)
(25000, 0.7993687914080617)
};
\addplot[color=orange,mark=x]
coordinates {
(23500, 0.7590904623541735)
(24000, 0.7546100951251172)
(24500, 0.7569604740744353)
(25000, 0.7562894935302488)
(25500, 0.7543055393168638)
(26000, 0.7519733505313464)
};
\addplot[color=green,mark=square]
coordinates {
(23500, 0.7513223639772638)
(24000, 0.7463220315947602)
(24500, 0.7437036245634753)
(25000, 0.7411611545944154)
(25500, 0.7426242595920459)
(26000, 0.7397720484384733)
(27000, 0.7413870301900801)
(28000, 0.7408863509522325)
};
\addplot[color=red,mark=+]
coordinates {
(24000, 0.74199536202183)
(25000, 0.7367241515607063)
(26000, 0.732242175060305)
(27000, 0.7328104724858422)
(28000, 0.7331111910346308)
(30000, 0.7421375724081637)
};
\addplot[color=purple,mark=diamond]
coordinates {
(24000, 0.7305631276671725)
(25000, 0.7236545099147234)
(26000, 0.7170943965430275)
(28000, 0.7130556228415355)
(30000, 0.7118857013255246)
(32000, 0.7177487800521851)
(34000, 0.7269132558864652)
(35000, 0.733404741290333)
};
\addplot[color=brown,mark=square*]
coordinates {
(24000, 0.7312343713310957)
(25000, 0.7240882981910388)
(26000, 0.7179716450236965)
(27000, 0.7134758365962973)
(28000, 0.7100052546513508)
(29000, 0.7083974694542305)
(30000, 0.7082375890996797)
(32000, 0.7084563980411048)
(34000, 0.7150571630670025)
(36000, 0.7182143969439176)
(38000, 0.7276830251559714)
(40000, 0.7373104051206517)
};
\addplot[color=pink,mark=triangle]
coordinates {
(24000, 0.7343324786055336)
(25000, 0.7312791457031846)
(26000, 0.7259249180438972)
(27000, 0.7234333764021471)
(28000, 0.7202600948679736)
(30000, 0.7159757032589412)
(32000, 0.7148989134973459)
(34000, 0.7145611994879552)
(35000, 0.7144830138990322)
(36000, 0.7152868604331649)
(38000, 0.7198465555991337)
(40000, 0.725557540557616)
(45000, 0.7371331814989668)
(50000, 0.7545812944612748)
(60000, 0.7938085331549943)
};
\addplot[color=yellow,mark=triangle*]
coordinates {
(24000, 0.7410762267568322)
(25000, 0.7373012202365776)
(26000, 0.732311880298567)
(27000, 0.7294773447549253)
(28000, 0.7267475384193917)
(29000, 0.7243293865027572)
(30000, 0.7219430847417169)
(32000, 0.720817922918072)
(34000, 0.7190454246673885)
(35000, 0.717232626106172)
(36000, 0.7178545303274384)
(38000, 0.7168805856817958)
(40000, 0.7190538959872242)
(45000, 0.7231882499283873)
(50000, 0.727315624404546)
(60000, 0.7460221679813923)
};
\addplot[color=black,mark=diamond*]
coordinates {
(24000, 0.7401086594439917)
(25000, 0.7364739774612933)
(26000, 0.7325987476447913)
(27000, 0.7293618203059209)
(28000, 0.7265606956470119)
(29000, 0.7254630878557745)
(30000, 0.7233998792779742)
(40000, 0.714664829157678)
(45000, 0.7206914362273338)
(50000, 0.7227761286359821)
};
\end{axis}
\end{tikzpicture}
\caption{Implied volatility on 1 August 2021 and 1 August 2022}
\label{img:implyVolMarket}
\end{figure}

\subsection{Estimation and calibration}

For each choice of the proxy and for each date for which option data is available, the parameters $a$, $b$, $\sigma_I$, $\mu$, $\sigma_P$ and $\tau$ estimated from historical market data and using the techniques of Section \ref{sec:Inference}. A temporal window of one year is used; for example, to price options on 1 August 2021 (the case labeled as ``21-08'' in all graphs below), parameters are estimated using data from 1 August 2020 to 1 August 2021. Data is taken daily, i.e.~$\Delta=\nicefrac{1}{365}$.

As we have seen in Section \ref{RNPSect}, we can pick any value for the change-of-measure parameters $\lambda_{ab}$ and $\lambda_a$ defined in \eqref{ThetaI}, as long as \eqref{Cond12} is satisfied. This gives a certain flexibility to our model. The pricing measure corresponding to $\lambda_{ab}=0$ and $\lambda_a=0$ (and hence $\tilde{a}=a$ and $\tilde{b}=b$) has the property that the dynamics of the interest process is the same under the physical measure $\mathds{P}$ and the pricing measure $\mathds{Q}$. Below, we will refer to the model with this choice of parameters as the uncalibrated version of the model.

In general, it makes sense to choose values for $\lambda_{ab}$ and $\lambda_a$ (and hence $\tilde{a}$ and~$\tilde{b}$) that are optimal in some sense. For such choices, the drift term of the interest process (including the speed of mean reversion and the long term mean) will be different under the real-world and risk-neutral measures, but the diffusion part will be the same under both measures. For each date, we choose these parameters so as to minimise the root mean square error (RMSE) when using the model to approximate option prices, defined as 
$$\text{RMSE}=\sqrt{\frac{1}{n}\sum_{j=1}^n \left(\text{MidPrice}_j - \text{ModelPrice}_j\right)^2},$$
where $n$ is the number of options and $\text{MidPrice}_j=\tfrac{1}{2}(\text{AskPrice}_j + \text{BidPrice}_j)$. We refer to the model with this choice of parameters below as the calibrated version of the model. 

%, and in order to achieve some comparability across dates the numerical results below refer to the relative root mean square error (RRMSE) which we define as
%$$\text{RRMSE}=\frac{\text{RMSE}}{\sqrt{\sum_{j=1}^n  \text{MidPrice}_j^2 }}.$$
We take $r=0$ when pricing options, as real interest rates were very low during the period of consideration. The computer implementation of the estimation and calibration procedures was done in Python, and in particular heavy use was made of the sequential least squares optimisation procedure in SciPy \citep{2020SciPy-NMeth}.

The estimated parameters for \WVlower, \LVlower\ and \MRlower, are illustrated in Figure \ref{fig:InferenceResults}. The estimated parameters of the uncalibrated model are $\hat{a}$, $\hat{b}$, $\hat{\sigma}_I$, $\hat{\mu}$, $\hat{\sigma}_P$ and $\hat{\tau}$, and the calibrated model corresponds to the estimated parameters $\hat{\tilde{a}}$, $\hat{\tilde{b}}$, $\hat{\sigma}_I$, $\hat{\mu}$, $\hat{\sigma}_P$ and $\hat{\tau}$. The estimates $\hat{\tilde{a}}$ and $\hat{\tilde{b}}$ are clearly very different from $\hat{a}$ and $\hat{b}$ in most cases, which suggests that calibration considerably improves the fit to market option prices; this will also be demonstrated in the next section. The calibrated estimates $\hat{\tilde{a}}$ and $\hat{\tilde{b}}$ appear to fluctuate considerably more than their real-world counterparts $\hat{a}$ and $\hat{b}$, which suggests that the optimal risk-neutral measure varies considerably over time.
\begin{filecontents}[overwrite]{inference-a.csv}
Month,Log Miner Revenue (USD),Log Volume,Wikipedia Views
2021-08,8.342184688596092,67.85677161386609,32.74975407219762
2021-09,11.347313016276072,82.48718578159844,37.7736485422948
2021-10,17.76496360533444,126.07696489367623,46.15341088085359
2021-11,22.748574734719103,142.91214441467474,56.98739946071587
2021-12,33.758430829531925,154.154151016086,60.60835470931679
2022-01,36.73940100346216,164.97303451244693,56.85084306189742
2022-02,36.43812779909548,163.01252226435068,75.76151434302338
2022-03,39.17068696301458,186.2746838176408,76.55211731741556
2022-04,39.76240411012891,196.5419438048046,83.23653800772288
2022-05,48.22737909879447,222.7749093092948,137.8642664240502
2022-06,45.32296339216898,255.56198125921676,137.108756889146
2022-07,28.092430061378774,254.57331391150544,141.3309425860268
2022-08,21.49116374725952,245.49474617826627,142.1663293185358
\end{filecontents}
\begin{filecontents}[overwrite]{inference-atilde.csv}
Month,Log Miner Revenue (USD),Log Volume,Wikipedia Views
2021-08,48.778302352875,67.8567890538634,21.1939466666015
2021-09,12.3105596462557,10.448212406343,18.1051229647151
2021-10,32.7040216584717,126.076964174016,428.53078394358
2021-11,7.86146492290016,4.74791679117495,14.789272884587
2021-12,33.7583886939558,154.149307054018,463.414145862893
2022-01,36.7394007740906,164.973033317302,2184.53639269805
2022-02,36.4381330503009,163.012517798373,75.7615153353617
2022-03,58.9716391905576,186.274683009372,81.4452775268482
2022-04,155.43757230597,196.541969575917,83.2365274276406
2022-05,476.334681447995,222.77490602021,38.9200893538716
2022-06,45.3229431129951,255.561980425176,137.108754256686
2022-07,66.3358354050164,254.573277558796,609.880269251686
2022-08,48.5858390123127,245.49471818514,608.897810671583
\end{filecontents}
\begin{filecontents}[overwrite]{inference-b.csv}
Month,Log Miner Revenue (USD),Log Volume,Wikipedia Views
2021-08,17.270484808198667,19.774613515879285,38.31577742597485
2021-09,17.388097375319365,19.87098845853663,39.68085757338985
2021-10,17.451292486849137,19.96082316766675,41.75525055805558
2021-11,17.540146931362617,20.001307426712557,42.6627956426563
2021-12,17.610049407733307,20.01165770873702,42.8102542687033
2022-01,17.630505959821452,19.908347117345965,39.7407186694381
2022-02,17.5942249483076,19.82070504870524,35.271200248401776
2022-03,17.56982417837902,19.73724624966177,34.32764447021667
2022-04,17.537735370968026,19.650866102069266,33.04433621599106
2022-05,17.51977387615436,19.521021433627176,28.134955630751115
2022-06,17.510039876457757,19.451589403155747,27.676833930310465
2022-07,17.46949180820025,19.43530289938842,26.964783368359424
2022-08,17.375399358978733,19.363333006676847,25.56298439766862
\end{filecontents}
\begin{filecontents}[overwrite]{inference-btilde.csv}
Month,Log Miner Revenue (USD),Log Volume,Wikipedia Views
2021-08,22.2149471761281,25.546987022705,42.4122337422778
2021-09,30.148735098224,35.033260675977,61.5074726701979
2021-10,24.0354804237497,26.5132365758508,49.4090926256645
2021-11,30.6652256064756,37.2419019247947,66.556236961856
2021-12,23.2670456624974,25.6601123830914,53.9809848977624
2022-01,18.4290702464898,20.695449441904,41.1988382228925
2022-02,17.2294856417716,19.33920742848,37.2950892187723
2022-03,15.93993193823,18.5117740246948,28.4174841406474
2022-04,14.369836414024,16.2978040800238,24.9458402370198
2022-05,14.2706331639839,16.1437713801844,23.9690256723316
2022-06,17.7583022224009,19.7422528393181,23.9584351832945
2022-07,22.177494054437,24.4209278134714,28.4752049623908
2022-08,19.9854326217313,22.0249673810765,24.6053659811137
\end{filecontents}
\begin{filecontents}[overwrite]{inference-sigmaI.csv}
Month,Log Miner Revenue (USD),Log Volume,Wikipedia Views
2021-08,0.5786044803913492,2.3232067086043924,32.68459801442638
2021-09,0.5776846719932425,2.282647099159145,33.63559341543539
2021-10,0.5879246230961351,2.3157347971809275,33.22008474788809
2021-11,0.5869103181940486,2.349327412201423,33.86927464378813
2021-12,0.5805287095902419,2.3824434551634037,34.277206553048835
2022-01,0.5874282884916908,2.397461207151467,33.26036785909222
2022-02,0.5865274500686454,2.392201097276873,33.94182009821332
2022-03,0.5999387911756269,2.513091622326038,36.0622987874154
2022-04,0.5779764754574035,2.558771644724214,36.18990814404828
2022-05,0.588252996207747,2.6668165035444584,37.08793521373772
2022-06,0.5881297295895944,2.8474402572429613,38.09295892978752
2022-07,0.5637255725421286,2.9369120975072014,38.857237593093686
2022-08,0.574038737930194,2.938917890634907,38.82996215562614
\end{filecontents}
\begin{filecontents}[overwrite]{inference-mu.csv}
Month,Log Miner Revenue (USD),Log Volume,Wikipedia Views
2021-08,1.3602772637531113,1.3401924812852335,1.970923405983255
2021-09,1.6831272650469058,1.6679313197680397,2.4514197456702527
2021-10,1.4534056917414997,1.4258767320697483,2.237242924452798
2021-11,1.293004915856583,1.2560549472545386,1.7619696319941052
2021-12,1.0063934427256809,0.9715438378430656,1.4901442257054427
2022-01,0.4741586569796385,0.4580956080594153,0.5299396176865349
2022-02,-0.4035400052767261,-0.3879482558685859,-0.1242818028536836
2022-03,-0.2576897943806855,-0.2544954410931312,0.2740924806137071
2022-04,-0.1461507935395569,-0.1403550856937495,-0.1289661039506502
2022-05,0.0465052287778434,0.04703629778674,-0.2318289294357406
2022-06,-0.0535394474642535,-0.0668238865252712,-0.378380050584256
2022-07,-0.5345466644739374,-0.5503572255425295,-0.5297149397420055
2022-08,-0.7785987871523574,-0.7982168350187515,-0.684651230241187
\end{filecontents}
\begin{filecontents}[overwrite]{inference-sigmaP.csv}
Month,Log Miner Revenue (USD),Log Volume,Wikipedia Views
2021-08,0.1841714654468666,0.170787388587403,0.1343015843898279
2021-09,0.1843303570292712,0.1709979873791206,0.1296407898586031
2021-10,0.1914323410448675,0.177955060762056,0.132459168387067
2021-11,0.1913983525309226,0.1774153368453078,0.1283408280451282
2021-12,0.1894153359427353,0.1759415523515357,0.12425338669957
2022-01,0.1805644353540367,0.1685051962977898,0.1222642688565202
2022-02,0.1718280607006269,0.1608352236857214,0.1196334298382679
2022-03,0.1724056033946849,0.1617738388391579,0.128995788642876
2022-04,0.1736117268779591,0.1634442561644881,0.1324175363883118
2022-05,0.1629076604265014,0.1536991886874663,0.1318319036244467
2022-06,0.1578596104422783,0.1495186190627951,0.1359608792970036
2022-07,0.1643853168318085,0.1557909579910109,0.1440541842773554
2022-08,0.1638252838989204,0.1552170893892053,0.1466495242501431
\end{filecontents}
\begin{filecontents}[overwrite]{inference-tau.csv}
Month,Log Miner Revenue (USD),Log Volume,Wikipedia Views
2021-08,13,5,8
2021-09,13,5,8
2021-10,14,5,8
2021-11,14,0,8
2021-12,14,0,9
2022-01,16,0,8
2022-02,16,0,14
2022-03,15,0,13
2022-04,15,6,8
2022-05,15,6,8
2022-06,15,6,7
2022-07,15,5,0
2022-08,15,5,0
\end{filecontents}
\begin{figure}
\setlength{\tempvSpace}{1.2\vSpace}
\begin{center}
\begin{tikzpicture}
   \pgfplotsset{
      every axis/.prefix style = {
      every axis plot/.append style={dotted, mark options={solid, thick}},
      eightonpage,
      x tick label style={font=\tiny, rotate=90},
      xtick = {0, 1, 2, 3, 4, 5, 6, 7, 8, 9, 10, 11, 12},
      xticklabels={21-08, 21-09, 21-10, 21-11, 21-12, 22-01, 22-02, 22-03, 22-04, 22-05, 22-06, 22-07, 22-08},
      xmin = 0,
      xmax = 12}
    }

   \pgfplotstableread[col sep=comma,]{inference-a.csv} {\data}
	\begin{axis} [name=plotaI]
	    \addplot+ table [x expr={\coordindex}, y={Wikipedia Views}] {\data};
        \addplot+ table [x expr={\coordindex}, y={Log Volume}] {\data};
        \addplot+ table [x expr={\coordindex}, y={Log Miner Revenue (USD)}] {\data};
    \end{axis}
	\node[align=center,anchor=north] at ($(plotaI.north)+(0,0.5\vSpace)$) {(a) $\hat{a}$};
	
   \pgfplotstableread[col sep=comma,]{inference-b.csv} {\data}
	\begin{axis} [name=plotbI, anchor = west, at = {($(plotaI.east)+(\hSpace,0)$)}]
	    \addplot+ table [x expr={\coordindex}, y={Wikipedia Views}] {\data};
        \addplot+ table [x expr={\coordindex}, y={Log Volume}] {\data};
        \addplot+ table [x expr={\coordindex}, y={Log Miner Revenue (USD)}] {\data};
    \end{axis}
	\node[align=center,anchor=north] at ($(plotbI.north)+(0,0.5\vSpace)$) {(b) $\hat{b}$};

   \pgfplotstableread[col sep=comma,]{inference-atilde.csv} {\data}
	\begin{axis} [name=plotaItilde, anchor=north, at = {($(plotaI.south)+(0,-\tempvSpace)$),}]
	    \addplot+ table [x expr={\coordindex}, y={Wikipedia Views}] {\data};
        \addplot+ table [x expr={\coordindex}, y={Log Volume}] {\data};
        \addplot+ table [x expr={\coordindex}, y={Log Miner Revenue (USD)}] {\data};
    \end{axis}
	\node[align=center,anchor=north] at ($(plotaItilde.north)+(0,0.5\vSpace)$) {(c) $\hat{\tilde{a}}$};
	
   \pgfplotstableread[col sep=comma,]{inference-btilde.csv} {\data}
	\begin{axis} [name=plotbItilde, anchor = west, at = {($(plotaItilde.east)+(\hSpace,0)$)}]
	    \addplot+ table [x expr={\coordindex}, y={Wikipedia Views}] {\data};
        \addplot+ table [x expr={\coordindex}, y={Log Volume}] {\data};
        \addplot+ table [x expr={\coordindex}, y={Log Miner Revenue (USD)}] {\data};
    \end{axis}
	\node[align=center,anchor=north] at ($(plotbItilde.north)+(0,0.5\vSpace)$) {(d) $\hat{\tilde{b}}$};

   \pgfplotstableread[col sep=comma,]{inference-sigmaI.csv} {\data}
	\begin{axis} [name=plotsigmaI, anchor=north, at = {($(plotaItilde.south)+(0,-\tempvSpace)$),}]
	    \addplot+ table [x expr={\coordindex}, y={Wikipedia Views}] {\data};
        \addplot+ table [x expr={\coordindex}, y={Log Volume}] {\data};
        \addplot+ table [x expr={\coordindex}, y={Log Miner Revenue (USD)}] {\data};
    \end{axis}
	\node[align=center,anchor=north] at ($(plotsigmaI.north)+(0,0.5\vSpace)$) {(e) $\hat{\sigma}_I$};

   \pgfplotstableread[col sep=comma,]{inference-mu.csv} {\data}
	\begin{axis} [name=plotmu, anchor=west, at = {($(plotsigmaI.east)+(\hSpace,0)$),}]
	    \addplot+ table [x expr={\coordindex}, y={Wikipedia Views}] {\data};
        \addplot+ table [x expr={\coordindex}, y={Log Volume}] {\data};
        \addplot+ table [x expr={\coordindex}, y={Log Miner Revenue (USD)}] {\data};
    \end{axis}
	\node[align=center,anchor=north] at ($(plotmu.north)+(0,0.5\vSpace)$) {(f) $\hat{\mu}$};
	
   \pgfplotstableread[col sep=comma,]{inference-sigmaP.csv} {\data}
	\begin{axis} [name=plotsigmaP, anchor=north, at = {($(plotsigmaI.south)+(0,-\tempvSpace)$),}]
	    \addplot+ table [x expr={\coordindex}, y={Wikipedia Views}] {\data};
        \addplot+ table [x expr={\coordindex}, y={Log Volume}] {\data};
        \addplot+ table [x expr={\coordindex}, y={Log Miner Revenue (USD)}] {\data};
    \end{axis}
	\node[align=center,anchor=north] at ($(plotsigmaP.north)+(0,0.5\vSpace)$) {(g) $\hat{\sigma}_P$};
 	
   \pgfplotstableread[col sep=comma,]{inference-tau.csv} {\data}
	\begin{axis} [name=plottau, anchor=west, at = {($(plotsigmaP.east)+(\hSpace,0)$),}, legend style={at={(-0.5\hSpace,-\vSpace)}, anchor=north, legend columns=3, legend cell align=left}, scaled y ticks=false, yticklabel style = {/pgf/number format/fixed},]

	    \addplot+ table [x expr={\coordindex}, y={Wikipedia Views}] {\data};
	    \addlegendentry{\WV}

        \addplot+ table [x expr={\coordindex}, y={Log Volume}] {\data};
	    \addlegendentry{\LV}
        
        \addplot+ table [x expr={\coordindex}, y={Log Miner Revenue (USD)}] {\data};
	    \addlegendentry{\MR}

    \end{axis}
	\node[align=center,anchor=north] at ($(plottau.north)+(0,0.5\vSpace)$) {(h) $\hat{\tau} / \Delta $};

\end{tikzpicture}
\end{center}
\caption{Estimated parameters using \WV, \LVlower, \MRlower}
\label{fig:InferenceResults}
\end{figure}

Looking more closely at the individual proxies, many of the estimates for \LVlower\ and \MRlower\ follow broadly the same pattern: broadly constant long term mean and volatility of the interest process under the real-world measure, and the drift of the price process declining steadily over the period, with its volatility staying broadly constant. The estimates for \WVlower\ share some of these features, but not all; most notably its long term mean declines considerably over time. It is also interesting to note that the estimated delay is different for the proxies: for \MRlower\ it is around two weeks, whereas \LVlower appears to have a much more immediate effect. \WV\ affects Bitcoin prices with a delay of just over a week for the most part, with the delay extending up to two weeks in the early days of the Russia-Ukraine war, and then reducing to zero in the period leading up to August 2022.

\subsection{Model comparison}

The effectiveness of our model can be assessed by comparing it with a number of other models relevant to this setting. In each case, we will do this by dividing the relative root mean square error of our model with the relative root mean square error of the comparator model for each of the dates on which option prices are available. In each case, the comparator model is estimated using the same data (historical and, where relevant, option price data) as our model. When the presented value is lower than $1$, our model has a smaller root mean square error (i.e. performs better) than the comparator model. Otherwise, our model has a greater root mean square error.

We start by comparing with the classical industry benchmark, the Black-Scholes model \citep{black1973pricing}. The parameters of the Black-Scholes model are estimated using maximum likelihood. The results are illustrated in Figure \ref{fig:Black-Scholes}. For each date, the uncalibrated version of the model model gives better results than the Black-Scholes model for at least one proxy. However, there is no proxy that consistently gives better results than the Black-Scholes model. By contrast, the calibrated version of the model consistently gives better results than the Black-Scholes model for all proxies. The performance of all proxies follows a broadly similar pattern across the time period, with the exception of \WVlower, which often performs best before calibration, only for other proxies to catch up during the calibration process. One possible explanation for the poor performance of \WVlower\ on 1 March 2022 is that the invasion of Ukraine on 24 February caused a market shock and, in particular, that it had an immediate effect on option prices that could not be explained by the delayed pre-invasion values of \WVlower; the delay parameter for this model was 13 days (see Figure~\ref{fig:InferenceResults}).
\begin{filecontents}[overwrite]{Black-Scholes-uncalibrated.csv}
Day,Unique Adresses,Block Sum Weigh,Block Count,Log Fees,Log Miner Revenue (USD),Log 0.5*(Deposit + Withdrawls),Log Volume,Wikipedia Views,Date
2021-08-01,1.1109127507322796,1.0907389024372354,0.9192475404163812,1.0542322803199122,0.9265260977726096,0.968570939980535,0.9675469802668594,0.6948367237632042,2021-08
2021-09-01,1.0562181553204573,1.0577160957052738,0.9626560420047556,1.0345173578516318,0.958943874215506,0.993334342873123,0.9954060959170972,0.8267099604842438,2021-09
2021-10-01,1.0157501977132364,1.0394477467759144,0.9190850383623114,0.9945122612155864,0.8921296167133913,0.935243427417618,0.9298705971696188,0.7679401114252038,2021-10
2021-11-01,1.0223730174737926,1.051519705271058,0.975314098033592,1.0267518745498765,0.9550031637292284,0.9998584701644704,0.9940363821797144,0.8811043895994078,2021-11
2021-12-01,1.0459996415887236,1.1009448484707214,0.989826173820036,1.0358556742975953,0.9369687862424876,0.9968641322897362,0.9874549053134116,0.9948562293161076,2021-12
2022-01-01,1.002678071134625,1.0335141164935129,0.9768413799892086,1.053961405213726,0.9686908490083102,0.976597181767486,0.9727974201378292,1.1733317421272185,2022-01
2022-02-01,0.7560470612313941,0.756044149687137,0.8653409746084554,0.7173460634776412,0.8079426517332672,0.7908325073590663,0.7938999505397998,0.721603518659454,2022-02
2022-03-01,1.1315508559801808,1.0769044584923937,1.203229342159495,1.1044857540535105,1.238182216287164,0.9699052647255348,0.9802320518453892,2.3015177208098887,2022-03
2022-04-01,1.0056434950939703,0.9362068722024918,1.1096807751229352,0.9034277210475464,1.0880070107506012,1.042181521116624,1.0544353414511445,1.4042608867370712,2022-04
2022-05-01,0.6920253770260821,0.610273659958924,0.7734021593754199,0.6012285945835666,0.7257054290490318,0.6903776072961189,0.7005759381971124,0.6682497623190083,2022-05
2022-06-01,0.933973615240152,0.9330050656594506,0.9406956244933968,0.9246570014301686,0.9231023816136452,0.8629563023812937,0.916332204455095,1.582480790457517,2022-06
2022-07-01,0.9745583089169942,0.9572365849578977,1.0058252460267467,0.9666471830097224,0.9805483743452206,0.9846373156563526,0.9499026393499416,0.5393424689139742,2022-07
2022-08-01,1.0695822347595805,1.0784457076648906,1.1179627025949068,1.0289998540581515,1.0469452504265653,1.0643075482621904,0.9979677215443548,0.525174950179823,2022-08
\end{filecontents}
\begin{filecontents}[overwrite]{Black-Scholes-calibrated.csv}
Day,Unique Adresses,Block Sum Weigh,Block Count,Log Fees,Log Miner Revenue (USD),Log 0.5*(Deposit + Withdrawls),Log Volume,Wikipedia Views,Date
2021-08-01,0.664534611322474,0.6694034332393839,0.6167337748338945,0.6445431146750134,0.6278071329904009,0.6056500558967295,0.6123282851212988,0.6364588410345926,2021-08
2021-09-01,0.3680924356060078,0.3702767432888553,0.3560151108209839,0.3687063038012323,0.3462297335612904,0.3952089874607793,0.3436041145663179,0.3438450509305024,2021-09
2021-10-01,0.4447003607113357,0.4526695827043131,0.4366746024940179,0.4387973265214594,0.4378989171577944,0.495241792767693,0.4625772471550939,0.463339960878139,2021-10
2021-11-01,0.3537614781512924,0.3651957954268234,0.3446037648740792,0.3743131958714724,0.3508081510209148,0.3382151523630241,0.3381805341384403,0.3995564711122227,2021-11
2021-12-01,0.4560306026382332,0.4602064476063741,0.453101456381798,0.4503646163769367,0.4486838531227154,0.5021481766518683,0.4966982182285257,0.5208766353702496,2021-12
2022-01-01,0.940591453275311,0.945941347937616,0.9548125970262752,0.9408881739088002,0.9505765295092382,0.9523924610617378,0.9515308959310286,0.9935201869968155,2022-01
2022-02-01,0.755940991862207,0.7536133226128735,0.8459714908172732,0.6519752121900717,0.7992939646860315,0.7757544918733762,0.7767115781110732,0.6385198893417744,2022-02
2022-03-01,0.999490521021349,1.012091032217986,0.9066282149378232,1.019440376760776,0.945367813615758,0.7367296027835752,0.7310149163623637,1.5613004234696877,2022-03
2022-04-01,0.6825075022133731,0.6801307704855976,0.7263721731960807,0.6763086277619885,0.719018024542892,0.6022101768012383,0.6424851147402071,0.6443164678029139,2022-04
2022-05-01,0.4389429502808279,0.4287119298462285,0.4640230119941186,0.4121726447143477,0.4354661509226674,0.3394847846156218,0.3844791616845777,0.2737161427706441,2022-05
2022-06-01,0.9180076808683552,0.9181899091937742,0.9266124706611012,0.9150322458589428,0.9165130184460668,0.8443120257313841,0.9073157423037668,0.907937026702568,2022-06
2022-07-01,0.5462746433743622,0.5310301410968467,0.568895706588938,0.5491918655580843,0.5716374420452758,0.4537047743282022,0.4907449053192626,0.4560603506639233,2022-07
2022-08-01,0.5535253799262648,0.551863680774812,0.5749162366936609,0.5830439148441782,0.5754218655477931,0.4381532046532101,0.4907962835485885,0.4385586802793286,2022-08
\end{filecontents}
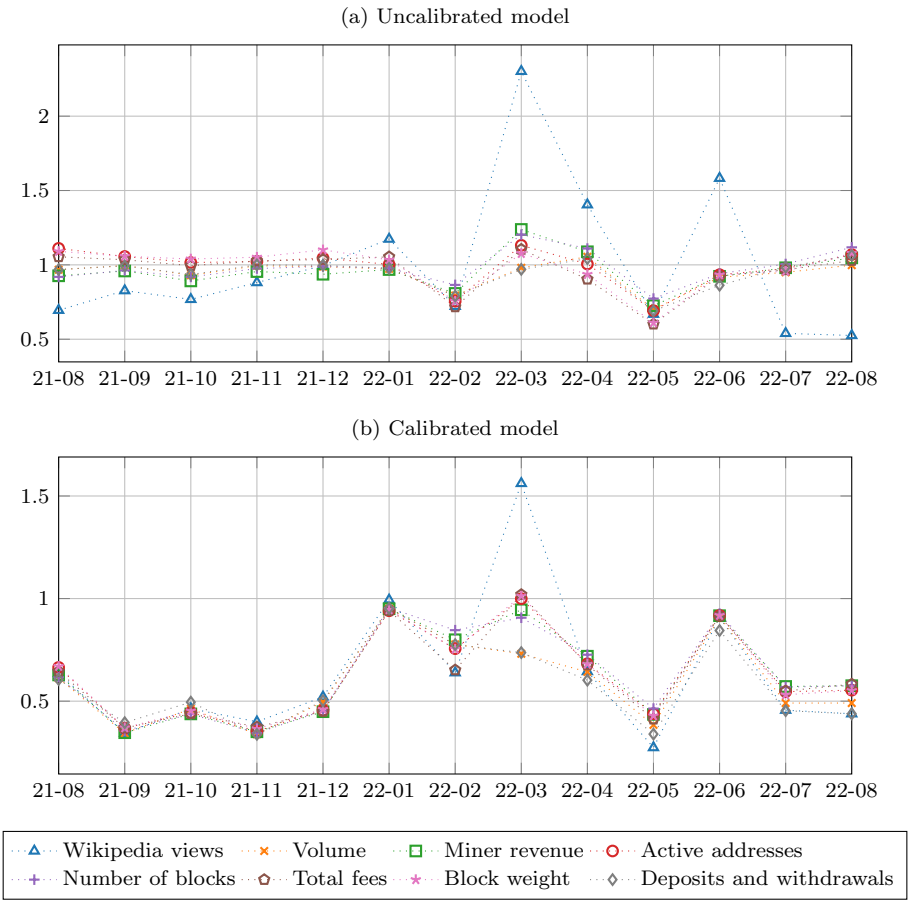
\begin{figure}
\centering
\begin{tikzpicture}
   \pgfplotstableread[col sep=comma,]{Black-Scholes-uncalibrated.csv} {\data}
	\begin{axis} [RMSEstyle,name=uncalibrated, legend style={legend columns=4, legend cell align=left}, legend to name=leg:8-proxies]
	    \addplot+ table [x expr={\coordindex}, y={Wikipedia Views}] {\data}; \addlegendentry{\WV}
        \addplot+ table [x expr={\coordindex}, y={Log Volume}] {\data}; \addlegendentry{\LV}
        \addplot+ table [x expr={\coordindex}, y={Log Miner Revenue (USD)}] {\data}; \addlegendentry{\MR}
        \addplot+ table [x expr={\coordindex}, y={Unique Adresses}] {\data}; \addlegendentry{\AAdd}
	    \addplot+ table [x expr={\coordindex}, y={Block Count}] {\data}; \addlegendentry{\NB}
        \addplot+ table [x expr={\coordindex}, y={Log Fees}] {\data}; \addlegendentry{\TF}
        \addplot+ table [x expr={\coordindex}, y={Block Sum Weigh}] {\data}; \addlegendentry{\BW}
        \addplot+ table [x expr={\coordindex}, y={Log 0.5*(Deposit + Withdrawls)}] {\data}; \addlegendentry{\DW}
     \end{axis}
	\node[align=center,anchor=north] at ($(uncalibrated.north)+(0,0.5\vSpace)$) {(a) Uncalibrated model};
	
   \pgfplotstableread[col sep=comma,]{Black-Scholes-calibrated.csv} {\data}
	\begin{axis} [RMSEstyle , name=calibrated, anchor=north, at = {($(uncalibrated.south)+(0,-\vSpace)$)}]
	    \addplot+ table [x expr={\coordindex}, y={Wikipedia Views}] {\data};
        \addplot+ table [x expr={\coordindex}, y={Log Volume}] {\data};
        \addplot+ table [x expr={\coordindex}, y={Log Miner Revenue (USD)}] {\data};
        \addplot+ table [x expr={\coordindex}, y={Unique Adresses}] {\data};
	    \addplot+ table [x expr={\coordindex}, y={Block Count}] {\data};
        \addplot+ table [x expr={\coordindex}, y={Log Fees}] {\data};
        \addplot+ table [x expr={\coordindex}, y={Block Sum Weigh}] {\data};
        \addplot+ table [x expr={\coordindex}, y={Log 0.5*(Deposit + Withdrawls)}] {\data};
     \end{axis}
	\node[align=center,anchor=north] at ($(calibrated.north)+(0,0.5\vSpace)$) {(b) Calibrated model};
	
	\node[align=center,anchor=north] at ($(calibrated.south)+(0,-0.5\vSpace)$) {\pgfplotslegendfromname{leg:8-proxies}};
\end{tikzpicture}

\caption{Root mean square error of our model (calibrated and uncalibrated versions) divided by root mean square error of Black-Scholes model}
\label{fig:Black-Scholes}
\end{figure}

A second natural comparator can be found in the Heston model \citep{heston1993closed}, which has very similar dynamics, the only differences being that the Heston model has possible correlation, no delay, and a volatility process that is unobserved. It is possible to calibrate the Heston model in its entirety under the pricing measure, in which case it substantially outperforms our model (among many others) \citep[cf.][]{madan2019advanced}. In order to get a more like-for-like comparison, we estimate the parameters of the Heston model from historical Bitcoin prices as well as the historical 30 days variance of log-returns, using the R package Yuima \citep{YUIMA}. Figure \ref{fig:Heston}(a) shows that the uncalibrated version of our model generally holds its own against the Heston model estimated in this way; there is no clear distinction between the two models. The calibrated version performs much better against the Heston model (except for April and May 2022; there is no immediate explanation why the Heston model should do so much better in this period).
\begin{filecontents}[overwrite]{Heston-uncalibrated.csv}
Day,Unique Adresses,Block Sum Weigh,Block Count,Log Fees,Log Miner Revenue (USD),Log 0.5*(Deposit + Withdrawls),Log Volume,Wikipedia Views,Date
2021-08-01,0.706699193942339,0.711876941595365,0.655865404338781,0.685439240998206,0.667641364762969,0.644078425744554,0.651180387221886,0.676842021879816,2021-08
2021-09-01,0.777924739996262,0.782541045092158,0.752400581294729,0.779222086015817,0.731720213201916,0.835232727095992,0.726171242953099,0.726680436678184,2021-09
2021-10-01,1.31512890231075,1.33869657865608,1.29139402928823,1.29767163993746,1.29501474054037,1.46459695750902,1.36799688291663,1.37025248455312,2021-10
2021-11-01,0.814986051768538,0.841328120271167,0.793888761509503,0.862332538925259,0.808182257217061,0.779170850125048,0.779091097602753,0.920487308431748,2021-11
2021-12-01,0.697564475104069,0.703952031304171,0.693083924108038,0.688897533215862,0.686326563788768,0.768107945484751,0.759771449282691,0.796755819991017,2021-12
2022-01-01,0.907636428661318,0.912798881783871,0.921359313427767,0.907922753244808,0.917271662854992,0.919023970536781,0.918192591625598,0.958710725244888,2022-01
2022-02-01,0.591836267188977,0.590013903942898,0.662322343494312,0.510440074074791,0.625777886817768,0.607348520146282,0.608097835704499,0.499905722671586,2022-02
2022-03-01,0.906030542417677,0.917452809817035,0.821851569449491,0.92411493465212,0.856968720502963,0.667839771950079,0.662659452248131,1.41530693868681,2022-03
2022-04-01,1.64380764983351,1.6380833321036,1.7494549599143,1.6288777667033,1.73174261862148,1.45041291447077,1.54741441389158,1.55182519643134,2022-04
2022-05-01,2.28770971061857,2.23438705266771,2.41842350995787,2.14818659500301,2.26958911510964,1.76934756071495,2.00385200662992,1.42657053124748,2022-05
2022-06-01,0.565851917766315,0.565964241714744,0.57115583505118,0.564017885610208,0.56493062090171,0.520426559504759,0.559261500204999,0.559644454482909,2022-06
2022-07-01,1.75442332656191,1.70546386867405,1.82707345128697,1.76379231835013,1.83587884743378,1.45712463338412,1.576083238946,1.46468983546885,2022-07
2022-08-01,0.930403464325909,0.927610366297799,0.966358685103554,0.980020244595518,0.967208580798696,0.736477990528909,0.824964092075971,0.737159541801737,2022-08
\end{filecontents}
\begin{filecontents}[overwrite]{Heston-calibrated.csv}
Day,Unique Adresses,Block Sum Weigh,Block Count,Log Fees,Log Miner Revenue (USD),Log 0.5*(Deposit + Withdrawls),Log Volume,Wikipedia Views,Date
2021-08-01,0.4463730500601687,0.4496434754860559,0.41426485762701243,0.43294460678309493,0.4217029181441177,0.4068198376932155,0.41130565603462604,0.42751433750755896,2021-08
2021-09-01,0.2804654457756294,0.2821297636716805,0.2712632129885506,0.2809331783350817,0.26380731351942877,0.3011267119907993,0.2618067415689369,0.2619903213392535,2021-09
2021-10-01,0.6183637122937398,0.6294450563425482,0.6072037536256293,0.6101554389080159,0.6089061848071535,0.6886424669613106,0.6432218388937128,0.6442824058076835,2021-10
2021-11-01,0.28580880721147134,0.29504675080802945,0.2784101635767192,0.30241282514595097,0.2834227732406447,0.27324871487669916,0.2732207463326735,0.32280721750424896,2021-11
2021-12-01,0.3205701706058219,0.32350561249528376,0.3185111050305692,0.31658722259328326,0.31540571731702477,0.3529888690088449,0.3491577794033278,0.3661541810993116,2021-12
2022-01-01,0.8622272944231881,0.8671314695398812,0.8752636219990958,0.8624992941613574,0.8713804769614848,0.8730451165283372,0.8722553315807596,0.9107463392380493,2022-01
2022-02-01,0.5770863895916559,0.5753094436954552,0.6458157959268369,0.49771877085163746,0.610182108482804,0.5922120426647299,0.5929426836107643,0.48744695894183415,2022-02
2022-03-01,0.7397121632399446,0.749037665782649,0.6709857712714625,0.7544768364759297,0.6996565307618331,0.5452456393894137,0.5410162615554429,1.1555016174961281,2022-03
2022-04-01,1.0856920976489297,1.081911335024773,1.1554693916676082,1.0758312990725656,1.1437708520152148,0.9579599168603243,1.0220268783396296,1.024940085215355,2022-04
2022-05-01,1.374856146890658,1.3428105671075161,1.453411861224042,1.2910062544646121,1.363966123560712,1.0633334983647023,1.2042647876112724,0.8573331025151555,2022-05
2022-06-01,0.5580037176615111,0.558114483709727,0.5632340711695581,0.5561951229933901,0.5570951989141534,0.5132083957933578,0.5515047072584495,0.5518823500728658,2022-06
2022-07-01,0.9493668376888372,0.922873525039462,0.9886798233996944,0.9544366574818396,0.9934446660507981,0.7884903172287899,0.8528620988098452,0.7925840566763409,2022-07
2022-08-01,0.4692973677046614,0.46788852347461257,0.48743325295570566,0.4943241708791086,0.4878619420436624,0.37148097097644384,0.41611353752064734,0.3718247468011885,2022-08
\end{filecontents}
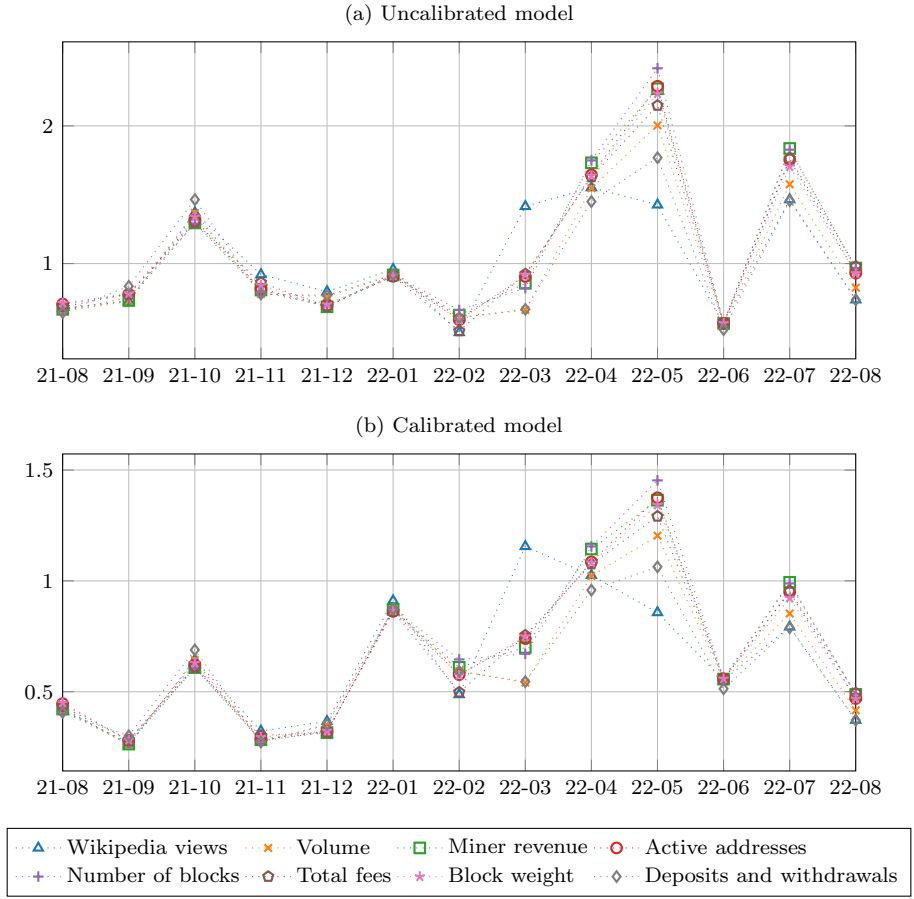
\begin{figure}
\centering
\begin{tikzpicture}
   \pgfplotstableread[col sep=comma,]{Heston-uncalibrated.csv} {\data}
	\begin{axis} [RMSEstyle,name=uncalibrated]
	    \addplot+ table [x expr={\coordindex}, y={Wikipedia Views}] {\data};
        \addplot+ table [x expr={\coordindex}, y={Log Volume}] {\data};
        \addplot+ table [x expr={\coordindex}, y={Log Miner Revenue (USD)}] {\data};
        \addplot+ table [x expr={\coordindex}, y={Unique Adresses}] {\data};
	    \addplot+ table [x expr={\coordindex}, y={Block Count}] {\data};
        \addplot+ table [x expr={\coordindex}, y={Log Fees}] {\data};
        \addplot+ table [x expr={\coordindex}, y={Block Sum Weigh}] {\data};
        \addplot+ table [x expr={\coordindex}, y={Log 0.5*(Deposit + Withdrawls)}] {\data};
     \end{axis}
	\node[align=center,anchor=north] at ($(uncalibrated.north)+(0,0.5\vSpace)$) {(a) Uncalibrated model};
	
   \pgfplotstableread[col sep=comma,]{Heston-calibrated.csv} {\data}
	\begin{axis} [RMSEstyle , name=calibrated, anchor=north, at = {($(uncalibrated.south)+(0,-\vSpace)$)}]
	    \addplot+ table [x expr={\coordindex}, y={Wikipedia Views}] {\data};
        \addplot+ table [x expr={\coordindex}, y={Log Volume}] {\data};
        \addplot+ table [x expr={\coordindex}, y={Log Miner Revenue (USD)}] {\data};
        \addplot+ table [x expr={\coordindex}, y={Unique Adresses}] {\data};
	    \addplot+ table [x expr={\coordindex}, y={Block Count}] {\data};
        \addplot+ table [x expr={\coordindex}, y={Log Fees}] {\data};
        \addplot+ table [x expr={\coordindex}, y={Block Sum Weigh}] {\data};
        \addplot+ table [x expr={\coordindex}, y={Log 0.5*(Deposit + Withdrawls)}] {\data};
     \end{axis}
	\node[align=center,anchor=north] at ($(calibrated.north)+(0,0.5\vSpace)$) {(b) Calibrated model};
	
	\node[align=center,anchor=north] at ($(calibrated.south)+(0,-0.5\vSpace)$) {\pgfplotslegendfromname{leg:8-proxies}};
\end{tikzpicture}

\caption{Root mean square error of our model divided by root mean square error of Heston model calibrated to historical data}
\label{fig:Heston}
\end{figure}

\cite{Cretarola_FigaTalamanca_Patacca2020} recently proposed an attention-based model (accompanied by a specialised estimation procedure) that is similar to ours, the main differences being that it uses geometric Brownian motion to model market attention, and that market attention is assumed to also affect the drift of Bitcoin prices under the real-world probability. Similar to ours, this model has an uncalibrated and a calibrated version in that there is one parameter that can be used to select the pricing measure; we do this by minimising the root mean square error of option prices. The results are presented in Figure \ref{fig:Cretarola}, which compares the root mean square error of the two models in both their non-calibrated and calibrated versions. The \cite{Cretarola_FigaTalamanca_Patacca2020} model was applied to all proxies, including those where the Kolmogorov-Smirnov test suggests that the geometric Brownian motion is not appropriate (see Table \ref{table:proxies-info} and discussion).  As can be seen in Figure~\ref{fig:Cretarola}(a), when comparing uncalibrated versions of the models, the Cretarola-Fig\`a-Talamanca-Patacca model outperforms our model in one-year periods up to January 2022, but our model performs better (often significantly better for some proxies) in later periods. One explanation for this might be that during the early part of the time window, many proxies for market attention were rapidly increasing (see Figure \ref{fig:proxies}), and hence more closely resembled geometric Brownian motion than a mean-reverting process, whereas stability in the later part meant that a mean-reverting process fits better. Comparing the calibrated versions of both models (see Figure \ref{fig:Cretarola}(b)), our model generally performs better for the proxies studied in this paper, often significantly better in those cases where it did well before calibration. This may be due to the fact that our model is calibrated using two parameters rather than the single parameter of the Cretarola-Fig\`a-Talamanca-Patacca model.
\begin{filecontents}[overwrite]{Cretarola-uncalibrated.csv}
Day,Unique Adresses,Block Sum Weigh,Block Count,Log Fees,Log Miner Revenue (USD),Log 0.5*(Deposit + Withdrawls),Log Volume,Wikipedia Views,Date
2021-08-01,0.8575999694261621,0.9015356627103449,1.094169777116601,0.8144856204375959,1.0634107976825213,0.9642181179911028,1.0303561425492034,0.47089237824777047,2021-08
2021-09-01,2.0371197671437646,2.3283082389904783,1.0672931985487297,0.9412563043689399,1.0467263099405848,1.1308200963354391,1.0261185732553006,1.6747093028925362,2021-09
2021-10-01,1.2528525955678491,0.92947263721015,1.135181213948235,0.9073054990495352,1.0718519786440128,1.0572208793407507,1.0670819637140005,0.7219501677031179,2021-10
2021-11-01,2.218648333512059,2.144442737038488,1.0831394452964889,0.9586672705158034,1.0701516370003747,1.00868345916154,1.0470799603348129,1.6989690126444998,2021-11
2021-12-01,1.147595173544556,1.2850369813459535,1.1659667971602659,0.9274024860821802,1.0670327809808389,1.3444674579448985,1.0519300772002986,1.3411530807160814,2021-12
2022-01-01,0.976694664007908,0.3000381218801345,1.032618452524859,0.8178049989814326,1.0138304834235143,0.9755164420950585,0.905469147596422,0.7681835315161456,2022-01
2022-02-01,0.173829519446047,0.2023636674369134,0.907541831722739,0.7349726863363184,1.0041771106124604,1.0241775438869902,1.0498903148177416,0.6843138580958584,2022-02
2022-03-01,0.17367752832639832,0.1889100953959389,0.8339971475954981,1.0274026307631248,0.9902875371612664,0.49784759571115816,0.7272933957162921,0.33284178525201313,2022-03
2022-04-01,0.24171596360009662,0.32587955547507497,0.8319274720959113,1.0333815679964056,1.0000360217752848,0.8302784380878172,1.078007592767413,0.662023072243853,2022-04
2022-05-01,0.409024264738325,0.27320654980528175,0.8696262338788917,1.258627660745676,0.9852968799805544,1.5289621644620142,1.352020094555167,0.6183197547933972,2022-05
2022-06-01,0.14154576767541796,0.2014584659839178,1.0241559356125136,0.9942511083734474,0.9674748955056215,0.6131018037223134,1.0088536280007339,0.36898919055233514,2022-06
2022-07-01,0.5659978252913057,1.1595597117536827,1.0206520166006872,0.9429118641832988,0.8733150143613434,1.599753860942264,1.109742303824206,0.6203222633363059,2022-07
2022-08-01,0.30800318452257924,0.7515228304879962,1.126203921706354,0.8776476214391323,0.8164665731719478,1.5587188259052571,0.8688192971846765,0.23456887291286624,2022-08
\end{filecontents}
\begin{filecontents}[overwrite]{Cretarola-calibrated.csv}
Day,Unique Adresses,Block Sum Weigh,Block Count,Log Fees,Log Miner Revenue (USD),Log 0.5*(Deposit + Withdrawls),Log Volume,Wikipedia Views,Date
2021-08-01,0.712094780101076,1.1619934506137855,0.8950696175387075,0.7458521177659844,0.8718069291645433,0.8079336384935286,0.8290410962533564,1.127440870715403,2021-08
2021-09-01,0.7708986960896891,0.8726762552655174,0.7441615044868979,0.6332123459734504,0.7240237036473555,0.9004429002067937,0.6987341990132871,0.7358616509303499,2021-09
2021-10-01,0.7075190382693075,0.8286237811384854,0.6785641311271352,0.5647390146425285,0.6803449440248611,0.7654243822074694,0.7016435079753829,0.6613814106849104,2021-10
2021-11-01,0.7847339272655292,0.758868562905623,0.8254197787102888,0.7285603854025839,0.8291309307843338,0.7626323746696764,0.8099440398814354,0.8978444575752725,2021-11
2021-12-01,0.8284407430250889,0.7949644426074788,0.8882781651868181,0.7739866629658784,0.8822097503719729,1.0657354012672169,1.002469819870283,0.8161898427558668,2021-12
2022-01-01,0.9454073825829785,0.8099513026598595,0.9900507167289799,0.93417847621416,0.9981154753042383,1.0244620300328808,1.0107993623347784,0.8442608458252473,2022-01
2022-02-01,0.5662073999085188,0.6415474821208391,0.9083757736087856,1.0001783819421641,0.9975473837983917,1.050988815790618,1.0295132688086541,0.8433776449329433,2022-02
2022-03-01,0.5360869503824454,0.5163135228576153,0.917253323691848,0.9626489547668474,0.8738353139655086,0.8709760904422713,0.8574157134662379,0.8288075207406227,2022-03
2022-04-01,0.46755491028894997,0.637598018803615,0.7058647990052367,0.8009625235249547,0.7567272685294306,0.6849156441729677,0.7399571508588179,0.6423766725975449,2022-04
2022-05-01,0.6481511449068528,0.5343403642829962,0.725962285373338,0.8704328700050079,0.7316728066757906,0.8333440932058817,0.7880513454549579,0.5793913950230934,2022-05
2022-06-01,0.5387766391672871,0.7098457182898371,1.0078933975086806,1.0007343099009691,0.9929841437965021,0.9984855475291041,0.9989267799549312,0.668343842510244,2022-06
2022-07-01,0.6716659266035069,0.726803537285902,0.7939195301342904,0.7729334733302329,0.7765090512616115,0.7554737923176682,0.7485169984638209,0.524535717558927,2022-07
2022-08-01,0.4662174334475152,0.6081542660331327,0.84429897080403,0.7925521805142667,0.7839033853725167,0.6879509901182776,0.7003525574077639,0.3447546655974434,2022-08
\end{filecontents}
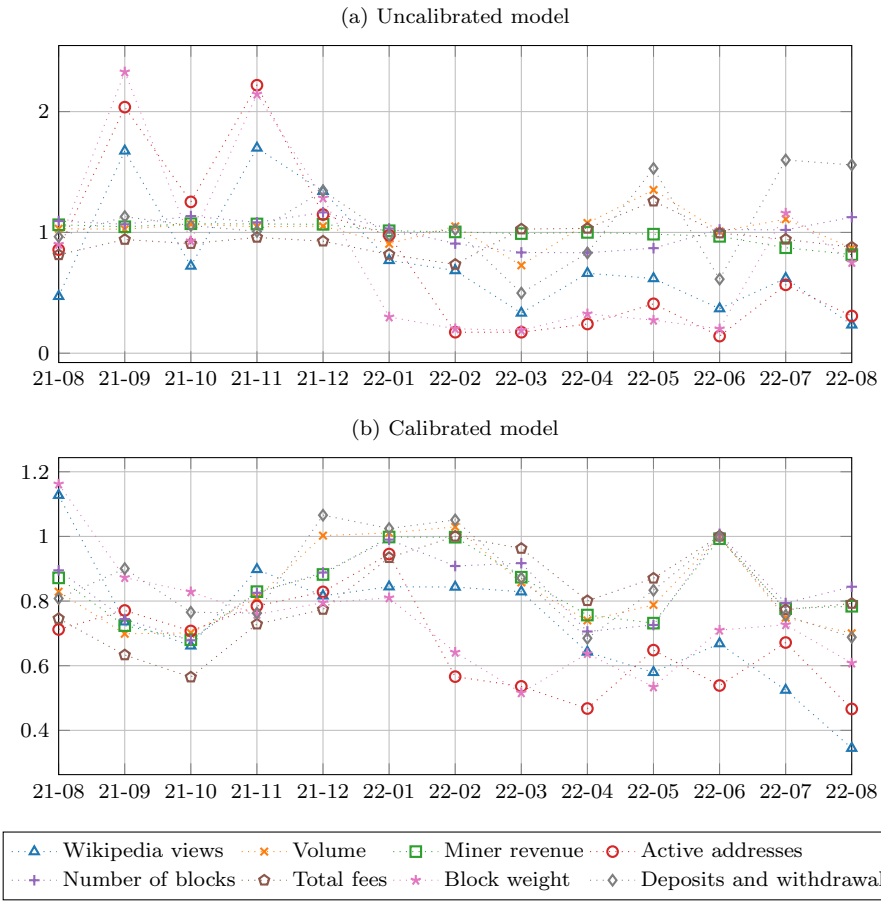
\begin{figure}
\centering
\begin{tikzpicture}
   \pgfplotstableread[col sep=comma,]{Cretarola-uncalibrated.csv} {\data}
	\begin{axis} [RMSEstyle,name=uncalibrated]
	    \addplot+ table [x expr={\coordindex}, y={Wikipedia Views}] {\data};
        \addplot+ table [x expr={\coordindex}, y={Log Volume}] {\data};
        \addplot+ table [x expr={\coordindex}, y={Log Miner Revenue (USD)}] {\data};
        \addplot+ table [x expr={\coordindex}, y={Unique Adresses}] {\data};
	    \addplot+ table [x expr={\coordindex}, y={Block Count}] {\data};
        \addplot+ table [x expr={\coordindex}, y={Log Fees}] {\data};
        \addplot+ table [x expr={\coordindex}, y={Block Sum Weigh}] {\data};
        \addplot+ table [x expr={\coordindex}, y={Log 0.5*(Deposit + Withdrawls)}] {\data};
     \end{axis}
	\node[align=center,anchor=north] at ($(uncalibrated.north)+(0,0.5\vSpace)$) {(a) Uncalibrated model};
	
   \pgfplotstableread[col sep=comma,]{Cretarola-calibrated.csv} {\data}
	\begin{axis} [RMSEstyle , name=calibrated, anchor=north, at = {($(uncalibrated.south)+(0,-\vSpace)$)}]
	    \addplot+ table [x expr={\coordindex}, y={Wikipedia Views}] {\data};
        \addplot+ table [x expr={\coordindex}, y={Log Volume}] {\data};
        \addplot+ table [x expr={\coordindex}, y={Log Miner Revenue (USD)}] {\data};
        \addplot+ table [x expr={\coordindex}, y={Unique Adresses}] {\data};
	    \addplot+ table [x expr={\coordindex}, y={Block Count}] {\data};
        \addplot+ table [x expr={\coordindex}, y={Log Fees}] {\data};
        \addplot+ table [x expr={\coordindex}, y={Block Sum Weigh}] {\data};
        \addplot+ table [x expr={\coordindex}, y={Log 0.5*(Deposit + Withdrawls)}] {\data};
     \end{axis}
	\node[align=center,anchor=north] at ($(calibrated.north)+(0,0.5\vSpace)$) {(b) Calibrated model};
	
	\node[align=center,anchor=north] at ($(calibrated.south)+(0,-0.5\vSpace)$) {\pgfplotslegendfromname{leg:8-proxies}};
\end{tikzpicture}

\caption{Root mean square error of our model divided by root mean square error of Cretarola-Fig\`a-Talamanca-Patacca model}
\label{fig:Cretarola}
\end{figure}

\section{Conclusions and outlook} \label{sec:conclusion}

In this paper, we proposed a model for pricing Bitcoin options in which the volatility is proportional to the market attention, subject to a small delay. We showed how the parameters of this model can be estimated and furthermore calibrated to option pricing data in order to obtain a risk-neutral measure. The conditional characteristic function of the log returns in this model admit a formula in closed form, which makes it possible to price options using well-established Fourier transform methods. The model can be used with a range of proxies for the market attention, and it compares well to (and generally outperforms) the Black-Scholes model, the Heston model when calibrated to historic volatility, and the attention-based Cretarola-Fig\`a-Talamanca-Patacca model.

There are a number of directions in which the model can be generalised, all of which are part of our ongoing research plans. For example, the Brownian motions driving the interest and price processes are assumed to be independent in this paper. This can be ameliorated by either introducing some direct feedback between the interest and price processes, or by allowing the Brownian motions to be correlated. Such extensions can, however, endanger the Markov property, and hence introduce new technical challenges, and it may well prove impossible to derive a closed-form conditional characteristic function, which is a strength of the current model. A further modification could be made by allowing the interest process to affect the drift of the price process---this would potentially alter the technical details of the change-of-measure.

The data (see Figure \ref{fig:InferenceResults}) suggests that different proxies for the market attention affect Bitcoin prices with different delays. This suggests that it may be worth constructing a multi-factor model, perhaps similar to the double Heston model \citep{christoffersen2009shape}, in order to accommodate multiple interest processes.

Finally, more advanced methodologies, such as jumps, time changed processes, etc, may prove fruitful in order to better capture the distribution of Bitcoin returns. A number of such possibilities have already been studied by \cite{GuineaJulia2022}, who showed that it is possible to obtain at least semi-closed formulae that can be used to price vanilla European options.

\bibliographystyle{tfcad}
\bibliography{interactcadsample}

\end{document}